\newcommand\vldbdoi{XX.XX/XXX.XX}
\newcommand\vldbpages{XXX-XXX}
\newcommand\vldbvolume{14}
\newcommand\vldbissue{1}
\newcommand\vldbyear{2021}
\newcommand\vldbauthors{\authors}
\newcommand\vldbtitle{\shorttitle} 
\newcommand\vldbpagestyle{plain} 
  \providecommand\BibTeX{{%
    \normalfont B\kern-0.5em{\scshape i\kern-0.25em b}\kern-0.8em\TeX}}}
\newcounter{example}
\newenvironment{example}%
   {
     \par\refstepcounter{example}\textbf{EXAMPLE \theexample}\\}%
   {\par}
\begin{document}

\title{Coo: Rethink Data Anomalies In Databases}



\author{Haixiang Li$^\ast$, Xiaoyan Li$^\dagger$, Yuxing Chen$^\ast$, Xiaoyong Du$^\ddagger$, Wei Lu$^\ddagger$, Chang Liu$^\ast$, Yuean Zhu$^\ast$, Anqun Pan$^\ast$}
\affiliation{%
  \institution{$\ast$ Tencent Inc. $\dagger$ Peking University $\ddagger$ Renmin University of China}
}
\email{{blueseali,axingguchen,williamcliu,anduinzhu,aaronpan}@tencent.com;li\_xiaoyan@pku.edu.cn;  {duyong,lu-wei}@ruc.edu.cn}

\renewcommand{\shortauthors}{Trovato and Tobin, et al.}

\begin{abstract}
Transaction processing technology has three important contents: data anomalies, isolation levels, and concurrent control algorithms. Concurrent control algorithms are used to eliminate some or all data anomalies at different isolation levels to ensure data consistency.
Isolation levels in the current ANSI/ISO SQL standard are defined by disallowing certain kinds of data anomalies. However, the definitions of data anomalies in the ANSI/ISO SQL standard are controversial.
On the one hand, the definitions lack a mathematical formalization and cause ambiguous interpretations.
On the other hand, the definitions are made in a case-by-case manner, and lead to a situation that even a senior DBA could not have infallible knowledge of data anomalies, due to a lack of a complete understanding of its nature.
While revised definitions in existing literature propose various mathematical formalizations to correct the former argument, how to address the latter argument remains an open problem.
In this paper, we present a general framework called {\bf Coo} with the capability to formally define all data anomalies and quantitatively describe.
Under this framework, we show that existing reported data anomalies are only a small portion.
While we theoretically prove that Coo is complete to formalize data anomalies mathematically, we employ a novel method to classify infinite data anomalies. In addition, we use this framework to define new isolation levels and quantitatively describe the proportion of data anomalies and the rollback rates of mainstream concurrency control algorithms..
These works show that the C and I of ACID can be quantitatively analyzed based on all data anomalies.


\end{abstract}



\keywords{Database, Data Anomalies, Isolation Levels, Concurrency Control}


\maketitle

\pagestyle{\vldbpagestyle}
\begingroup\small\noindent\raggedright\textbf{PVLDB Reference Format:}\\
\vldbauthors. \vldbtitle. PVLDB, \vldbvolume(\vldbissue): \vldbpages, \vldbyear.\\
\href{https://doi.org/\vldbdoi}{doi:\vldbdoi}
\endgroup
\begingroup
\renewcommand\thefootnote{}\footnote{\noindent
This work is licensed under the Creative Commons BY-NC-ND 4.0 International License. Visit \url{https://creativecommons.org/licenses/by-nc-nd/4.0/} to view a copy of this license. For any use beyond those covered by this license, obtain permission by emailing \href{mailto:info@vldb.org}{info@vldb.org}. Copyright is held by the owner/author(s). Publication rights licensed to the VLDB Endowment. \\
\raggedright Proceedings of the VLDB Endowment, Vol. \vldbvolume, No. \vldbissue\ %
ISSN 2150-8097. \\
\href{https://doi.org/\vldbdoi}{doi:\vldbdoi} \\
}\addtocounter{footnote}{-1}\endgroup


\section{Introduction}
The data anomalies are not recognized to the due height. 
In this section, we discuss why we should define all data anomalies.

The ANSI/ISO SQL \cite{16} specifies four data anomalies, including \textit{Dirty Write}, \textit{Dirty Read}, \textit{Non-repeatable Read}, and \textit{Phantom}.
Under the specifications of data anomalies, four isolation levels, i.e., \textit{Read Uncommitted}, \textit{Read Committed}, \textit{Repeatable Read} and \textit{Serializable}, are defined accordingly to eliminate the different anomalies.
For example, the Read Committed isolation level is defined by disallowing Dirty Write and Dirty Read.

\begin{table*}[]
\caption{A Thorough Survey on Data Anomalies in Existing Literature}
\footnotesize
\begin{tabular}{l|c|l}
\toprule
\hline
\textbf{No} &
  \textbf{Anomaly, reference, year} &
  \textbf{Formal definition}  \\ \hline

1   &  \begin{tabular}[c]{@{}c@{}}Dirty Write \cite{16} 1992  \end{tabular} &
  \begin{tabular}[c]{@{}l@{}}$W_1[x_1]...W_2[x_2]...$(($C_1$ or $ A_1$) and  ($C_2$ or $A_2$) in any order)\end{tabular} \\ \hline

2   &   \begin{tabular}[c]{@{}c@{}}Lost Update \cite{10.1145/223784.223785} 1995   \end{tabular} &
   $R_1[x_0]$...$W_2[x_1]$...$C_2$...$W_1[x_2]$ \\\hline

3   &   \begin{tabular}[c]{@{}c@{}}Dirty Read \cite{16} 1992  \end{tabular} &
   $W_1[x]...R_2[x]$...($A_1$ and $C_2$ in either order) \\\hline

4   &   \begin{tabular}[c]{@{}c@{}}Aborted Reads \cite{xie2015high} 2015, \cite{839388} 2000 \end{tabular} &   $W_1[x:i]...R_2[x:i]$...($A_1$ and $C_2$ in any order) \\\hline

5   &   \begin{tabular}[c]{@{}c@{}}Fuzzy/Non-repeatable Read \cite{16} 1992    \end{tabular} &   $R_1[x]...W_2[x]...C_2...R_1[x]...C_1$  \\\hline

6   &    \begin{tabular}[c]{@{}c@{}}Phantom \cite{16} 1992 \end{tabular} &   $R_1[P]$...$W_2${[}$y$ in $P${]}...$C_2$...$R_1[P]$...$C_1$ \\\hline

7   &   \begin{tabular}[c]{@{}c@{}}Intermediate Reads \cite{xie2015high} 2015, \cite{839388} 2000   \end{tabular} &   $W_1[x:i]...R_2[x:i]...W_1[x:j]$...$C_2$ \\\hline

8   &   \begin{tabular}[c]{@{}c@{}}Read Skew \cite{10.1145/223784.223785} 1995  \end{tabular} &   $R_1[x_0]...W_2[x_1]...W_2[y_1]...C_2...R_1[y_1]$ \\\hline

9   &   \begin{tabular}[c]{@{}c@{}}Unnamed Anomaly \cite{schenkel2000federated} 2000   \end{tabular} &   \begin{tabular}[c]{@{}l@{}}$R_1[y]...R_2[x]...W_2[x]...R_2[y]...W_2[y]...C_2...R_3[x]...W_3[x]...R_3[z]...W_3[z]...C_3...R_1[z]...C_1$\end{tabular} \\\hline

10   &   \begin{tabular}[c]{@{}c@{}}Fractured Reads \cite{cerone2017algebraic} 2017, \cite{10.1145/2909870} 2014  \end{tabular} &
   $ R_1[x_0]...W_2[x_1]...W_2[y_1]...C_2...R_1[y_1] $ \\\hline

11   &   \begin{tabular}[c]{@{}c@{}}Serial-concurrent-phenomenon \cite{binnig2014distributed}   2014   \end{tabular} &   $R_1[x_0]...W_2[x_1]...W_2[y_1]...C_2...R_1[y_1] $ \\\hline

12   &   \begin{tabular}[c]{@{}c@{}}Cross-phenomenon \cite{binnig2014distributed} 2014  \end{tabular} &   \begin{tabular}[c]{@{}l@{}}$R_1[x_0]...R_2[y_0]...W_3[x_1]...C_3... W_4[y_1]... C_4...R_2[x_1]...R_1[y_1]$\end{tabular} \\\hline

13   &   \begin{tabular}[c]{@{}c@{}}Long Fork Anomaly \cite{cerone2017algebraic} 2017   \end{tabular} &   \begin{tabular}[c]{@{}l@{}}$R_1[x_0]...R_2[y_0]...W_3[x_1]...C_3... W_4[y_1]... C_4...R_2[x_1]...R_1[y_1]$\end{tabular} \\\hline

14   &   \begin{tabular}[c]{@{}c@{}}Causality Violation Anomaly \cite{cerone2017algebraic} 2017    \end{tabular} &
   \begin{tabular}[c]{@{}l@{}}$R_1[x_0]...W_2[x_1]...C_2...R_3[x_1]... W_3[y_1]...C_3...R_1[y_1]$\end{tabular} \\\hline

15   &   \begin{tabular}[c]{@{}c@{}}Read-only Transaction Anomaly \cite{10.1145/1031570.1031573, Read_Only_Transactions} 2004   \end{tabular} &
   \begin{tabular}[c]{@{}l@{}}$R_1[x_0,0]...R_1[y_0,0]...R_2[y_0,0]... W_2[y_1,20]...  C_2...R_3[x_0,0]...R_3[y_1,20]...C_3...W_1[x_2,-11]... C_1$\end{tabular} \\\hline

16   &   \begin{tabular}[c]{@{}c@{}}Write Skew \cite{10.1145/223784.223785} 1995   \end{tabular} &   $R_1[x_0]...R_2[y_0]...W_1[y_1]...W_2[x_1]$ \\\hline

17   &   \begin{tabular}[c]{@{}c@{}}Predicate-based Write Skew \cite{DBLP:journals/tods/FeketeLOOS05} 2005    \end{tabular} & $R_1[P]...R_2[P]...W_1$ [$y_1$ in $P$]$...W_2$[$x_1$ in $P$] \\\hline

18   &  \begin{tabular}[c]{@{}c@{}}Read Partial-committed \cite{duxiaoyong_read_partial_committed} 2019    \end{tabular} &  $R_1 [x]\dot W_2 [x]\dot W_2 [y]\dot C_2 \dot R_1 [y]\dot C_1$\\\hline

\bottomrule
\end{tabular}

\label{tab:datano}
\end{table*}

However, the definitions of data anomalies in the ANSI/ISO SQL standard are controversial.
On the one hand, the definitions lack a mathematical formalization, and cause ambiguous interpretations.
Let symbol “$W_1(x, k)$” be a write by transaction $T_1$ on data item $x$ with value $k$, and “$R_2(x, k)$” be a read on $x$ returned with value $k$ by transaction $T_2$.
Transaction $T_1$'s commit or abort are written as “$C_1$” or “$A_1$”, respectively.
For example in Dirty Read,
it is specified as: $T_1$ writes $x$; $T_2$ then reads $x$ before $T_1$ performs a commit or abort; if $T_1$ then performs a abort, $T_2$ has read a data item that was never committed and so never really existed \cite{16, 10.1145/223784.223785}.
One specification of the Dirty Read, which is still widely used in exiting RDBMSs, is restated below:
\begin{center}
\small
$W_1(x, 5) \dots R_2(x, 5)  \dots $ ($A_1$ and $C_2$ in any order)
\end{center}

\noindent Nevertheless, the above specification is single-variable based, and cannot be extended to the multi-variable based Dirty Read.
Assume there exists a transfer operation from one account $x$ to another account $y$, where the invariant is that $x + y = 10$. Consider the following specification:
\begin{center}
\small
$R_1(x, 5) W_1(x, 2) R_2(x, 2) R_2(y, 5) C_2 R_1(y, 5) W_1(y, 8) C_1$
\end{center}

\noindent As we can see, $T_1$ makes the transfer correctly, but $T_2$ observes an intermediate state of the database, causing a \textit{Intermediate Reads} (the total is only 7).
To avoid either \textit{Aborted Reads} \cite{839388}, or Intermediate Reads \cite{839388}, or \textit{Circular Information Flow} \cite{839388}, or both, revised definitions are proposed with a more precise and complete specification. For example, Jim Grey et. al \cite{10.1145/223784.223785} formally define Dirty Read below:
\begin{center}
\small
$W_1(x, 5) \dots R_2(x, 5)  \dots $ (($A_1$ or $C_1$) and ($C_2$ or $A_2$) in any order)
\end{center}

\noindent Besides Dirty Read, the other three data anomalies are re-defined in mathematical forms in order to avoid ambiguous interpretations \cite{10.1145/223784.223785}.
Unfortunately, due to the re-definition of data anomalies, although the meanings are more precise and complete, the interpretations become more difficult because of the lack of uniform standards.

On the other hand, the definitions are made in a case-by-case manner, and lead to a situation that even a senior DBA could not have infallible knowledge of data anomalies, due to a lack of a complete understanding of its nature.
Except the four data anomalies defined in the ANSI/ISO SQL standard, typical data anomalies are in different forms, such as \textit{Read skew} \cite{10.1145/223784.223785}, \textit{Write Skew} \cite{10.1145/223784.223785}, \textit{Fractured Reads} \cite{cerone2017algebraic}, \textit{Cross-phenomenon} \cite{binnig2014distributed}. 
A thorough study on state-of-the-art data anomalies will be conducted in Section \ref{sec_dataAnomaliesModel}.
Apparently, without a complete understanding of nature, it is rather challenging to know what they are and what are the relationship among them.
Even worse, the revised definitions with the precise and complete specifications further deteriorate the understanding of data anomalies.

While revised definitions in existing literature \cite{16, 10.1145/223784.223785, 839388} propose some mathematical formalization to correct the former argument, how can we address the latter argument still remains an open problem, we need a complete form to define data anomalies instead of case by case.

In legacy database systems, ACID(Atomicity, Consistency, Isolation, Durability) refers to a standard set of properties that guarantee database transactions are processed reliably. Consistency and Isolation are two important properties, the former ensures the correctness of data, and the latter ensures the high performance of database systems. However, the latter improves the performance of the database at the expense of the consistency of the data. Weak isolation levels allow some data anomalies to occur.

\cite{ 10.1145/223784.223785} defines eight data anomalies such as Read Skew, Write Skew. \cite{DBLP:journals/tods/FeketeLOOS05} defines some other data anomalies such as \textit{Predicate-based Write Skew}. In recent years, there still exist extensive research works that focus on reporting new data anomalies, including Intermediate Read \cite{839388,xie2015high}, \textit{Serial-concurrency-phenomenon} and Cross-phenomenon \cite{binnig2014distributed}, \textit{Long Fork} anomaly and \textit{Causality Violation} anomaly \cite{cerone2017algebraic}. It seems that the conflict graph can describe data anomalies, but Adya \cite{839388} defines isolation levels with conflict graph and three specific data anomalies (Dirty Writes, Dirty Reads, Intermediate Reads), his method is not unified and can neither define what is data anomaly nor cover all data anomalies. So we wonder \textbf{how many data anomalies exist in the application?} We analyze Read Skew is similar with Serial-concurrency-phenomenon even though they are the same kind of data anomalies with different names by their formal definition (Table \ref{tab:datano}). But why do they have different names, and \textbf{is there a unified standard to study data anomalies?} Further, we do not know whether there are some new coming data anomalies that will affect the application. At the same time, we do not even precisely know: \textbf{what is the relationship between data anomalies and data consistency or isolation levels?}

For reference, we make a thorough survey on data anomalies reported in the state-of-the-art literature and list them in Table~\ref{tab:datano}. The table shows there are some different data anomalies reported. Surprisingly, some are the same anomalies but reported with different names, e.g., Long Fork and Cross-phenomenon in Table~\ref{tab:datano}. The reason is that, no formal definition to express these anomalies. Worse, it is ambiguous which current isolation level can eliminate these two data anomalies? 

In this paper, We make the following contributions.
\begin{itemize}
    \item 
We propose a general framework, called Coo, which models data anomalies by extending the conflicted relationship of the conflict graph. In our framework, we can formally define all data anomalies(\S \ref{sec_dataAnomaliesModel}). At the same time, we report 20+ new entity-based data anomalies (Table \ref{table:anomaly_classification}) and all predicate-based data anomalies (Table \ref{table:predicate_anomalies}).


\item 
We, by Coo, can specify all simplified primitive data anomalies (\S \ref{sec_ConflictGraphs}) and classify them.
We also explore the 
similarity and difference between entity-based data anomalies and predicate-based data anomalies (\S \ref{sec_ConflictRelationsStatus}).

\item We quantitatively study the data anomalies(\S \ref{sec_TestandQuantitativeDataAnomalies}) and rollback rate(\S \ref{sec_EvaluationofRollbackRate}), propose the new isolation levels (\S \ref{sec_iso}), and analyze the relationship between data anomalies, isolation levels and concurrency control algorithms(\S \ref{sec_algorithm}). Our new isolation levels can classify any types of anomalies, in contrast to the current isolation levels which are defined based on limited known data anomalies.
To the best of our knowledge, this is the first paper to systematically and quantitatively study data anomalies, rollback rates, isolation levels and concurrent control algorithms. 



\end{itemize}

The rest of this paper is organized as follows.
Section 2 presents the Coo framework. 
Section 3 discusses how to quantify data anomalies and rollback rates.
Section 4 shows the model application on databases, and discusses how to define isolation levels and how to analyze concurrent control algorithms.
Section 5 discusses related work.
Section 6 is a conclusion.


\section{Data anomalies Model}\label{sec_dataAnomaliesModel}
In this section, we first give the symbolizations, then define entity-based and predicated-based partial order pairs, and finally present our \textit{Coo} framework to formulate data anomalies.


\subsection{Abstract System Model}\label{section21}
  We consider storing objects $Obj=\{x,y,...\}$ in a concurrent system. Transactions interact with the objects by issuing read and write operations, grouped them together into transactions symbol $t_i$. We let $Op_i=\{R_i[x_n],W_i[x_n] | x \in Obj \}$ describe the possible operations invocations: reading the version $n$ from an object $x$ or writing version $n$ to $x$ by the transaction $t_i$. The read operation is divided into two categories. One type is a physical read $R_i[x_n]$, which describes the transaction $t_i$ read the object $x$ with version $n$. The other type is a predicate read operation $R_i^P[x_n]$, which describes that the object $x$ with version $n$ was in the set of objects fitting predicate logic $P$ by the transaction $t_i$, i.e, $R_i[x_n \in Vset(P)]$. A \textbf{predicate} is a sentence that contains a finite number of variables and becomes a statement when specific objects are substituted for the variables. The \textbf{domain} of a predicate $Vset(P)$ is the set of all objects that may be substituting in place of the variables \cite{epp2010discrete}. In logic, predicates can be obtained by removing some or all of the nouns $x$ or $x,y$ from a statement and be symbolized as $P(x)$ or $P(x,y)$. Always predicate is a  predicate symbol together with suitable predicate variables referred to as propositional functions or open sentences. Predicate logic includes first-order logic, second-order logic and $n$-order logic. In first-order logic, a predicate can only refer to a single subject. First-order logic is also known as first-order predicate calculus or first-order functional calculus \cite{smullyan1995first}. A sentence in first-order logic is written in the form $P(x)$, where $P$ is the predicate and $x$ is the subject, represented as a variable.  Second-order logic is a predicate logic of two predicate variables after simplie. The same definition method is for $n$-order logic. We define that \textbf{atomic formula} for an $n$-tuple $P(x,y,\dots)$ where $P$ is any predicate of degree $n$ and $x,y,\dots$ are $n$ cannot be divided symbols (or variables).Smullyan and Raymond M \cite{janin1996expressive}proved that a $n$ order logic can containing only disjunction of sentence of the form: $P(x,y,\dots) = \wedge (or \vee) P_i(x_i) \wedge(or \vee) P_j(x_j,y_j) \dots \wedge (or \vee) P_z(x_z,y_z,\dots)$. Besides, a physical read operation can be regarded as the predicate read operation with the predicate variable happens to be the object ID. 

  Therefore, anomalies of a schedule can be detected just under the condition of predicate logic of degree from one to $n$ in sequence.

  In order to reflect the affection between writing and read operations and predicate logic domain set, we extend senses of versions of the object which adding insertion and deletion versions.
  \begin{itemize}
    \item The object $x$ has an initial version $x_{init}$ called unborn version. When transaction $t_i$ creates an object that will ever exist in the system through writing operation, we denote $W_i[x_0]$. Of course, there is no read operation $R_i[x_{init}]$. 
    \item The object $x$ has a visible version $x_n(n=0,1,2,\dots)$ when a transaction updates an object to generate its new version. We use $R_i[x_n]$ to describe a transaction read a version of an object that is in the set.
    \item The object $x$ has a dead version $x_{dead}$ when the transaction deletes an object $x$. We use $W_i[x_{dead}]$ to describe this situation. Also, there is no read operation $R_i[x_{dead}]$. 
  \end{itemize}

  Based on the definition of versions of an object, there are four cases of reading and writing operations under the predicate logic.
  \begin{itemize}
    \item The version of the object $x_n$ is in the predicate domain set $Vset(P)$ by the predicated-based read operation, which denoted $R_i[x_n \in Vset(P)]$.
    \item  The version of the object $x_n$ is not in the predicate domain set $Vset(P)$ by the predicate-based read operation, which denoted $R_i[x_n \notin Vset(P)]$.
    \item The version of the object $x_n$ is in the predicate domain set $Vset(P)$ by the predicate-based write operation, which denoted $W_i[x_n \in Vset(P)]$.
    \item The version of the object $x_n$ is not in the predicate domain set $Vset(P)$ by the predicate-based write operation, which denoted $W_i[x_n \notin Vset(P)]$.
  \end{itemize}
  We write $W_i[x_n]$ indicated writing a value $x_n$ that is irrelevant to whether or not in the predicate domain set $Vset(P)$.

  According to the extended definition of object versions, $W_i[x_0 \in Vset(P)]$ indicates that the transaction $t_i$ inserts the new object exactly in the predicate set. And  $W_i[x_{dead} \in Vset(P)]$ indicates that the transaction $t_i$ deleted the object exactly in the predicate set.
  
 \subsection{Conflict Relations}\label{section22}
 We define the different types of predicate-based operation conflicts that can occur to a concurrent system in this section. We define two types of conflict relations and describe in detail different operations combination forms of various categories of conflict relations.
\begin{definition}\label{def:EdgeType}
  Let $t_i, t_j$ be transactions in a schedule $s$, $t_i <_s t_j$. Two operations $p\in t_i$ and $q \in t_j$ access the same object and at least one of them is a write, i.e.,$\{ W_i W_j[x],W_i R_j[x],R_i W_j[x]\}$. 
\end{definition}

 We separate the above relations to several cases by whether the object read of predicate-based domain.

\subsubsection{Write-Write Dependency}
\begin{definition}
  Write-write conflicts $WW[x]$ occur when one transaction overwrites a version written by another transaction.
\end{definition}

Since predicate-based operations are queries objects sets. There is no notion of predicate-based write-write conflicts.

\subsubsection{Read-Write Dependency}
\begin{definition}
Read-write conflicts $RW[x]$ occur when one transaction reads or predicate-based $P$ read a relevant object version by some other transaction.
\end{definition}

There are four situations based on the relationship between the object version and the predicate domain set $Vset(P)$.
\begin{itemize}
  \item[1] $R_i[x_n \in Vset(P)]W_j[x_{n+1} \in Vset(P)]$
  \item[2] $R_i[x_n \in Vset(P)]W_j[x_{n+1} \notin Vset(P)]$
  \item[3] $R_i[x_n \notin Vset(P)]W_j[x_{n+1} \in Vset(P)]$
  \item[4] $R_i[x_n \notin Vset(P)]W_j[x_{n+1} \notin Vset(P)]$
\end{itemize}

Obviously, case 1 belongs to entity-based conflicts. Case 2 belongs to predicate-based conflicts because after reading the object $x$, the write operation gets $x$ out of the predicated domain. Case 3 belongs to predicate-based conflicts, in which the write operation of transaction $t_j$ inserts the object of the predicate set through the insert or update. In case 4, the object is not visible read by transaction $t_i$, and the write operation of transaction $t_j$ does not produce an intersection with the predicate set $Vset(P)$, so there is no relationship between two operations.

We formally express the read-write conflicts as
$$
\begin{aligned}
   R_iW_j[x] =&\{ R_i[x_n \in Vset(P)]W_j[x_{n+1} \in Vset(P)], \\ 
   & R_i[x_n \in Vset(P)]W_j[x_{n+1} \notin Vset(P)],\\
   &R_i[x_n \notin Vset(P)]W_j[x_{n+1} \in Vset(P)] \} 
\end{aligned}
$$

\subsubsection{Write-Read Dependency}
\begin{definition}
  Write-read conflicts $WR[x]$ occur when a transaction overwrites a version observed or predicate-based $P$ observed by some other transaction.
\end{definition}

There are two situations based on the relation between the object version and the predicate domain set $Vset(P)$.

\begin{itemize}
  \item[1] $W_i[x_n \in Vset(P)]R_j[x_n \in Vset(P)]$
  \item[2] $W_i[x_n \notin Vset(P)]R_j[x_n \notin Vset(P)]$
\end{itemize}

Thus the read-write conflicts can be expressed as:
$$
\begin{aligned}
   W_iR_j[x] = &\{ W_i[x_n \in Vset(P)]R_j[x_n \in Vset(P)], \\ 
   & W_i[x_n \notin Vset(P)]R_j[x_n \notin Vset(P)] \} 
\end{aligned}
$$

We divide the conflict relationship into entity conflicts and predicate conflicts:
\begin{itemize}
    \item []$\begin{aligned}
   \text{Entity Conflicts} = &\{ W_i[x_n] W_j[x_{n+1}], \\ 
   & R_i[x_n \in Vset(P)]W_j[x_{n+1} \in Vset(P)],\\
   & W_i[x_n \in Vset(P)]R_j[x_n \in Vset(P)]\} ;
\end{aligned}
$
\item []$\begin{aligned}
   \text{Predicate Conflicts} = &\{ R_i[x_n \in Vset(P)]W_j[x_{n+1} \notin Vset(P)], \\ 
   &  R_i[x_n \in Vset(P)]W_j[x_{n+1} \notin Vset(P)],\\
   & W_i[x_n \notin Vset(P)]R_j[x_n \notin Vset(P)]\} .
\end{aligned}
$
\end{itemize}
Based on the above discussion, We can still symbolize the conflict relations as $conf = \{ W_iW_j,W_iR_j,R_iW_j, \}$. 

\subsection{Conflict Graphs}\label{sec_ConflictGraphs}

\begin{definition} (\textbf{Conflict Equivalent})
  Let $s$ and $s'$ be two schedules. $s$ and $s'$ are called conflict equivalent, denoted $s \thickapprox s' $, if they have the same conflict relations, including the entity-based and the predicate-base, i.e., if the following holds:
  \begin{itemize}
    \item[1] $Op(s)=Op(s')$;
    \item[2] $conf(s) = conf(s')$.
  \end{itemize}
\end{definition}

Thus, two schedules are conflict equivalents if all conflicting pairs of steps from distinct transactions occur in the same order in both schedules. The conflict relations of a schedule can be described in terms of graphs \cite{2002Concurrency}. 
\begin{definition}(\textbf{Conflict Graph})
  Let $s$ be a schedule and $T(s)$ be a set of transactions belong to schedule $s$. A graph $G(s)=(V,E)$ is called conflict graph, if vertices are transactions set and edges are operations with conflict relations, i.e., if the following holds:
  \begin{itemize}
    \item $V \subset T(s)$;
    \item $(p_i q_j) \in E \Leftrightarrow t_i \neq t_j \wedge (p_i q_j) \in conf(s)$.
  \end{itemize}
\end{definition}

We are now ready to introduce a notion of anomalies and consistency in concurrency systems.
\begin{definition}(\textbf{Data Anomalies})\label{DEFanomalies}
  A schedule $s$ has anomalies if there exists a cycle in the conflict graph.
\end{definition} 

\begin{definition}\label{DEFanomaliesType}
We call an anomaly as an \textbf{entity-based} if the anomaly cycle only has entity conflicts. If there is at least one kind of predicate conflicts, we call it as \textbf{predicate-based} anomaly. 
\end{definition}


\begin{example}
  If 
$$
\begin{aligned}
  s = &W_1[x_0 \in Vset(P)]W_2[x_1 \notin Vset(P)]W_2[y_1 \in Vset(P)] \\
  &W_3[y_2 \in Vset(P)]R_1[x_1 \notin Vset(P)]R_1[y_2 \in Vset(P)]C_1C_2
\end{aligned}
$$
where $t_1$ and $t_2$ are committed, and $t_3$ still active.

The transaction $t_1$ inserts an object $x$ in predicate domain set $Vset(P)$, but the transaction $t_2$ updates it and move it out. The transaction $t_3$ updates the object $y$ and the transaction $t_1$ read it.

We have 
$$ 
\begin{aligned}
\text{Entity conflicts} = &\{W_1 W_2[x],W_2W_3[y],\\
                         &W_3[y_2 \in Vset(P)]R_1[y_2 \in Vset(P)]\}.
\end{aligned}
$$
$$
\text{Predicate conflict} = \{W_2[x_1 \notin Vset(P)]R_1[x_1 \notin Vset(P)]\}.
$$

\end{example}

\begin{example}
  A schedule has two predicate logics and $Vset(P_1)\bigcap Vset(P_2)=\phi$,  i.e.,
  $$
  \begin{aligned}
    s = &W_1[x_0 \in Vset(P_1)]W_2[x_1 \notin Vset(P_2)]W_2[y_1 \in Vset(P_2)] \\
    &W_3[y_2 \in Vset(P_1)]R_1[x_1 \notin Vset(P_2)]R_1[y_1 \in Vset(P_1)]C_1C_2.
  \end{aligned}
  $$
  Firstly, we rewrite the schedule with visible read under non-predicate logic
  $$
  s_{1} = W_1[x_0]W_2[x_1]W_2[y_1]W_3[y_2]R_1[y_1].
  $$
  We have the entity conflicts set
  $$
  \text{Entity conflict} = \{W_1 W_2[x],W_2W_3[y],W_3R_1[y]\}.
  $$

  The only invisible read operation is $R_1[x_1 \notin Vset(P_2)]$. We choice operations with the predicate $P_2$
  $$
  \text{Predicate conflict} = \{W_2[x_1 \notin Vset(P_2)]R_1[x_1 \notin Vset(P_2)]\}.
  $$
\end{example}
So the conflict graph is shown as fig.\ref{fig1}.
\begin{figure}[h]
  \centering
  \includegraphics[width= 0.7\linewidth]{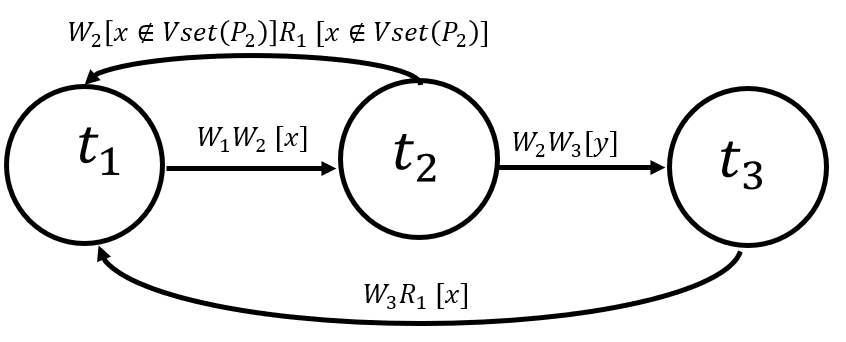}
  \caption{Conflict graphs of example 2.}
  \label{fig1}
\end{figure}

According to the definition of consistency, there may be multiple cycles in a conflict graph. Just one cycle can be a crucial disturbance in the system.  Therefore, we specify the earliest cycle that appeared in the conflict graph to be studied.  We prove that a conflict cycle constructed by an object at most has two transactions, and a conflict cycle constructed by $n$ objects at most has $2n$ transactions, and we use edges to classify them. 

\begin{definition}(\textbf{Data Anomalies Types}) \label{Types}
    In the conflict graph formed by conflicts, we separate into three types of anomalies based on the operations on the edges.
  \begin{itemize}
    \item[A1:] We defined a conflict graph with a cycle as \textbf{Read Anomaly Type, RAT}, if the graph has a $WR$ edge at least, i.e., for a graph $G=\{V, E\}$ with $ \exists WR\in E$.
    \item[A2:]  We defined a conflict graph with a cycle as \textbf{Write Anomaly Type, WAT}, if the graph does not contain $WR$ edges, but contain at least a $WW$ edge, i.e.,for a graph $G=\{V,E\}$ with $ (\forall WR \notin E)\wedge (\exists WW\in E) $.  
    \item[A3:]  We defined a conflict graph with a cycle as \textbf{Intersect Anomaly Type, IAT}, if the graph does not contain $WW$ edges, also does not contain $WR$ edges, i.e.,for a graph $G=\{V,E\}$ with $ (\forall WW \notin E)\wedge (\forall WR \notin E) $.
  \end{itemize}
\end{definition}

\begin{table*}[ht]
\caption{Combinations of bilateral cycles with predicate relations.} 
\footnotesize

\begin{tabular}{|c|l|c|l|l|}
\hline
\textbf{Types of   Anomalies} & \multicolumn{1}{c|}{\textbf{Edges   Combanations}} & \textbf{Anomalies}                   & \multicolumn{1}{c|}{\textbf{Partially   ordered combination with predicate}}      & \multicolumn{1}{c|}{\textbf{Classification}} \\ \hline
SDA                           & $W-W-W[x]$                                         & Full-write                           & $W-W-W[x]$                                                                        & Entity-based                                 \\ \hline
\multirow{2}{*}{SDA}          & \multirow{2}{*}{$W-W-R[x]$}                        & \multirow{2}{*}{Lost Self Update}    & $W_i[x]-W_j[x \in   Vset(P)]-R_i[x \in Vset(P)]$                                  & Entity-based                                 \\ \cline{4-5} 
                              &                                                    &                                      & $W_i[x]-W_j[x \notin Vset(P)]-R_i[x \notin Vset(P)]$                              & Predicate-based                              \\ \hline
\multirow{3}{*}{SDA}          & \multirow{3}{*}{$W-R-W[x]$}                        & \multirow{3}{*}{Intermediate Read}   & $W_i[x \in   Vset(P)]-R_j[x \in Vset(P)]-W_i[x \in Vset(P)]$                      & Entity-based                                 \\ \cline{4-5} 
                              &                                                    &                                      & $W_i[x \in Vset(P)]-R_j[x \in Vset(P)]-W_i[x \notin Vset(P)]$                     & Predicate-based                              \\ \cline{4-5} 
                              &                                                    &                                      & $W_i[x \notin Vset(P)]-R_j[x \notin Vset(P)]-W_i[x \in Vset(P)]$                  & Predicate-based                              \\ \hline
\multirow{3}{*}{SDA}          & \multirow{3}{*}{$R-W-W[x]$}                        & \multirow{3}{*}{Lost Update}         & $R_i[x \in   Vset(P)]-W_j[x \in Vset(P)]-W_i[x]$                                  & Entity-based                                 \\ \cline{4-5} 
                              &                                                    &                                      & $R_i[x \in Vset(P)]-W_j[x \notin Vset(P)]-W_i[x]$                                 & Predicate-based                              \\ \cline{4-5} 
                              &                                                    &                                      & $R_i[x \notin Vset(P)]-W_j[x \in Vset(P)]-W_i[x]$                                 & Predicate-based                              \\ \hline
\multirow{3}{*}{SDA}          & \multirow{3}{*}{$R-W-R[x]$}                        & \multirow{3}{*}{Non-repeatable Read} & $R_i[x \in   Vset(P)]-W_j[x \in Vset(P)]-R_i[x \in Vset(P)]$                      & Entity-based                                 \\ \cline{4-5} 
                              &                                                    &                                      & $R_i[x \notin Vset(P)]-W_j[x \in Vset(P)]-R_i[x \in Vset(P)]$                     & Predicate-based                              \\ \cline{4-5} 
                              &                                                    &                                      & $R_i[x \in Vset(P)]-W_j[x \notin Vset(P)]-R_i[x \notin Vset(P)]$                  & Predicate-based                              \\ \hline
DDA                           & $WW[x]-WW[y]$                                      & Full-write   Skew                    & $WW[x]-WW[y]$                                                                     & Entity-based                                 \\ \hline
\multirow{2}{*}{DDA}          & \multirow{2}{*}{$WW[x]-WR[y]$}                     & \multirow{2}{*}{Double-write Skew 2} & $WW[x]-W[y \in   Vset(P)]R[y \in Vset(P)]$                                        & Entity-based                                 \\ \cline{4-5} 
                              &                                                    &                                      & $WW[x]-W[y \notin Vset(P)]R[y \notin Vset(P)]$                                    & Predicate-based                              \\ \hline
\multirow{3}{*}{DDA}          & \multirow{3}{*}{$WW[x]-RW[y]$}                     & \multirow{3}{*}{Read-Write Skew 2}   & $WW[x]-R[y \in   Vset(P)]W[y \in Vset(P)]$                                        & Entity-based                                 \\ \cline{4-5} 
                              &                                                    &                                      & $WW[x]-R[y \notin Vset(P)]W[y \in Vset(P)]$                                       & Predicate-based                              \\ \cline{4-5} 
                              &                                                    &                                      & $WW[x]-R[y \in Vset(P)]W[y \notin Vset(P)]$                                       & Predicate-based                              \\ \hline
\multirow{2}{*}{DDA}          & \multirow{2}{*}{$WR[x]-WW[y]$}                     & \multirow{2}{*}{Double-Write Skew 1} & $W[x \in Vset(P)]R[x   \in Vset(P)]-WW[y]$                                        & Entity-based                                 \\ \cline{4-5} 
                              &                                                    &                                      & $W[x \notin Vset(P)]R[x \notin Vset(P)]-WW[y]$                                    & Predicate-based                              \\ \hline
\multirow{4}{*}{DDA}          & \multirow{4}{*}{$WR[x]-WR[y]$}                     & \multirow{4}{*}{Write-Read Skew}     & $W[x \in Vset(P)]R[x   \in Vset(P)]-W[y \in Vset(P)]R[y \in Vset(P)]$             & Entity-based                                 \\ \cline{4-5} 
                              &                                                    &                                      & $W[x \in Vset(P)]R[x \in Vset(P)]-W[y \notin Vset(P)]R[y \notin   Vset(P)]$       & Predicate-based                              \\ \cline{4-5} 
                              &                                                    &                                      & $W[x \notin Vset(P)]R[x \notin Vset(P)]-W[y \in Vset(P)]R[y \in   Vset(P)]$       & Predicate-based                              \\ \cline{4-5} 
                              &                                                    &                                      & $W[x \notin Vset(P)]R[x \notin Vset(P)]-W[y \notin Vset(P)]R[y \notin   Vset(P)]$ & Predicate-based                              \\ \hline
\multirow{6}{*}{DDA}          & \multirow{6}{*}{$WR[x]-RW[y]$}                     & \multirow{6}{*}{Read Skew 2}         & $W[x \in Vset(P)]R[x   \in Vset(P)]-R[y \in Vset(P)]W[y \in Vset(P)]$             & Entity-based                                 \\ \cline{4-5} 
                              &                                                    &                                      & $W[x \in Vset(P)]R[x \in Vset(P)]-R[y \in Vset(P)]W[y \notin Vset(P)]$            & Predicate-based                              \\ \cline{4-5} 
                              &                                                    &                                      & $W[x \in Vset(P)]R[x \in Vset(P)]-R[y \notin Vset(P)]W[y \in Vset(P)]$            & Predicate-based                              \\ \cline{4-5} 
                              &                                                    &                                      & $W[x \notin Vset(P)]R[x \notin Vset(P)]-R[y \in Vset(P)]W[y \in   Vset(P)]$       & Predicate-based                              \\ \cline{4-5} 
                              &                                                    &                                      & $W[x \notin Vset(P)]R[x \notin Vset(P)]-R[y \in Vset(P)]W[y \notin   Vset(P)]$    & Predicate-based                              \\ \cline{4-5} 
                              &                                                    &                                      & $W[x \notin Vset(P)]R[x \notin Vset(P)]-R[y \notin Vset(P)]W[y \in   Vset(P)]$    & Predicate-based                              \\ \hline
\multirow{3}{*}{DDA}          & \multirow{3}{*}{$RW[x]-WW[y]$}                     & \multirow{3}{*}{Read-Write Skew 1}   & $R[x \in Vset(P)]W[x   \in Vset(P)]-WW[y]$                                        & Entity-based                                 \\ \cline{4-5} 
                              &                                                    &                                      & $R[x \in Vset(P)]W[x \notin Vset(P)]-WW[y]$                                       & Predicate-based                              \\ \cline{4-5} 
                              &                                                    &                                      & $R[x \notin Vset(P)]W[x \in Vset(P)]-WW[y]$                                       & Predicate-based                              \\ \hline
\multirow{6}{*}{DDA}          & \multirow{6}{*}{$RW[x]-WR[y]$}                     & \multirow{6}{*}{Read Skew}           & $R[x \in Vset(P)]W[x   \in Vset(P)]-W[y \in Vset(P)]R[y \in Vset(P)]$             & Entity-based                                 \\ \cline{4-5} 
                              &                                                    &                                      & $R[x \in Vset(P)]W[x \notin Vset(P)]-W[y \in Vset(P)]R[y \in Vset(P)]$            & Predicate-based                              \\ \cline{4-5} 
                              &                                                    &                                      & $R[x \notin Vset(P)]W[x \in Vset(P)]-W[y \in Vset(P)]R[y \in Vset(P)]$            & Predicate-based                              \\ \cline{4-5} 
                              &                                                    &                                      & $R[x \in Vset(P)]W[x \in Vset(P)]-W[y \notin Vset(P)]R[y \notin   Vset(P)]$       & Predicate-based                              \\ \cline{4-5} 
                              &                                                    &                                      & $R[x \in Vset(P)]W[x \notin Vset(P)]-W[y \notin Vset(P)]R[y \notin   Vset(P)]$    & Predicate-based                              \\ \cline{4-5} 
                              &                                                    &                                      & $R[x \notin Vset(P)]W[x \in Vset(P)]-W[y \notin Vset(P)]R[y \notin   Vset(P)]$    & Predicate-based                              \\ \hline
\multirow{9}{*}{DDA}          & \multirow{9}{*}{$RW[x]-RW[y]$}                     & \multirow{9}{*}{Write Skew}          & $R[x \in Vset(P)]W[x   \in Vset(P)]-R[y \in Vset(P)]W[y \in Vset(P)]$             & Entity-based                                 \\ \cline{4-5} 
                              &                                                    &                                      & $R[x \in Vset(P)]W[x \notin Vset(P)]-R[y \in Vset(P)]W[y \in Vset(P)]$            & Predicate-based                              \\ \cline{4-5} 
                              &                                                    &                                      & $R[x \notin Vset(P)]W[x \in Vset(P)]-R[y \in Vset(P)]W[y \in Vset(P)]$            & Predicate-based                              \\ \cline{4-5} 
                              &                                                    &                                      & $R[x \in Vset(P)]W[x \in Vset(P)]-R[y \notin Vset(P)]W[y \in Vset(P)]$            & Predicate-based                              \\ \cline{4-5} 
                              &                                                    &                                      & $R[x \in Vset(P)]W[x \notin Vset(P)]-R[y \notin Vset(P)]W[y \in   Vset(P)]$       & Predicate-based                              \\ \cline{4-5} 
                              &                                                    &                                      & $R[x \notin Vset(P)]W[x \in Vset(P)]-R[y \notin Vset(P)]W[y \in   Vset(P)]$       & Predicate-based                              \\ \cline{4-5} 
                              &                                                    &                                      & $R[x \in Vset(P)]W[x \in Vset(P)]-R[y \in Vset(P)]W[y \notin Vset(P)]$            & Predicate-based                              \\ \cline{4-5} 
                              &                                                    &                                      & $R[x \in Vset(P)]W[x \notin Vset(P)]-R[y \in Vset(P)]W[y \notin   Vset(P)]$       & Predicate-based                              \\ \cline{4-5} 
                              &                                                    &                                      & $R[x \notin Vset(P)]W[x \in Vset(P)]-R[y \in Vset(P)]W[y \notin   Vset(P)]$       & Predicate-based                              \\ \hline
\end{tabular}

\label{table:predicate_anomalies}
\end{table*}

\subsection{Simplify of Conflict cycles}\label{sec_SimplifyOfConflictCycles}
\begin{theorem} \label{theorem1}
  If a single object from 3-transactions constitutes a cycle, the conflict graph can be reduced to a cycle between two transactions. 
\end{theorem}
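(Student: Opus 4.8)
The plan is to exploit the fact that all operations in the cycle touch the \emph{same} object $x$, whose versions carry a single total order; write $\prec$ for its strict part and $\preceq$ for its reflexive closure. First I would record how each conflict edge sits inside this order. An edge $t_i\to t_j$ of type $WW$ forces $t_i$'s installed version to precede $t_j$'s installed version; an edge of type $WR$ forces $t_j$ to read exactly the version $t_i$ installed; an edge of type $RW$ forces the version $t_i$ read to precede the version $t_j$ installs. In every case the version attached to the \emph{tail} operation is $\preceq$ the version attached to the \emph{head} operation (with equality only for $WR$). Hence, read along any directed path, the attached versions are non-decreasing in $\preceq$.

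Next I would show that a directed cycle is impossible unless one transaction reverses the order internally. Writing $V^{\mathrm{in}}_i$ for the version of the operation of $t_i$ that receives its incoming edge and $V^{\mathrm{out}}_i$ for the version of the operation that emits its outgoing edge, the edge inequalities give $V^{\mathrm{out}}_1 \preceq V^{\mathrm{in}}_2$, $V^{\mathrm{out}}_2 \preceq V^{\mathrm{in}}_3$ and $V^{\mathrm{out}}_3 \preceq V^{\mathrm{in}}_1$. If every transaction also satisfied $V^{\mathrm{in}}_i \preceq V^{\mathrm{out}}_i$, the three chains would collapse into equalities, contradicting the presence of a genuine conflict cycle; hence some transaction $t^\ast$ has $V^{\mathrm{in}}_{\ast}\succ V^{\mathrm{out}}_{\ast}$. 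Inspecting the four ways an in/out pair can be realised shows this happens only when $t^\ast$'s outgoing operation is a \emph{read} and its incoming operation is a \emph{write}, i.e. $t^\ast$ reads an early version $R(t^\ast)$ and installs a strictly later version $W(t^\ast)$ with $R(t^\ast)\prec W(t^\ast)$ --- the lost-update shape in which $t^\ast$ straddles the interval $(R(t^\ast),W(t^\ast))$ of the version order.

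Then I would produce the two-transaction cycle explicitly. Both neighbours of $t^\ast$ contribute an operation whose version lies in $[R(t^\ast),W(t^\ast)]$ (the predecessor's tail is $\preceq W(t^\ast)$ and the successor's head is $\succeq R(t^\ast)$), so by totality of $\prec$ at least one neighbour $t'$ owns an operation strictly between $R(t^\ast)$ and $W(t^\ast)$. Paired with $t^\ast$'s read that operation yields an edge $t^\ast \xrightarrow{RW} t'$, and paired with $t^\ast$'s write it yields an edge $t' \to t^\ast$ (of type $WW$ or $WR$ according to whether $t'$'s operation is a write or a read of $t^\ast$'s installed version). These two opposite edges form a cycle on the single pair $\{t^\ast,t'\}$, which is the desired reduction; it is a legitimate conflict cycle because each edge still involves a write on the common object $x$, and the collapsed pattern is one of the single-object $SDA$ shapes $W\!-\!W\!-\!W$, $R\!-\!W\!-\!W$ or $W\!-\!R\!-\!W$.

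I expect the main obstacle to be the bookkeeping in the last step when the conflict relations are taken to be the \emph{adjacent-version} relations (directly overwrites / reads the installed version) rather than the transitive version order: then ``some version strictly inside the interval'' need not be an order-adjacent version, and I would have to select $t'$ as the owner of the version \emph{immediately} preceding $W(t^\ast)$ for the incoming edge, or \emph{immediately} following $R(t^\ast)$ for the outgoing edge, and verify that a single neighbour can meet both requirements, splitting into the handful of surviving edge-type combinations if necessary. Clearing the degenerate sub-cases --- a transaction reading and writing the same version, or two transactions installing the same version --- and checking that the $WR$ equality never destroys the strictness needed for the reversal also belongs to this final step.
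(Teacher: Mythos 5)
Your overall strategy is sound up to the existence of a reversing transaction: versions are indeed non\hyp{}decreasing along edges, so a genuine cycle forces some $t^\ast$ with $V^{\mathrm{in}}_{\ast}\succ V^{\mathrm{out}}_{\ast}$. The gap is the classification step. It is \emph{not} true that reversal ``happens only when $t^\ast$'s outgoing operation is a read and its incoming operation is a write.'' Two of the four in/out shapes you discard do occur. In the pure-write cycle $W_1[x_0]\,W_2[x_1]\,W_3[x_2]\,W_1[x_3]$ (edges $W_1W_2$, $W_2W_3$, $W_3W_1$ --- exactly the paper's Full-Write pattern) the unique reversing transaction $t_1$ has incoming and outgoing operations that are \emph{both writes}; in $W_1[x_0]\,W_2[x_1]\,W_3[x_2]\,R_1[x_2]$ (edges $W_1W_2$, $W_2W_3$, $W_3R_1$) it has an outgoing \emph{write} and an incoming \emph{read}. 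In both cases $t^\ast$ has no early read at all, so your explicit construction of the edge $t^\ast \xrightarrow{RW} t'$ cannot even begin; the two-transaction cycles that do exist ($W_1W_2$ with $W_2W_1$ in the first case, $W_1W_2$ with $W_2R_1$ --- a Lost Self Update shape --- in the second) arise from $WW$ and $WR$ pairings that your case analysis never produces. A smaller slip in the same step: if $t'$ reads $t^\ast$'s installed version, the conflict $W_{t^\ast}R_{t'}$ is an edge from $t^\ast$ \emph{to} $t'$, not the claimed $t'\to t^\ast$. Repairing the argument requires carrying out the pairing for all three realizable reversal shapes (read-then-write, write-then-write, write-then-read), not just the lost-update one; your closing paragraph anticipates bookkeeping trouble with adjacent-version edges, but the failure above occurs already in the clean transitive-version-order setting.

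It is worth contrasting this with the paper's own proof, which avoids version bookkeeping entirely. In the paper's model (Definition 1 of \S 2.2) a conflict exists between any two same-object operations in schedule order whenever at least one is a write, so conflicts involving a write are \emph{total}: given the cycle edge $(p_1q_2[x])$, if $p_1=W$ then $p_1$ and $p_2$ conflict in one direction or the other, and either direction immediately closes a two-transaction cycle (directly, or via $p_1<_s p_2<_s q_3$ yielding the edge $(p_1q_3)$ against $(p_3q_1)$); the case $p_1=R$ forces $q_2=W$ and is handled symmetrically. If you patch your version-order route by pairing the reversing transaction's earlier-versioned operation with the write of whichever neighbour owns an intermediate version, you will find you have essentially rediscovered that totality argument, at greater cost.
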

\begin{proof}
  Suppose the conflict graph is $G = \{ \{t_1,t_2,t_3\},\{(p_1q_2[x]),\\(p_2q_3[x]),(p_3q_1[x])\}\}$ which $(pq)$ be one of $\{ WW,WR,RW\}$. 
  
  When $p_1=W$ in $(p_1q_2[x])$, no matter what operation $p_2$ (W or R) is, there will be a conflict relationship $(p_1p_2[x])$ or $(p_2p_1[x])$. If $(p_2p_1[x])$, there is a two transactions graph $G' =\{ \{t_1,t_2\},\{(p_2p_1[x]), \\ (p_1q_2[x])\} \}$. If $(p_1p_2[x])$, because of existing the conflict relationship $(p_2q_3[x])$, then we have $p_1 <_s p_2 <_s q_3$, which means that it has the conflict relationship $(p_1q_3)$. So we get the two transaction graph $G' =\{ \{t_1,t_3\},\{(p_1q_3[x]),(p_3q_1[x])\} \}$
  
  When $p_1=R$ in $(p_1q_2[x])$, $q_2$ can only be $W$ to constitute the conflict. No matter what operation $p_3$ (W or R) in $(p_3q_1[x])$ is, there will be a conflict relationship $(q_2p_3[x])$ or $(p_3q_2[x])$. If $(q_2p_3[x])$, because of existing the conflict relationship $(p_3q_1[x])$, then we have $q_2 <_s p_3 <_s q_1$, which means that it has the conflict relationship $(q_2q_1)$. So we get the two transaction graph $G' =\{ \{t_1,t_2\},\{(p_1q_2[x]),(q_2q_1[x])\} \}$. If $(p_3q_2[x])$, it exists the conflict relationship $(p_2q_3[x])$. So we get the two transaction graph $G' =\{ \{t_2,t_3\},\{(p_2q_3[x]),(p_3q_2[x])\} \}$.
  \end{proof}
  \begin{example}
    The 3-transactions conflict cycle can be reduced to the 2-transactions ones, and the reduction process is shown in Figure \ref{example 3-SDA}. 
    \begin{figure}[h]
      \centering
      \includegraphics[width= 1\linewidth]{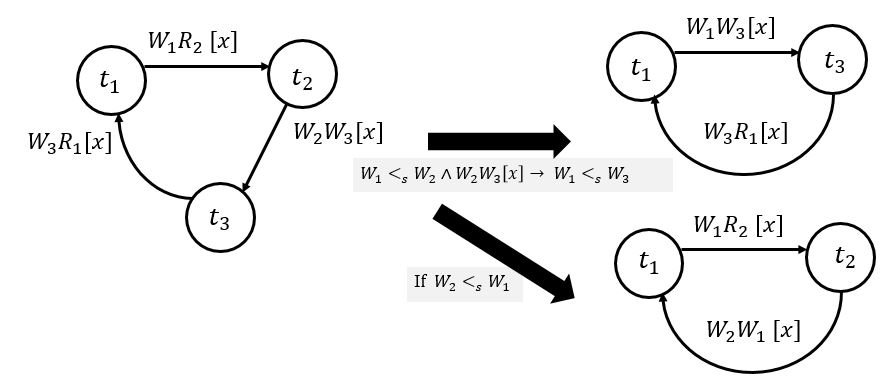}
      \caption{Reduction processes of 3-transactions conflict cycle.}
      \label{example 3-SDA}
    \end{figure}
  \end{example}

\begin{theorem} \label{theorem2}
  The multi-transaction $(\geq 3)$ conflict cycle with a single object can be reduced to a 2-transactions cycle.
\end{theorem}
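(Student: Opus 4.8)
The plan is to induct on the number $n$ of transactions forming the single-object conflict cycle, taking Theorem~\ref{theorem1} as the base case $n=3$ and showing that every cycle on $n\ge 4$ transactions can be shortened to one on $n-1$ transactions (or collapsed immediately to two). Throughout I would exploit the fact that all operations in the cycle act on the \emph{same} object $x$, so that any two of them that include a write must conflict and, by Definition~\ref{def:EdgeType}, are comparable under the schedule order $<_s$. This totality of $<_s$ on the operations of $x$, together with transitivity, is the engine of every reduction step.

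Write the cycle as $t_1\to t_2\to\cdots\to t_n\to t_1$, where each edge $(p_i q_{i+1})$ means $p_i\in t_i$, $q_{i+1}\in t_{i+1}$, $p_i <_s q_{i+1}$, and the two operations conflict (indices taken cyclically modulo $n$). I would split on whether some edge has a \emph{write} as its source operation. \textbf{Case 1:} some $p_i=W$. Then $p_i$ conflicts with the source $p_{i+1}$ of the next edge. If $p_{i+1}<_s p_i$ there is a back-edge $t_{i+1}\to t_i$, which together with $(p_i q_{i+1})$ closes a two-transaction cycle on $\{t_i,t_{i+1}\}$. Otherwise $p_i<_s p_{i+1}<_s q_{i+2}$, hence $p_i<_s q_{i+2}$; since $p_i=W$ conflicts with $q_{i+2}$, this produces a chord $t_i\to t_{i+2}$ that excises $t_{i+1}$ and yields an $(n-1)$-cycle. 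This is precisely the $p_1=W$ argument of Theorem~\ref{theorem1} reused locally.

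\textbf{Case 2:} every edge has a read as its source, i.e. $p_i=R$ for all $i$; then each target $q_{i+1}$ must be a write $W_{i+1}$ for the conflict to hold. Picking three consecutive transactions with edges $R_1<_s W_2$ and $R_2<_s W_3$, the two writes $W_2$ and $W_3$ are comparable under $<_s$. If $W_2<_s W_3$ then $R_1<_s W_2<_s W_3$ gives a read--write chord $t_1\to t_3$, excising $t_2$ and shortening the cycle. If instead $W_3<_s W_2$ then $t_3\to t_2$ is a write--write edge which, together with $t_2\to t_3$, closes a two-transaction cycle on $\{t_2,t_3\}$. Since Case 1 and Case 2 are exhaustive, every cycle of length $\ge 3$ either collapses to a $2$-cycle or loses one transaction, and iterating terminates at a two-transaction cycle.

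The main obstacle I expect is organizing the case analysis so that it is genuinely exhaustive and that each branch produces a \emph{valid} conflict edge in the sense of Definition~\ref{def:EdgeType}: one must check that the newly created edge both conflicts (immediate, since $x$ is shared and at least one endpoint is a write) and respects $<_s$ (guaranteed by transitivity of the order on the operations of $x$). The delicate branch is Case 2, where no single forward short-circuit works uniformly and the \emph{direction} of the reduction is dictated by the relative order of the two intervening writes. A final point to verify is that excising a middle transaction never identifies two distinct cycle vertices; this holds because $n\ge 3$ keeps $t_i$ and $t_{i+2}$ distinct, so the reduced object is again a simple cycle on fewer transactions.
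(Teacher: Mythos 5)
Your proposal is correct and follows essentially the same route as the paper's proof: induction on the number of transactions with Theorem~\ref{theorem1} as the base case, a case split on whether an edge's source operation is a write or a read, and the use of comparability of conflicting operations on the single shared object plus transitivity of $<_s$ to either close a 2-transaction cycle via a back-edge or excise a vertex via a chord and invoke the induction hypothesis. If anything, your Case~2 is slightly tighter than the paper's analogous step: by comparing the two writes $q_2$ and $q_3$ you guarantee the new relation is a genuine conflict, whereas the paper asserts a conflict $(p_1 p_n)$ at the corresponding point even though both operations can be reads, a small gap your variant avoids.
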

\begin{proof}
  Suppose the conflict graph is $G = \{ \{t_1,t_2,\dots t_n\},\\ \{(p_1q_2[x]),(p_2q_3[x]),\dots ,(p_n q_1[x])\}\}$ which $(pq)$ also be one of $\{ WW,WR,RW\}$. We have proved $n=3$ in theorem \ref{theorem1}.
  
  Suppose when $N_T<n$ the theorem is true. When $N_T=n$, edges of the conflict cycle is $E(n)=(p_1q_2),(p_2q_3),\dots,(p_{n-1} q_n),(p_n q_1)$. 

  When $p_1=W$ in $(p_1q_2[x])$, no matter what operation $p_{n-1}$ (W or R) is, there will be a conflict relationship $(p_1p_{n-1}[x])$ or $(p_{n-1}p_1[x])$.If $(p_1p_{n-1}[x])$, because of existing the conflict relationship $(p_{n-1}q_n[x])$, then we get a 3-transaction conflict graph $G'=\{ \{t_1,t_{n-1},t_n\},\{(p_1p_{n-1}[x]), (p_{n-1}q_{n}[x]),(p_{n}q_1[x])\}\}$. According to the theorem \ref{theorem1}, $G'$ can be reduced to a 2-transactions conflict cycle.If $(p_{n-1}p_1[x])$, because of existing the conflict relationship $(p_{n-2}q_{n-1}[x])$, then we get a $n-1$-transaction conflict graph $G'=\{ \{t_1,t_2,\dots ,t_{n-1}\},\{(p_1q_2[x]),$$ \dots,$$(p_{n-2}q_{n-1}[x]),$ $(p_{n-1}p_1[x])\}\}$. According to the assumption, the graph $G'$ can be reduced to a 2-transaction conflict cycle.
 
  When $p_1=R$ in $(p_1q_2[x])$, $q_2$ can only be $W$ to constitute the conflict. No matter what operation $p_n$ (W or R) in $(p_n q_1[x])$ is, there will be a conflict relationship $(q_2 p_n[x])$ or $(p_n q_2[x])$. If $(q_2p_n[x])$, because of existing conflict relationships $(p_1q_2[x])$, then we have $p_1 <_s q_2 <_s p_n$, which means that it has the conflict relationship $(p_1 p_n)$. So we get the two transaction graph $G' =\{ \{t_1,t_n\},\{(p_1 p_n[x]),(p_n q_1[x])\} \}$. If $(p_n q_2[x])$, because of existing conflict relationships $(p_n q_1[x])$, then we get a $n-1$-transaction conflict graph $G'=\{ \{t_2,t_3,\dots ,t_{n}\},\{(p_2q_3[x]),\dots, (p_{n-1}q_{n}[x]),\\ (p_n q_2[x])\}\}$. According to the assumption, the graph $G'$ can be reduced to a 2-transaction conflict cycle.
\end{proof}
\begin{example}
  The 5-transactions conflict cycle can be reduced to the 2-transactions conflict cycle, and the reduction process is shown in Figure \ref{example 5-SDA}. 
    \begin{figure}[h]
      \centering
      \includegraphics[width= 1\linewidth]{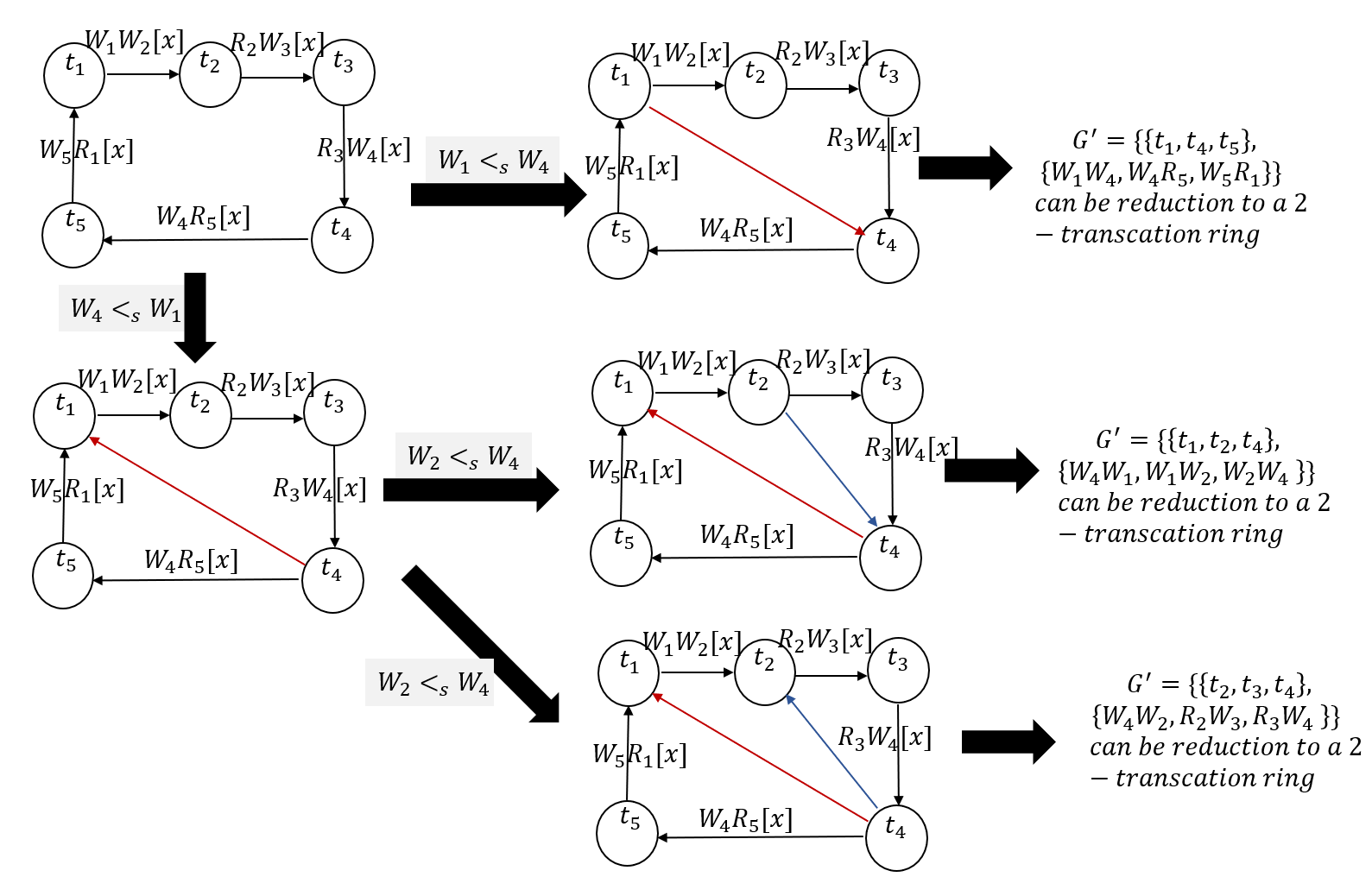}
      \caption{Reduction processes of 5-transactions conflict cycle.}
      \label{example 5-SDA}
    \end{figure}
\end{example}
According to Theorems \ref{theorem1} and \ref{theorem2}, if only one object in the conflict cycle, it can be reduced to a 2-transactions cycle. Therefore, a univariate circle can easily detect a three-operation circle $O_i -O_j-O_i$. 

\begin{theorem}\label{theorem3}
 If there are $n$ adjacent edges in conflict graph operation on the same variable, it can be reduced to one or two adjacent edges,  otherwise it will form a 2-transactions conflicts cycle by itself.  
\end{theorem}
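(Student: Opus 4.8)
The plan is to argue by induction on the number $n$ of adjacent edges, reusing the transitivity mechanism that drove the proofs of Theorems~\ref{theorem1} and~\ref{theorem2}. The two facts I would rely on are: (i) all operations along the chain act on one object $x$, so the schedule order $<_s$ totally orders them; and (ii) every write on $x$ conflicts with every other operation on $x$, so a write can always be ``pushed through'' a neighbouring operation to create a new conflict edge by transitivity. I would write the chain as $(p_1 q_2[x]),(p_2 q_3[x]),\dots,(p_n q_{n+1}[x])$ spanning $t_1,\dots,t_{n+1}$, where $q_i$ and $p_i$ denote the two operations of $t_i$ carrying the incoming and outgoing edges.

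First I would dispose of the base cases $n\le 2$ and then show how to remove one interior transaction from the head of the chain, exactly as in Theorem~\ref{theorem1}. If $p_1=W$, then $p_1$ conflicts with $p_2$; when $p_1<_s p_2$ transitivity with $p_2<_s q_3$ yields the edge $(p_1 q_3[x])$, merging the first two edges into one so the induction hypothesis finishes, whereas when $p_2<_s p_1$ the pair $(p_2 p_1[x]),(p_1 q_2[x])$ is already a cycle on $\{t_1,t_2\}$ — this is precisely the ``otherwise'' branch that produces a $2$-transaction cycle on its own. If $p_1=R$, then edge~$1$ forces $q_2=W_2$, a write that I would adopt as the new head of the chain; pushing $W_2$ through the interior operations collapses the intervening write–write conflicts by transitivity while preserving the leading read.

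Next I would analyse the endpoints to pin down why the residue has \emph{one or two} edges rather than always one. Reduction stops only when no interior transaction can be eliminated, and at that point the surviving operations are the head $p_1$ and the tail $q_{n+1}$. If either of them is a write, transitivity along the totally ordered writes on $x$ collapses the whole chain into a single $WW$, $WR$, or $RW$ edge. If instead both $p_1$ and $q_{n+1}$ are reads, then edge~$1$ is $R_1 W_2$ and edge~$n$ is $W_n R_{n+1}$, and since two reads never conflict the endpoints cannot be joined; the chain collapses to the irreducible two-edge pattern $R[x]\,W[x]\,R[x]$ (an $RW$ followed by a $WR$), which accounts for the ``two adjacent edges'' case.

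The step I expect to be the main obstacle is the read-headed case $p_1=R$: here I must verify that promoting $W_2$ to the head and transiting it through all interior write–write conflicts genuinely removes one transaction at each stage without accidentally discarding the front read, handling the subcase where the pivot $W_2$ and the outgoing operation $p_2$ are distinct operations of the same transaction $t_2$ (which carry no conflict edge between themselves). Equivalently, the crux is to prove that the only configuration resisting further shortening is the $RW$–$WR$ pattern with read endpoints, so that every other branch must either merge two edges into one or expose a $2$-transaction cycle, thereby maintaining the induction invariant ``chain of conflicts on $x$ with at most one read at each end'' throughout.
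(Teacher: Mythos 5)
Your proof is correct, and it rests on the same transitivity mechanism as the paper's --- pivot a write through the schedule order on $x$, branching on whether the induced conflict points forward (merge by transitivity) or backward (close a cycle) --- but it is organized differently. The paper performs a single global comparison: it pits the head pivot ($p_1$ when $p_1=W$, otherwise the write $q_2$) directly against $p_{n-1}$; the forward branch collapses the entire chain to one edge (resp.\ two edges, preserving the leading read) in one step, while the backward branch closes almost the whole chain into a long conflict cycle that is then handed to Theorem~\ref{theorem2} for reduction to a $2$-transaction cycle. You instead peel one transaction at a time from the head by induction, so your blocking branch always produces an \emph{immediate} $2$-transaction cycle (e.g.\ $(p_2p_1[x]),(p_1q_2[x])$, or $(q_3q_2[x]),(p_2q_3[x])$ in the read-headed case), making the argument self-contained with no appeal to Theorem~\ref{theorem2}; the paper buys brevity by leveraging that theorem, you buy independence from it. Your endpoint analysis is also sharper: the paper stops at two edges whenever the head is a read, whereas you note the residue is a single $WW$, $WR$ or $RW$ edge unless \emph{both} endpoints are reads, the only irreducible pattern being $R[x]\,W[x]\,R[x]$ (an $RW$ followed by a $WR$) --- a refinement still within the theorem's ``one or two edges.'' You also explicitly handle the subtlety that $q_2$ and $p_2$ are distinct operations of $t_2$ carrying no conflict edge between themselves, which the paper uses silently when it adopts $q_2$ as pivot. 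One small caution: your closing invariant ``at most one read at each end'' is loosely phrased, since interior transactions may contribute reads (e.g.\ $q_i=R$ whenever $p_{i-1}=W$); this does not damage the argument, because your pivots are always writes and a write on $x$ conflicts with every operation of another transaction on $x$ regardless of its type.
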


\begin{proof}
  Suppose the conflict graph if $G={V,E}$, where $V$ is the set of transactions have conflict relations $V={t_1,t_2,\dots,t_n}$, and $E$ is the set of conflict relations $E=\{(p_1q_2),(p_2q_3),\dots, (p_{n-1}q_n)\}[x]$.
  
  When $p_1 = W$ in $(p_1q_2)$, no matter what the operation $p_{n-1}$(W or R) is, there is a conflict relationship$(p_1p_{n-1})$ or $(p_{n-1} p_1)$. If $(p_1p_{n-1})$, we have $p_1<_s p_{n-1}<_s q_n$ by the existing conflict relation $(p_{n-1} q_n)$. Thus these adjacent edges in the conflict graph can be reduced to one adjacent edges $E = \{(p_1 q_n) \}$. If $(p_{n-1}p_1)$, it will be a cycle graph $G' =\{V',E'\}$ with $E'=\{(p_{n-1}p_1),(p_1q_2),\dots, \\ (p_{n-2}q_{n-1})\}$. Thus the graph $G'$ can be reduced to a 2-transactions cycle which is proved by theorem \ref{theorem2}.
  
   When $p_1 = R$ in $(p_1q_2)$, $q_2$ only can be $W$ to constitute the conflict relation. no matter what the operation $p_{n-1}$(W or R) is, there is a conflict relationship$(q_2 p_{n-1})$ or $(p_{n-1} q_2)$. If $(q_2 p_{n-1})$, we have $q_2<_s p_{n-1}<_s q_n$ by the existing conflict relation $(q_2 q_n)$. Thus these adjacent edges in the conflict graph can be reduced to two adjacent edges $E = \{(p_1 q_2),(q_2, p_n) \}$. If $(p_{n-1} q_2)$, it will be a cycle graph $G' =\{V',E'\}$ with $E'=\{(p_{n-1}q_2),(p_2 q_3),\dots, \\ (p_{n-2} q_{n-1})\}$. Thus the graph $G'$ can be reduced to a 2-transactions cycle which is proved by theorem \ref{theorem2}.
\end{proof}
 
\begin{example}
  As is shown in Figure \ref{theropic3}, there are $3$ adjacent edges acting on the same object $x$ in the conflict graph.  It can be reduced to two adjacent edges or a 2-transactions cycle graph.
     \begin{figure}[h]
      \centering
      \includegraphics[width= 0.8\linewidth]{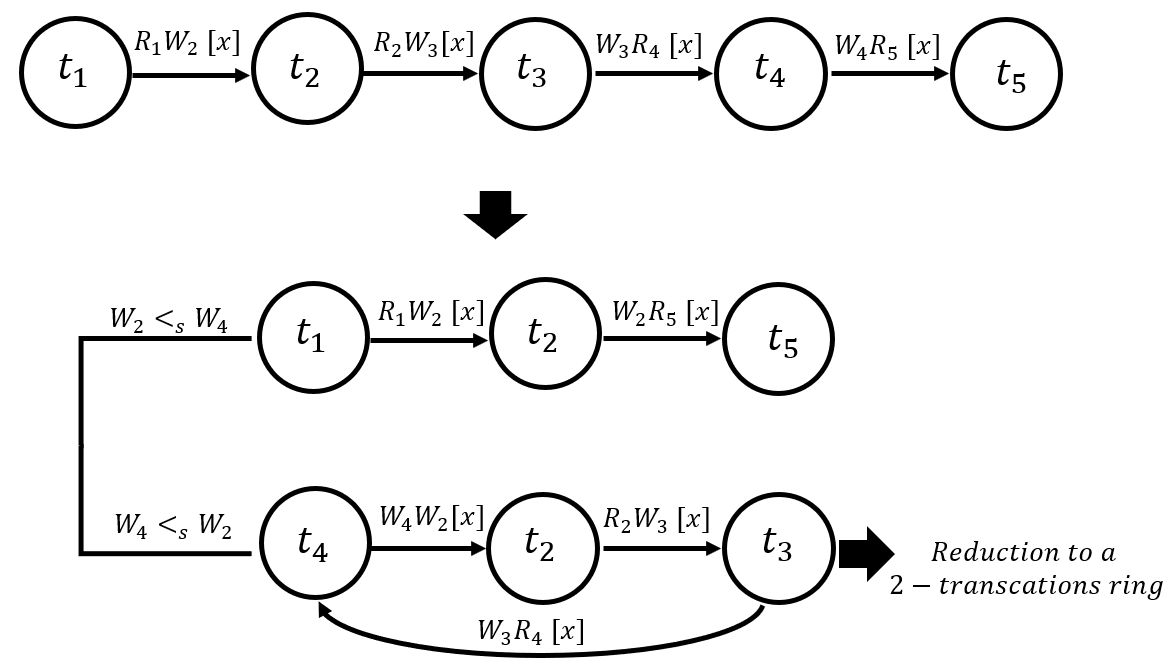}
      \caption{Reduction processes of 5-transactions adjacent edges.}
      \label{theropic3}
     \end{figure}
\end{example}

\begin{theorem}\label{theorem4}
  If there are $2$ edges of the same object with gaps in the conflict cycle, it can be reduced to one or two adjacent edges in the cycle graph.
\end{theorem}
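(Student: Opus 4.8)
The plan is to reuse the engine behind Theorems \ref{theorem1}--\ref{theorem3}: by Definition \ref{def:EdgeType} any two operations on the same object with at least one write conflict, and the orientation of that conflict is fixed by the schedule order $<_s$. Combining this with the transitivity $\alpha <_s \beta <_s \gamma \Rightarrow (\alpha\gamma) \in conf$ lets me splice a ``shortcut'' edge on $x$ across the gap, and I will then show that every configuration collapses the two separated edges into a single $x$-edge, into two adjacent $x$-edges, or into a strictly smaller cycle already handled by Theorems \ref{theorem2} and \ref{theorem3}.

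First I would normalize the cycle. Write it with transactions $t_1,\dots,t_n$ and edges $(p_1q_2),(p_2q_3),\dots,(p_nq_1)$, where the two edges on $x$ are $(p_1q_2)[x]$ and $(p_kq_{k+1})[x]$ with $k\ge 3$, and every edge strictly between them acts on an object other than $x$; this is precisely the meaning of ``two edges of the same object with gaps.'' Since $e_1=(p_1q_2)[x]$ and $e_2=(p_kq_{k+1})[x]$ are conflicts, each carries at least one write, so I can always instantiate Definition \ref{def:EdgeType} between a chosen operation of $e_1$ and a chosen operation of $e_2$.

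Next I would run the case split on $p_1$, exactly as in Theorem \ref{theorem3}. If $p_1=W$, I relate $p_1$ to $p_k$: when $p_1<_s p_k$, transitivity with $p_k<_s q_{k+1}$ yields $(p_1q_{k+1})[x]$, a single edge $t_1\to t_{k+1}$ that swallows $e_1$, the gap, and $e_2$, leaving one $x$-edge in the reduced cycle $t_1\to t_{k+1}\to\cdots\to t_1$; when $p_k<_s p_1$, transitivity with $p_1<_s q_2$ yields $(p_kq_2)[x]$, which together with the gap path $t_2\to\cdots\to t_k$ closes a strictly smaller cycle carrying a single $x$-edge. If instead $p_1=R$, then $q_2=W$ is forced for $e_1$ to be a conflict, and I relate $q_2$ to $p_k$: when $q_2<_s p_k$, transitivity gives $(q_2q_{k+1})[x]$, an edge $t_2\to t_{k+1}$ adjacent to $e_1$ at $t_2$, so the two $x$-edges become adjacent and Theorem \ref{theorem3} finishes the reduction; when $p_k<_s q_2$, I obtain $(p_kq_2)[x]$ and again close a strictly smaller cycle carrying a single $x$-edge. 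In every branch the number of $x$-edges drops from two to one, or the two become adjacent.

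The main obstacle I expect is the orientation bookkeeping rather than any deep idea: for each branch I must verify that the manufactured pair really satisfies Definition \ref{def:EdgeType} (at least one of the two spliced operations is a write, and the order inherited from $<_s$ points the intended way), and that the remaining edges survive the splice so the residual graph is still a genuine conflict cycle. To make the argument terminate cleanly I would take the number of edges on $x$ as the progress measure: each application either reduces it from two to one, or produces two adjacent $x$-edges that Theorem \ref{theorem3} turns into one or two, while the backward-shortcut branches strictly shrink the transaction count and are discharged by Theorems \ref{theorem2} and \ref{theorem3}. Arguing that these measures never increase, and that the smaller cycle extracted in the backward branches is consistent with the paper's convention of tracking the earliest cycle, is the delicate part of the write-up.
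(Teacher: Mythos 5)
Your skeleton is the paper's own: case-split on whether the leading operation $p_1$ of the first $x$-edge is a write or a read, use the forced conflict between the relevant operation and the leading operation $p_k$ of the second $x$-edge, and then either splice a transitivity shortcut (giving one edge, or two adjacent edges) or fall back to a strictly smaller cycle discharged by Theorems \ref{theorem2} and \ref{theorem3}. Three of your four branches coincide with the paper's proof.

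The fourth branch, however, contains a step that fails. In the case $p_1=W$ with $p_k<_s p_1$, you splice $(p_kq_2)[x]$ from $p_k<_s p_1<_s q_2$ and cut $t_1$ out of the cycle. But transitivity of $<_s$ alone does not make $(p_kq_2)$ an edge: Definition \ref{def:EdgeType} requires at least one of the two operations to be a write, and here both may be reads. Concretely, take $e_1=W_1R_2[x]$ (so $q_2=R$, legal since $p_1=W$ carries the conflict) and $e_2=R_kW_{k+1}[x]$ (so $p_k=R$, legal since $q_{k+1}=W$ carries it), with $R_k<_s W_1$: your manufactured pair is $R_kR_2[x]$, which is no conflict at all, so the ``strictly smaller cycle carrying a single $x$-edge'' you build on it does not exist. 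The paper avoids this by not excising $t_1$: it keeps the conflict $(p_kp_1)[x]$, which is guaranteed precisely because $p_1=W$, and forms the smaller cycle $t_1\to t_2\to\cdots\to t_k\to t_1$, in which the two $x$-edges $(p_kp_1)$ and $(p_1q_2)$ are now \emph{adjacent}, so Theorem \ref{theorem3} finishes the reduction. This also corrects your progress-measure claim: after the repair this branch yields two adjacent $x$-edges (merged), not one, which is still within the theorem's stated conclusion. Note the asymmetry your uniform ``splice across $t_1$'' move misses: the analogous branch $p_1=R$, $p_k<_s q_2$ is safe only because there $q_2$ is forced to be a write. You explicitly flagged the obligation to check Definition \ref{def:EdgeType} for each manufactured pair, but the branch as written does not survive that check.
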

\begin{proof}
  Suppose the conflict cycle graph is $G=\{ V, E\}$, where $V$ is the set of transactions $V=\{ t_1,t_2,\dots, t_n \}$, and $E$ is the set of conflict relations $E = \{ (p_1q_2[obj_1]),(p_2q_3[obj_2]), \dots, (p_n q_1[obj_n]) \}$. If there are two partial edges with gaps on the same object $(p_i q_j[obj])$ and $(p_s q_t[obj])$ for $j<s-1<s$.
  
  When $p_i = W$ in $(p_i q_j)$, no matter what the operation $p_{s}$(W or R) is, there is a conflict relationship$(p_i p_{s})$ or $(p_{s} p_i)$. If $(p_i p_{s})$, we have $p_i<_s p_{s}<_s q_t$ by the existing conflict relation $(p_{s} q_t)$. Thus edges in the conflict graph can be reduced to one adjacent edges $E = \{\dots, (p_i q_n)[obj], \dots \}$. If $(p_{s} p_i)$, it will be a cycle graph $G' =\{V',E'\}$ with $E'=\{(p_{s} p_i)[obj],(p_i q_j)[obj],\dots, (p_{s-1}q_{s})[obj_{s-1}]\}$. Thus the two edges with gaps are merged together.
  
   When $p_i = R$ in $(p_i q_j)$, $q_j$ only can be $W$ to constitute the conflict relation. no matter what the operation $p_s$(W or R) is, there is a conflict relationship$(q_2 p_s)$ or $(p_s q_2)$. If $(q_2 p_s)$, we have $q_2<_s p_s<_s q_t$ by the existing conflict relation $(q_2 q_t)$. Thus gaps edges in the conflict graph can be reduced to two adjacent edges $E = \{(p_i q_j),(q_j, p_t) \}$. If $(p_s q_j)$, it will be a cycle graph $G' =\{V',E'\}$ with $E'=\{(p_s q_2)[obj],(p_2 q_3)[obj_2],\dots, (p_{s-1} q_s)[obj_{s-1}]\}$. Thus the two edges with gaps are merged together.
\end{proof}

\begin{example}
  We simplify the 3-objects and 5-transactions conflict cycle, and the procession is shown in Figure \ref{theorpic4}. We reduce gaps about edges about object $x$.
     \begin{figure}[h]
      \centering
      \includegraphics[width= 0.65\linewidth]{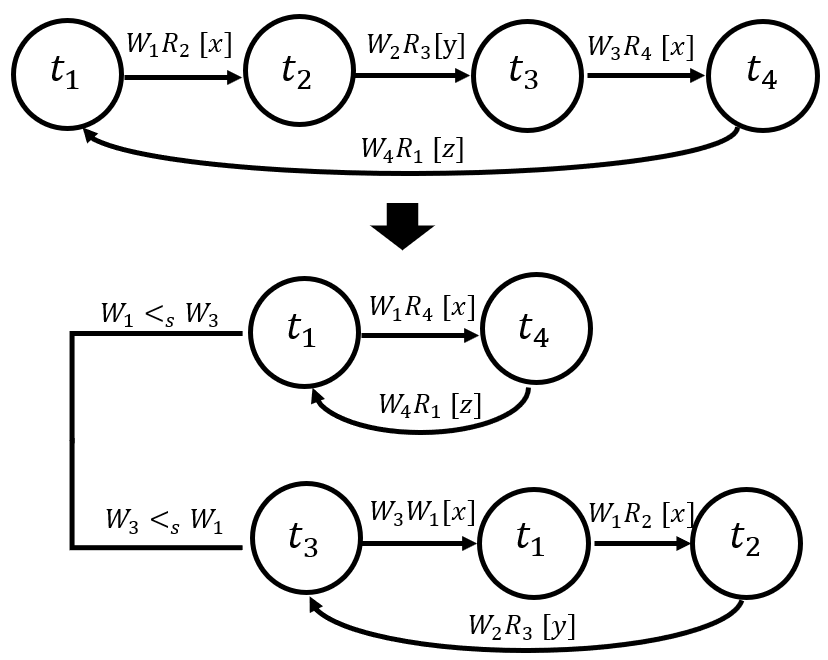}
      \caption{Reduction processes of 5-transactions partially ordered cycle .}
      \label{theorpic4}
     \end{figure}
\end{example}

Based on the theorem \ref{theorem3} and the theorem \ref{theorem4}, we can merge the operation on the same objects and the longest operations are two edges as $RW[x]-WR[x]$ for a conflict cycle graph. Therefore, the edges of the conflict cycle graph can be sorted according to the order for appearance of the objects. And there are only $2N_{obj}$ edge sets at most (figure\ref{figure2}), where $N_{obj} $ is the number of objects in a cycle.

\begin{figure}[h]
  \centering
  \includegraphics[width=0.8 \linewidth]{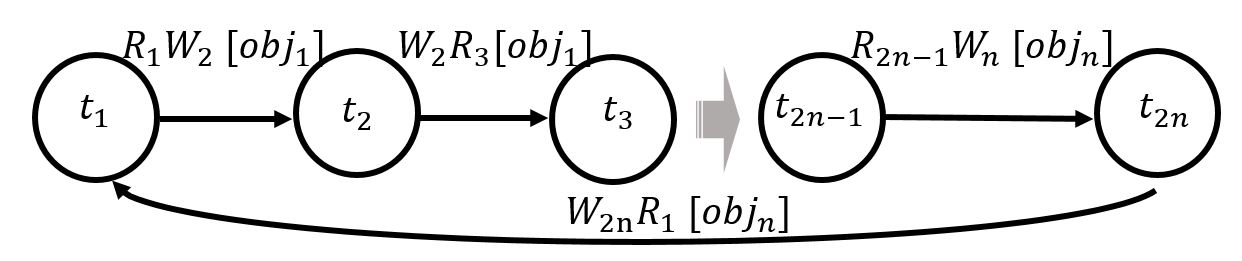}
  \caption{The maximum conflict cycle graph of multi-transactions with $N$-objects.}
  \label{figure2}
\end{figure}

\begin{example}
  We summarize all possible forms of the one-object conflict cycle graph. According to the theorem \ref{theorem1} and \ref{theorem2}, there are 2 transactions in the cycle. The possible combinations are as follows \label{example3}
  $$
  R_1W_2[x]-R_2W_1[x]; \quad R_1W_2[x]-W_2R_1[x]; \quad R_1W_2[x]-W_2W_1[x];
  $$
  $$
  W_1R_2[x]-R_2W_1[x]; \quad W_1R_2[x]-W_2R_1[x]; \quad W_1R_2[x]-W_2W_1[x];
  $$
  $$
  W_1W_2[x]-R_2W_1[x]; \quad W_1W_2[x]-W_2R_1[x]; \quad W_1W_2[x]-W_2W_1[x].
  $$
\end{example}

\begin{example}
  We summarize all possible forms of the two-objects conflict cycle graph. Suppose that the two-objects conflict cycle graph is $G = \{ V, E\}$, where $V$ is the set of transactions, $E$ is conflicts as $\{ WW, WR, RW \}$, and the objects is $\{ x,y \}$. According to the theorem \ref{theorem3} and the theorem \ref{theorem4}, there are at most 4 transactions in the cycle, and the corresponding conflicts is $R_1W_2[x]-W_2R_3[x]-R_3W_4[y]-W_4R_1[y]$.

  If there are three transactions in the cycle, the possible combinations are as follows.
  $$
  R_1W_2[x]-R_2W_3[y]-W_3R_1[y]; \quad R_1W_2[x]-W_2R_3[x]-R_3W_1[y];
  $$
  $$
  W_1R_2[x]-R_2W_3[y]-W_3R_1[y]; \quad R_1W_2[x]-W_2R_3[x]-W_3R_1[y];
  $$
  $$
  W_1W_2[x]-R_2W_3[y]-W_3R_1[y]; \quad R_1W_2[x]-W_2R_3[x]-W_3W_1[y].
  $$
  If there are two transactions in the cycle, the possible combinations are as follows.
  $$
  R_1W_2[x]-R_2W_1[y]; \quad R_1W_2[x]-W_2R_1[y]; \quad R_1W_2[x]-W_2W_1[y];
  $$
  $$
  W_1R_2[x]-R_2W_1[y]; \quad W_1R_2[x]-W_2R_1[y]; \quad W_1R_2[x]-W_2W_1[y];
  $$
  $$
  W_1W_2[x]-R_2W_1[y]; \quad W_1W_2[x]-W_2R_1[y]; \quad W_1W_2[x]-W_2W_1[y].
  $$
\end{example}

We combine conflicting partial orders containing predicate relations constructing a cycle.  If the second type of conflict is contained in the cycle, it is a predicate type data anomaly. Otherwise, it is an entity type data anomaly.  All the combinations of bilateral cycles are shown in Table \ref{table:predicate_anomalies}.

\begin{table*}[ht]
\caption{Data anomaly classification, formal expression, and their edges combinations in the conflict cycles. $N_D$ stands for the number of edges.} 
\footnotesize
\setlength\extrarowheight{1.5pt}

\begin{tabular}{|c|c|l|l|l|}
\hline
\multicolumn{2}{|c|}{\textbf{Types   of Anomalies}} & \multicolumn{1}{c|}{\textbf{Anomalies}}  & \multicolumn{1}{c|}{\textbf{Mathematical   Patterns}}                                                        & \multicolumn{1}{c|}{\textbf{Edges   Combinations}}                                                                               \\ \hline
\multirow{9}{*}{RAT}      & SDA                     & Dirty   Read                             & $W_i   [x_m ] \dots R_j [x_m] \dots A_i$                                                                     & $W_i R_j [x] -R_j A_i   [x] $                                                                                                    \\ \cline{2-5} 
                          & SDA                     & Non-repeatable   Read                    & $R_i   [x_m ] \dots W_j [x_{m+1} ] \dots R_i [x_{m+1} ]$                                                     & $R_i W_j [x] -W_j R_i   [x] $                                                                                                    \\ \cline{2-5} 
                          & SDA                     & Intermediate   Read                      & $W_i   [x_m ] \dots R_j [x_m ] \dots (C_j) \dots W_i [x_{m+1} ]$                                             & $W_i R_j [x] -R_j   W_i/R_j C_j W_i [x] $                                                                                        \\ \cline{2-5} 
                          & \textbf{DDA}            & \textbf{Write-Read   Skew Committed}     & \textbf{$W_i   [x_m ] \dots R_j [x_m ] \dots W_j [y_n ] \dots C_j \dots R_i [y_n ]$}                         & \textbf{$W_i R_j [x]-W_j C_j   R_i [y]$}                                                                                         \\ \cline{2-5} 
                          & \textbf{DDA}            & \textbf{Double-Write   Skew 1 Committed} & \textbf{$W_i   [x_m ] \dots R_j [x_m ] \dots W_j [y_n ] \dots C_j \dots W_i [y_{n+1} ]$}                     & \textbf{$W_i R_j [x]-W_j C_j   W_i [y]$}                                                                                         \\ \cline{2-5} 
                          & \textbf{DDA}            & \textbf{Write-Read   Skew}               & \textbf{$W_i   [x_m ] \dots R_i [x_m ] \dots W_j [y_n ] \dots R_i [y_n ]$}                                   & \textbf{$W_i R_j [x]-W_j R_i   [y]$}                                                                                             \\ \cline{2-5} 
                          & DDA                     & Read   Skew                              & $R_i   [x_m ] \dots W_j [x_{m+1} ] \dots W_j [y_n ] \dots R_i [y_n ]$                                        & $R_i W_j [x]-W_j R_i   [y]$                                                                                                      \\ \cline{2-5} 
                          & \textbf{DDA}            & \textbf{Read   Skew 2}                   & \textbf{$W_i   [x_m ] \dots R_j [x_m ] \dots R_j [y_n ] \dots (C_j) \dots W_i [y_{n+1} ]$}                   & \textbf{$W_i R_j [x]-R_j W_i   [y]  / C_j [y]$}                                                                                  \\ \cline{2-5} 
                          & \textbf{MDA}                     & \textbf{Step   RAT}                               & $   \dots W_{i} [x_m] \dots R_{i} [x_m] \dots $, and $N_D \geq 3$                                    &                                                                                                                                  \\ \hline
\multirow{15}{*}{WAT}     & SDA                     & Dirty   Write                            & $W_i   [x_m ] \dots W_j [x_{m+1} ] \dots A_i/C_i$                                                            & $W_i W_j [x] -W_j   A_i/C_i [x] $                                                                                                \\ \cline{2-5} 
                          & \textbf{SDA}            & \textbf{Lost   Self Update Committed}    & \textbf{$W_i   [x_m ] \dots W_j [x_{m+1} ] \dots C_j \dots R_i [x_{m+1} ]$}                                  & \textbf{$W_i W_j/W_i R_j [x]   -W_j C_j R_i [x]/W_i [x] $}                                                                       \\ \cline{2-5} 
                          & \textbf{SDA}            & \textbf{Full-Write   Committed}          & \textbf{$W_i   [x_m ] \dots W_j [x_{m+1} ] \dots C_j \dots W_i [x_{m+2} ]$}                                  & \textbf{$W_i W_j [x] -R_j C_j   W_i [x] $}                                                                                       \\ \cline{2-5} 
                          & \textbf{SDA}            & \textbf{Full-Write}                      & \textbf{$W_i   [x_m ] \dots W_j [x_{m+1} ] \dots W_i [x_{m+2} ]$}                                            & \textbf{\begin{tabular}[c]{@{}l@{}}$W_i   W_j/W_i R_j  [x] -W_j W_i [x] $ or   \\      $W_i W_j [x] -R_j W_i [x] $\end{tabular}} \\ \cline{2-5} 
                          & SDA                     & Lost   Update                            & $R_i   [x_m ] \dots W_j [x_{m+1} ] \dots W_i [x_{m+2} ]$                                                     & $R_i W_j [x] -W_j   W_i/R_j W_i [x] $                                                                                            \\ \cline{2-5} 
                          & \textbf{SDA}            & \textbf{Lost   Self Update}              & \textbf{$W_i   [x_m ] \dots W_j [x_{m+1} ] \dots R_i [x_{m+1} ]$}                                            & \textbf{$W_i W_j/W_i R_j [x]   -W_j R_i [x] $}                                                                                   \\ \cline{2-5} 
                          & \textbf{DDA}            & \textbf{Double-Write   Skew 2 Committed} & \textbf{$W_i   [x_m ] \dots W_j [x_{m+1} ] \dots W_j [y_n ] \dots C_j \dots R_i [y_n ]$}                     & \textbf{$W_i W_j [x]-W_j C_j   R_i [y]$}                                                                                         \\ \cline{2-5} 
                          & \textbf{DDA}            & \textbf{Full-Write   Skew Committed}     & \textbf{$W_i   [x_m ] \dots W_j [x_{m+1} ] \dots W_j [y_n ] \dots  C_j \dots W_i [y_{n+1} ]$}                & \textbf{$W_i W_j [x]-W_j C_j   W_i [y]$}                                                                                         \\ \cline{2-5} 
                          & \textbf{DDA}            & \textbf{Full-Write   Skew}               & \textbf{$W_i   [x_m ] \dots W_j [x_{m+1} ] \dots W_j [y_n ] \dots W_i [y_{n+1} ]$}                           & \textbf{$W_i W_j [x]-W_j W_i   [y]$}                                                                                             \\ \cline{2-5} 
                          & \textbf{DDA}            & \textbf{Double-Write   Skew 1}           & \textbf{$W_i   [x_m ] \dots R_j [x_m ] \dots W_j [y_n ] \dots W_i [y_{n+1} ]$}                               & \textbf{$W_i R_j [x]-W_j W_i   [y]$}                                                                                             \\ \cline{2-5} 
                          & \textbf{DDA}            & \textbf{Double-Write   Skew 2}           & \textbf{$W_i   [x_m ] \dots W_j [x_m ] \dots W_j [y_n ] \dots W_i [y_{n+1}]$}                                & \textbf{$W_i W_j [x]-W_j R_i   [y]$}                                                                                             \\ \cline{2-5} 
                          & \textbf{DDA}            & \textbf{Read-Write   Skew 1}             & \textbf{$R_i   [x_m ] \dots W_j [x_{m+1} ] \dots W_j [y_n ] \dots W_i [y_{n+1} ]$}                           & \textbf{$R_i W_j [x]-W_j W_i   [y]$}                                                                                             \\ \cline{2-5} 
                          & \textbf{DDA}            & \textbf{Read-Write   Skew 2}             & \textbf{$W_i   [x_m ] \dots W_j [x_{m+1} ] \dots R_j [y_n ]    \dots (C_j) \dots W_i [y_{n+1} ]$}            & \textbf{\begin{tabular}[c]{@{}l@{}}$W_i   W_j [x]-R_j (C_j)W_i [y]$ or\\      $W_i W_j [x]-R_j C_j W_i [y]$\end{tabular}}        \\ \cline{2-5} 
                          & \multirow{2}{*}{\textbf{MDA}}    & \multirow{2}{*}{\textbf{Step   WAT}}              & $   \dots W_{i} [x_m] \dots W_{i} [x_{m+1}] \dots$, and $N_D \geq 3$,                                        &                                                                                                                                  \\ 
                          &                         &                                          & and not include $( \dots W_{j} [x_m ] \dots   R_{j} [x_m] \dots )$                                   &                                                                                                                                  \\ \hline
\multirow{8}{*}{IAT}      & \textbf{SDA}            & \textbf{Non-repeatable   Read Committed} & \textbf{$R_i   [x_m ] \dots W_j [x_{m+1} ] \dots C_j \dots R_i [x_{m+1} ]$}                                  & \textbf{$R_i W_j [x] -W_j C_j   R_i [x] $}                                                                                       \\ \cline{2-5} 
                          & \textbf{SDA}            & \textbf{Lost   Update Committed}         & \textbf{$R_i   [x_m ] \dots W_j [x_{m+1} ] \dots C_j \dots W_i [x_{m+2} ]$}                                  & \textbf{$R_i W_j [x] -W_j C_j   W_i/R_j C_j W_i [x] $}                                                                           \\ \cline{2-5} 
                          & \textbf{DDA}            & \textbf{Read   Skew Committed}           & \textbf{$R_i   [x_m ] \dots W_j [x_{m+1} ] \dots W_j [y_n ] \dots C_j \dots R_i [y_n ]$}                     & \textbf{$R_i W_j [x]-W_j C_j   R_i [y]$}                                                                                         \\ \cline{2-5} 
                          & \textbf{DDA}            & \textbf{Read-Write   Skew 1 Committed}   & \textbf{$R_i   [x_m ] \dots W_j [x_{m+1} ] \dots W_j [y_n ] \dots C_j \dots W_i [y_{n+1} ]$}                 & \textbf{$R_i W_j [x]-W_j C_j   W_i [y]$}                                                                                         \\ \cline{2-5} 
                          & \multirow{2}{*}{DDA}    & \multirow{2}{*}{Write   Skew}            & \multirow{2}{*}{$R_i   [x_m ] \dots W_j [x_{m+1} ] \dots R_j [y_n ]    \dots (C_j)  \dots W_i [y_{n+1}   ]$} & \multirow{2}{*}{$R_i W_j [x]-R_j   (C_j)W_i [y] $}                                                                                                \\ 
                          &                         &                                          &                                                                                                              &                                                                                                                                  \\ \cline{2-5} 
                          & \multirow{2}{*}{\textbf{MDA}}    & \multirow{2}{*}{\textbf{Step   IAT}}              & Not   include $( \dots W_{i} [x_m ] \dots R_{i} [x_m] \dots $                                            &                                                                                                                                  \\ 
                          &                         &                                          & and  $   \dots W_{j}[x_m ] \dots W_{j}[x_{m+1}] \dots )$, $N_D \geq 3$                                   &                                                                                                                                  \\ \hline
\end{tabular}
\label{table:anomaly_classification}
\end{table*}

\subsection{Conflict relations add Status}\label{sec_ConflictRelationsStatus}
 The impact of abort and commit in data consistency is also important operations besides reading and writing. Therefore, we add these two operations to the first type of conflict dependencies. Let $t_i, t_j$ be transactions in a schedule $s$, $t_i <_s t_j$. Two operations $p_i \in t_i$ and $q_j \in t_j $ are conflicts. Let $U$ reflect the status of the transaction still activates. Then conflicts can be extended to several situations below:
  \begin{itemize}
    \item[1.] $p_i-C_i-q_j-A_j/C_j/U_j$: The transaction $t_i$ is committed before the $t_j$ operations;
    \item[2.] $p_i-A_i-q_j-A_j/C_j/U_j$: The transaction $t_i$ is aborted before the $t_j$ operations;
    \item[3.] $p_i-q_j-C_i-A_j/C_j/U_j$: The transaction $t_i$ is committed after the $t_j$ operations;
    \item[4.] $p_i-q_j-A_i-A_j/C_j/U_j$: The transaction $t_i$ is aborted after the $t_j$ operations;
    \item[5.] $p_i-q_j-C_j-A_i/C_i/U_i$: The transaction $t_j$ is committed after the conflict operations;
    \item[6.] $p_i-q_j-A_j-A_i/C_i/U_i$: The transaction $t_j$ is aborted after the conflict operations.
    \item[7.] $p_i-q_j=\{ p_i-U_i-q_j-U_j,p_i-q_j-U_i-U_j, p_i-q_j-U_j-U_i \}$: Both transactions $t_i$ and $t_j$ are still active.
  \end{itemize} 

  Where situations $1-4$ are $t_i$ submitted or rolled back before $t_j$ operation, denoted as $(p_i, A_i/C_i, q_j)$ or $(p_i, q_j, A_i/C_i )$.  Situations $5-6$ are $t_j$ submitted or rolled back before $t_i$ operation, denoted as $(p_i, q_j, A_j/C_j)$; situation $7$ describes that both transactions in scheduling $s$ are not completed, denoted as $(p_i, q_j)$. i.e.,
  $$conf(s)= \{(p_i,A_i/C_i,q_j ), (p_i,q_j,A_i/C_i ), (p_i,q_j,A_j/C_j), (p_i,p_j) \}$$. 
  In the situation $2$, due to the timely rollback of $t_i$, the operation $p_i$ will not affect the operation $q_j$. So do the situation $6$. Substituting the three conflict dependencies $p q\in \{W W,W R,R W\}$ into the remaining 5 cases we can get 15 cases. Among them, $W_i W_j C_j$ indicate the committed $t_j$ after the conflict operations which has the same meaning of $W_i W_j$. So as $W_i R_j$ and $W_i R_j C_j$, and $R_i W_j$ and $R_i W_j C_j$. $W_i R_j C_i$ indicates the $t_i$ committed the operation $W_i$ which is same as $W_i R_j$. $R_i W_j C_i$ and $R_i W_j A_i$ indicates the $t_i$ committed or aborted the operation $R_i$ which cannot disturb the $W_j$ of transaction $t_j$. So the two cases are as the same mean as $R_i W_j$. Through the above discussion, these fifteen cases are sorted into 9 categories that we calls \textbf{partial order pairs(POP)}.
  \begin{itemize}
    \item[1.] $W_i C_i R_j=\{W_iC_iR_j\}$: The version written by $t_i$ and confirmed to be valid is read by $t_j$.
    \item[2.] $W_i C_i W_j =\{ W_i C_i W_j\}$:  The version written by $t_i$ is overwritten by $t_j$ with a newer version, but legal overwriting will not cause inconsistent data status;
    \item[3.] $R_i C_i W_j =\{ R_i C_i W_j\}$: The transaction $t_i$ committed the operation $R_i$ and the $t_j$ rewrite the variable.
    \item[4.] $W_i W_j= \{ W_i W_j, W_i W_j C_j\}$: The version written by $t_i$ is overwritten by $t_j$ with a newer version, so that there may be inconsistent data status;
    \item[5.] $W_i R_j= \{ W_i R_j, W_i R_j C_j, W_i R_j C_i \}$: The version written by $t_i$ read by $t_j$;
    \item[6.] $R_i W_j = \{ R_i W_j, R_i W_j C_j, R_i W_j C_i, R_i W_j A_i\}$: The version read by $t_i$, which is modified by $t_j$ to generate a new version, which may affect $t_i$ to read or modify the same variables;
    \item[7.] $W_i R_j A_i = \{ W_iR_jA_i\}$: The version read by $t_j$ is written by $t_i$. After being rolled back by $t_i$, $t_j$ may read a non-existent version;
    \item[8.] $W_i W_j C_i = \{ W_iW_jC_i\}$: The version written by $t_i$ is overwritten by the updated version of $t_j$, so that the value of the data item written by $t_j$ cannot be read after the occurrence of $C_i$.
    \item[9.] $W_i W_j A_i = \{ W_iW_jA_i\}$: The version written by $t_j$ was overwritten by $t_i$ with an older version due to rollback. 
  \end{itemize}
  For the \textit{POP} $1-6$, we can refine our conflict graph definition. That is to say, based on \cite{2002Concurrency}, the conflict graph we define only extends the traditional conflict relations types.
  
\begin{table}[]
\caption{The comparison between conflict serialization graph and conflict graph of this paper.}
\small
\begin{tabular}{p{1.5cm}|p{3cm}|p{3cm}}
\toprule
\hline
    
 &
 Conflict serialization graph
 &
 Extended conflict graph (We)
  \\ \hline

Purpose & Model conflict serialization &  Model data anomalies   \\ \hline 
\#Edge type & 3 (WW,WR,RW) &  9 (in \S \ref{sec_ConflictRelationsStatus})   \\ \hline 
Data structure & Directed cycle graph &  Directed cycle graph   \\ \hline 

Expressive ness & Limited to describe data anomalies, can not describe Dirty write, Dirty read, and Intermediate Reads & Can express all the data anomalies including new reported and predicate-based anomalies in this paper    \\ \hline 

Added value & Correlate to serializable schedule, but not to specific data anomalies & Correlate directed graph, data anomalies, and consistency together  \\ \hline 
\bottomrule

\end{tabular}
\label{table:conflict_graph_comparison}
\end{table}

  Obviously, the combination of the \textit{POP} $7-9$ itself produces a cycle in conflict graph.
  We will refine these three types in detail.
  
  In category $7$, we can split ${W_i R_j A_i}$ into ${W_i R_j [x]}$ and ${R_j A_i[x]}$, that is, category $7$ and ${R_j A_i[x]}$ are in the same variable The combination of operations under $x$, so the category $6$ can be abbreviated as $R_j A_i$, which can be expressed as figure \ref{RA}.
  \begin{figure}[ht!]
    \begin{subfigure}[t]{0.167\textwidth}
        \centering
        \includegraphics[width= 0.8\linewidth]{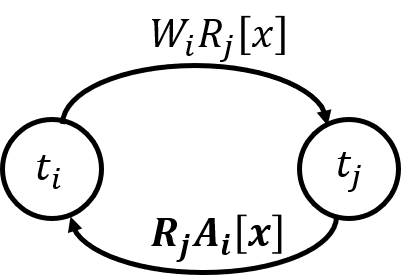}
        \caption{}
        \label{RA}
    \end{subfigure}%
    \begin{subfigure}[t]{0.167\textwidth}
        \centering
        \includegraphics[width= 0.8\linewidth]{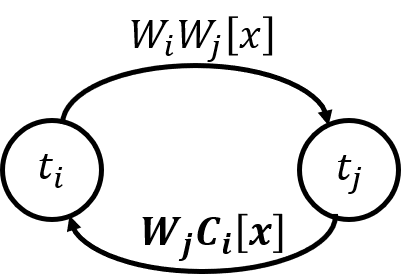}
        \caption{}
        \label{WC}
    \end{subfigure}%
    \begin{subfigure}[t]{0.167\textwidth}
        \centering
        \includegraphics[width= 0.8\linewidth]{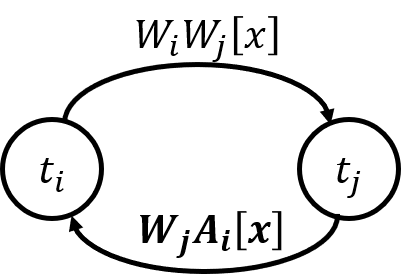}
        \caption{}
        \label{WA}
    \end{subfigure}%
\caption{The partial order (a) $R_j A_i$, (b) $W_j C_i$ or (c) $W_j A_i$ constitutes a cycle.}
  \end{figure}
  In category $8$, we can split ${W_i W_j C_i}$ into ${W_i W_j [x]}$ and ${W_j C_i[x]}$,so the category $8$ can be abbreviated as $W_j C_i$, which can be expressed as figure \ref{WC}.
  
According to Theorems \ref{theorem1} and \ref{theorem2}, a single object can only be a partial order cycle by $2$-transactions. At the same time, the bilateral cycle may be at most two variables. We denote the Single-variable by Double-transactions Anomalies as $SDA$, the Double-variables by Double-transactions Anomalies as $DDA$, and others denoted $MDA$. Therefore, we show all the combined results in Table \ref{table:anomaly_classification} and classify them according to the definition \ref{Types}. Bold fonts in the table indicate 22 newly added anomalies. Therefore, we get all the data anomalies types after classification in Table \ref{table:anomaly_classification}.

In Table \ref{table:anomaly_classification}, we did not distinguish conflicts relations with the predicates domain set. Similarly, if there is a conflicting edge of the predicate-based conflict in the anomaly cycle, it is a predicate-based anomaly. Otherwise, it is an entity-based anomaly. 
The definition of anomalies is also based on the conflict graph(Table \ref{table:conflict_graph_comparison}), \cite{adya2000generalized} only has three kinds of conflict relations(e.g., can not include Dirty Writes, Dirty Reads and Intermediate Reads), we have nine kinds of conflict relations, which makes all data anomalies(e.g., including Dirty Reads) can be included in the extended conflict graph. To the best of our knowledge, this is the first time to completely establish a one-to-one mapping of all data anomalies and conflict graphs.

\begin{definition}(\textbf{Consistency})\label{def:Consistency}
  Consistency = Directed acyclic graph = No data anomalies.
  No Consistency = Directed cyclic graph = Data anomalies.
\end{definition}



\section{Quantitative Research}\label{sec_QuantitativeResearch}
Mastering all data anomalies and being able to classify all data anomalies mean that we have the opportunity to do some quantitative work. Based on data anomalies classification, we quantify the probability of various data anomalies. Then, we quantitatively study the rollback rates of various concurrency control algorithms in different scenarios.

\subsection{Setup}
We evaluate the YCSB \cite{DBLP:conf/cloud/CooperSTRS10} and TPC-C \cite{tpcc} benchmark in a unified transaction processing framework called 3TS\footnote{https://github.com/Tencent/3TS}. 3TS is a open source research framework of concurrent control (CC) protocols based on Deneva \cite{DBLP:journals/pvldb/HardingAPS17}, which carries out quantitative research on data anomalies and CC protocols.

We run all experiments on a single machine with 56 cores (Platinum 8276 CPU@2.20GHz hyper-threading to 112 threads) and 10 TB memory. The statistics of workloads are shown in Table \ref{table:workload_description}. The YCSB contention is controlled by the skew factor of zip distribution access patterns. In our evaluation, low and high contentions are with 0 and 0.9 values for the skew parameter, representing uniform and extremely skew distributions, respectively. The Mix workload consists of 50\% Payment and 50\% NewOrder workloads.

\begin{table}[]
\caption{Workload description. u/r/w/i stands for update rate/read/write/insert, sum of the read, read, and insert is 1. Low contention is with uniform distribution of data access while high contention is with zip distribution and extreme skewness of data access.}
\begin{tabular}{c|c|c|c|c}
\toprule
\hline
 Benchmark&
 No &
  Workload &
  u/r/w/i &
  Skew 
  \\ \hline

\multirow{4}{*}{\textbf{YCSB}} & y1 & Main write & 0.9/0.1/0.9/0 & 0.5  \\ \cline{2-4} 

 & y2& Main read & 0.1/0.9/0.1/0 & 0.5  \\ \cline{2-4} 

 & y3& Low contention & 0.5/0.5/0.5/0 & 0.0  \\ \cline{2-4} 

 & y4& High contention & 0.5/0.5/0.5/0 & 0.9  \\ \hline 

\multirow{4}{*}{\textbf{TPCC}}  & t1& Payment & 1/0.2/0.6/0.2 & - \\ \cline{2-4}  

 & t2& Neworder & 1/0.2/0.6/0.2 & -  \\ \cline{2-4} 

 & t3& Mix & 1/0.2/0.6/0.2 & - \\ \hline 

\bottomrule

\end{tabular}
\label{table:workload_description}
\end{table}

\subsection{Quantitative Data Anomalies}\label{sec_TestandQuantitativeDataAnomalies}
Understanding the statistics of all anomalies explains the current perspective of defining anomalies and isolation levels.
In this part, we analyze the property of syntactic histories and analyze the anomaly statistics from these histories. Internal comprehension of data anomalies is the foundation to discuss isolation levels as well as the CC protocols.  
We later show that the statistics summary is representative of real-life applications, as those anomalies with higher occurrence probabilities in syntactic histories are also well-known discovered ones in real-life.  We first define the test set in the following:

\begin{definition}
\label{def:hts}
\textsc{\textbf{History Testing Set}}

$\mathcal{H}(m,n,k) \overset{\text{def}}{\equiv \joinrel \equiv \joinrel \equiv} \{H\ |\ H.variables = m \wedge H.transactions = n \wedge H.writes + H.reads < k\}$.
\end{definition}
where  $H.variables$, $H.transactions$, $H.writes$, and $H.reads$ denote the number of variables, transactions, write operations, and read operations in history H, respectively. The history testing set $\mathcal{H}(m,n,k)$ is the set of full permutation of all histories that have $m$ variables, $n$ transactions, and $k$ write/read operations. We infer that $\mathcal{H}$ has the following two properties, where the function $Mutex(\mathcal{H}_1, \mathcal{H}_2)$ represents $\mathcal{H}_1$ and $\mathcal{H}_2$ are mutually exclusive.:
\begin{itemize}
    \item If $m_1 \neq m_2$ or $n_1 \neq n_2$ or $k_1 \neq k_2$, then $\mathcal{H}_1(m_1,n_1,k_1)$ and $\mathcal{H}_2(m_2,n_2,k_2)$ are mutually exclusive, i.e., \\$\forall m_1,m_2,n_1,n_2,k_1,k_2 \in \mathbb{N} \quad m_1 \neq m_2 \vee n_1 \neq n_2 \vee k_1 \neq k_2 \Rightarrow Mutex(\mathcal{H}_1(m_1,n_1,k_1),\mathcal{H}_2(m_2,n_2,k_2))$.
    \item For any $m_1,m_2,n_1,n_2$, if $k_1 \leq k_2$, then ${H}_1(m_1,n_1,k_1)$ is subset of ${H}_2(m_2,n_2,k_2)$, i.e., $\forall m_1,m_2,n_1,n_2,k_1,k_2 \in \mathbb{N}\\ k_1 \leq k_2 \Rightarrow \mathcal{H}_1(m_1,n_1,k_1) \subseteq {H}_2(m_2,n_2,k_2)$.
\end{itemize}

For example, $H_1$:$W_1[X_0],R_2[X_0],A_1$ and $H_2$: $W_1[X_0],W_2[X_1],A_1$ are example histories of Dirty Read and Dirty Write anomalies (formally described in Table \ref{table:anomaly_classification}), respectively. They both belong to $\mathcal{H}(1,2,3)$, as they have one variable, two transactions, and three operations involved.

\begin{table}[]
\caption{Evaluation statistic of all data anomalies in static histories generated by by static method (Definition \ref{def:hts} with $\mathcal{H}(3,4,7)$). We compute the percentage of each anomaly as well as each category of RAT, WAT, and IAT. In each category, we rank the anomalies by the highest percentage.}
\small

\begin{tabular}{c|c|c|c}
\toprule
\hline
    Class               & Sub- &               Anomaly name                                    &  Perc                             \\ \hline
\multirow{9}{*}{RAT}  & SDA & Dirty Read                       & 16.68\% \\ 
                   & SDA & Intermediate Read                & 5.86\%                             \\
                   & SDA & Non-repeatable Read             & 5.34\%                             \\
                   & DDA & Read Skew 2                     & 1.86\%                             \\
                   & DDA & Write-Read Skew                 & 1.59\%                             \\
                   & DDA & Read Skew                        & 1.53\%                             \\
                   & MDA & Step RAT                         & 0.41\%                            \\
                   & DDA & Write-Read Skew Committed      & 0.26\%                             \\
                   & DDA & Double-Write Skew 1 Committed & 0.26\%                             \\\hline
\multirow{14}{*}{WAT} & SDA & Dirty Write                      & 36.13\% \\
                   & SDA & Full-Write                       & 11.28\%                            \\
                   & SDA & Lost Update                      & 5.31\%                             \\
                   & DDA & Read-Write Skew 2             & 1.76\%                             \\
                   & DDA & Double-Write Skew 1            & 1.58\%                             \\
                   & DDA & Read-Write Skew 1               & 1.53\%                             \\
                   & DDA & Full-Write Skew                 & 1.51\%                             \\
                   & DDA & Double-Write Skew 2            & 1.51\%                             \\
                   & SDA & Full-Write Committed            & 0.74\%                             \\
                   & MDA & Step WAT                         & 0.35\%                             \\
                   & SDA & Lost Self Update               & 0.23\%                             \\
                   & DDA & Full-Write Skew Committed      & 0.23\%                             \\
                   & DDA & Double-Write Skew 2 Committed & 0.23\%                             \\
                   & SDA & Lost Self Update Committed   & 0.04\%                             \\\hline
\multirow{6}{*}{IAT}  & DDA & Write Skew                       & 1.79\%  \\ 
                   & SDA & Non-repeatable Read Committed  & 0.74\%                            \\
                   & SDA & Lost Update Committed           & 0.74\%                             \\
                   & DDA & Read-Write Skew 1 Committed   & 0.23\%                             \\
                   & DDA & Read Skew Committed             & 0.23\%                             \\
                   & MDA & Step IAT                         & 0.02\%      \\\hline
\bottomrule                      
\end{tabular}
\label{table:anomaly_statistics}
\end{table}

\begin{table}[]
\caption{Evaluation statistic of all data anomalies in static histories generated by by static method (Definition \ref{def:hts} with $\mathcal{H}(3,4,7)$). We summarize the statistics based on SDA, DDA, and MDA.}
\small

\begin{tabular}{c|c|c|c|c}
\toprule
\hline
                   & SDA &  DDA   &  MDA         &     Sum                      \\ \hline
RAT	& 27.88\%	&   5.51\%	& 	0.41\%	& 	33.79\% \\\hline
WAT	& 53.74\%	& 	8.36\%	& 	0.35\%	& 	62.45\% \\\hline
IAT	& 1.48\%	& 	2.25\%	& 	0.02\%	& 	3.76\% \\\hline
Sum	& 83.10\%	& 	16.12\%	& 	0.78\%	& 	 \\\hline

\bottomrule                      
\end{tabular}
\label{table:anomaly_statistics_sub}
\end{table}

\begin{table}[]
\caption{Evaluation statistic of all edges percentage by syntactic histories and cycles with $\mathcal{H}(3,2,6)$.}
\scriptsize

\begin{tabular}{c|c|c|c|c|c|c|c|c}
\toprule
\hline
Edge        & RW	& WR	&   WW	&  WA	&  RA	&   WC	& WCR	&  WCW    \\ \hline
History & 23.14\% & 22.01\% & 21.98\%	& 10.18\%	& 10.18\%	& 10.18\%	& 1.17\%	& 1.16\%		 \\\hline
Cycles & 18.90\% & 25.88\% & 32.45\%	& 6.52\%	& 6.10\%	& 7.23\%	& 1.50\%	& 1.41\%		 \\\hline


\bottomrule                      
\end{tabular}
\label{table:anomaly_statistics_edge}
\end{table}

We first show the percentage of different edges in generated history with parameter $\mathcal{H}(3,2,6)$ in Table \ref{table:anomaly_statistics_edge}. For all histories with or without cycles, we generate them with equal probability. However, histories with cycles have more WW (32.45\%) edge, than WR (25.88\%) and RW (18.9\%) edges, meaning WW is more sensitive in forming cycles.
We then generate histories with parameter $\mathcal{H}(3,4,7)$ and compute the percentage of each anomaly in terms of all cycle histories. 
Table \ref{table:anomaly_statistics} depicts the statistics of these syntactic cycle histories. Since a cycle may contain multiple anomalies, we show the result collected by ranking the priority from RAT and WAT to IAT based on Table \ref{table:anomaly_classification}. We also collected the statistics in different orders, but the result shows minor changes.
Among all anomalies, Dirty Write with 36.13$\%$ occurrences is the most one. Dirty Read with 16.68$\%$ is the most one in RAT. We can tell that those with higher probabilities are very close to known anomalies as they also occur frequently in real-life transaction processing. Previous isolation levels are defined based on these known anomalies, leading to solving practical application scenarios. Yet it is not quantitatively defined, as a newly reported anomaly often can not be categorized.
WAT is more than 60$\%$, as the WW is usually the critical POP in a cycle. We later also discussed that WW conflicts are unavoidable and tricky to most CC protocols. IAT is a rare case, and this is where some commercial databases often sacrifice serializability for performance.

We also summarize $\mathcal{H}(3,4,7)$ histories by SDA, DDA, and MDA in Table \ref{table:anomaly_statistics_sub}. The SDA is the most anomalies in RAT and WAT, and in total, it is more than 83$\%$ of all anomalies. SDA occurs more often as they are composed of fewer operations, thus higher probabilities with a fixed number of read/write operations. Again, the high probability of SDA occurrence is also revealed in real-life scenarios, as they are usually known data anomalies. 
The total probability of forming SDA and DDA data anomalies is as high as 99.22$\%$.
We show that the anomaly cycle of SDA and DDA has only two edges in Table \ref{table:anomaly_classification}. Therefore, it should be low cost to detect these anomalies even by the cycle detection method.
The probability of forming MDA data anomalies is only 0.78$\%$. This part of data anomalies may have complicated and dynamic cycle structures with several edges, which can be extremely expensive for cycle detection.

For RAT and WAT, their probability is as high as 96.34$\%$. These two types of data anomalies must be eliminated in order to ensure data consistency. The probability of IAT is only 3.76$\%$, so solving this part of data anomalies at a high cost will bring a certain degree of performance loss, this kind of data anomalies usually depends on the cycle detection algorithm as \cite{weikum2001transactional, DBLP:conf/icde/Durner019}. Identifying cycles of conflicts is usually prohibitive, as the increasing concurrent transactions may produce exponential overhead \cite{weikum2001transactional}, however, after solving RAT and WAT, we can avoid using cycle detection algorithm by eliminating RW conflict as \cite{DBLP:conf/sigmod/CahillRF08, DBLP:conf/eurosys/YabandehF12}.

\textbf{Lesson learned}: CC should eliminate SDA and DDA type data anomalies in a low-cost way. CC also should eliminate RAT and WAT type data anomalies in a low-cost way. For IAT type data anomalies, CC should trade-off between the cycle detection algorithm and rolling back transactions.


In the following part, we use this syntactic test set as static histories to evaluate the rollback rate of CC protocols.

\subsection{Evaluation of Rollback Rates}\label{sec_EvaluationofRollbackRate}
Rollback rate is one of the most important factors affecting performance. Some concurrency control algorithms focus on reducing the rollback rate to improve performance. However, there is no clear specification on what aspects to mitigate the rollback rate. 
To address this, we evaluate the rollback rate by both static and dynamic methods. The static data is generated by the above syntactic histories while the dynamic method uses the real-life benchmark, i.e., TPC-C and YCSB. Observing these rollback rates helps to understand the behaviors of CC protocols in dealing with anomalies or partial patterns of these anomalies.

\subsubsection{The Definition of Rollback Rate}\label{sec_DefinitionRollbackRateDef}
\
\newline
Before comparing the rollback rate with CC protocols, we first introduce the definition of rollback as well as the True and False rollback rate.

\begin{definition}
\textsc{\textbf{Data Anomaly Rollback}}
Forced to rollback due to a data anomaly in concurrent transactions, i.e., True Rollback (TR).
\end{definition}

\begin{definition}
\textsc{\textbf{Non-Data Anomaly Rollback}}
Forced to rollback in concurrent transactions without a data anomaly. Also called False Rollback (FR).
\end{definition}

\begin{definition}
\textsc{\textbf{Algorithm Rollback}}
Rollback due to CC algorithm's decision (AR). It includes TR and FR.
\end{definition}

\noindent \textsc{\textbf{Calculation of Rollback Rate }} First, we define some terminologies used in formulas below. 
\begin{itemize}
    \item \textbf{Rollback Rate} -- The True Rollback Rate (TRR) is the percentage of histories with cycles. The False Rollback Rate (FRR) is the percentage of histories are not cycles but CC protocols still abort based on their rules. We denote algorithm rollback rate as $R_{alg}$.
    \item \textbf{Number Of Transactions} -- We denote the total number of transactions as $N$, number of transactions rolled back by concurrency control algorithm as $N_{alg}$, number of transactions truly rolled back as $N_{true}$.
\end{itemize}

\vspace{0.15cm}

$$R_{alg}=\cfrac{N_{alg}}{N}=TRR+FRR \eqno{(R1)}$$
$$TRR=\cfrac{N_{true}}{N} \eqno{(R2)}$$
$$FRR=R_{alg}-TRR \eqno{(R3)}$$

\vspace{0.15cm}

\begin{table*}[]
\caption{Evaluation result of static rollback rate. TRR and FRR stand for ture and false rollback rate, respectively.}
\small



\begin{tabular}{c|c|c|c|c|c|c|c}

\toprule
\hline
 &
  \multicolumn{1}{c|}{\textbf{ALL}} &
  \multicolumn{1}{c|}{\textbf{OCC}} &
  \multicolumn{1}{c|}{\textbf{MaaT}} &
  \multicolumn{1}{c|}{\textbf{MVTO}} &
  \multicolumn{1}{c|}{\textbf{TO}} &
  \multicolumn{1}{c|}{\textbf{SSI}} &
  \multicolumn{1}{c}{\textbf{No$\_$wait}} \\ \hline
 &
  TRR &
  FRR &
  FRR &
  FRR &
  FRR &
  FRR &
  FRR \\ \hline
\textbf{$\mathcal{H}$ (2,2,6)} & 69.82\%  & 3.48\% & 3.97\% & 1.53\% & 1.68\% & 1.72\% & 0.90\% \\ \hline
\textbf{$\mathcal{H}$ (3,2,6)} & 72.91\%  & 3.36\% & 3.94\% & 0.75\% & 0.87\% & 1.41\% & 6.88\% \\ \hline
\textbf{$\mathcal{H}$ (2,3,6)} & 52.09\%  & 7.71\% & 8.82\% & 3.29\% & 3.73\% & 3.53\% & 2.66\% \\ \hline
\textbf{$\mathcal{H}$ (2,4,6)} & 42.13\%  & 9.83\% & 11.37\% & 4.16\% & 4.80\% & 4.31\% & 30.19\% \\ \hline
\textbf{$\mathcal{H}$ (3,4,6)} & 46.40\%  & 21.86\% & 25.95\% & 4.36\%  & 6.01\% & 8.66\% & 45.41\% \\ \hline
\textbf{$\mathcal{H}$ (3,4,7)} & 58.36\%  & 20.17\% & 23.22\% & 5.28\%  & 6.45\% & 8.92\% & 38.26\% \\ \hline
\bottomrule
\end{tabular}


\label{table:rollbackrate_static}
\end{table*}

\begin{table}[]
\caption{Rollback rate evaluation result of benchmark 
Workloads. The workloads are described in Table \ref{table:workload_description}
}
\small










\begin{tabular}{c|c|c|c|c|c|c}
\toprule
\hline
  Workload &
  OCC &
  MaaT &
  MVTO &
  TO &
  SSI &
  No$\_$wait \\ \hline

 y1  & 0.09\% & 	0.03\% & 	0.01\% & 	0.01\% & 	0.05\% & 	0.02\% \\ \cline{2-7} 

 y2  & 0.00\% & 	0.00\% & 	0.00\% & 	0.00\% & 	0.00\% & 	0.00\% \\ \cline{2-7} 

 y3  & 0.04\% & 	0.01\% & 	0.00\% & 	0.00\% & 	0.01\% & 	0.01\% \\ \cline{2-7} 

 y4  & 14.86\% & 	68.81\% & 	11.54\% & 	9.56\% & 	47.51\% & 	14.56\% \\ \hline 

 t1  & 11.19\% & 	9.33\% & 	0.05\% & 	0.03\% & 	4.98\% & 	4.41\% \\ \cline{2-7}  

 t2  & 5.10\% & 	4.04\% & 	0.19\% & 	0.18\% & 	5.28\% & 	4.71\% \\ \cline{2-7} 

 t3  & 12.69\% & 	11.89\% & 	1.21\% & 	1.54\% & 	5.07\% & 	12.02\% \\ \hline 

\bottomrule

\end{tabular}
\label{table:rollbackrate_dynamic}
\end{table}

\subsubsection{Evaluation of Rollback Rates}\label{sec_StaticRollbackRate}
\
\newline  
TRR is not an indicator of CC protocols, as these aborts are necessary. Instead, FRR is an indicator to measure the CC protocols. Without a full cycle detection, all CC protocols have a false rollback, where non-anomaly transactions will be aborted. Intuitively, the higher FRR the worse the performance. So reducing the FRR has the potential to improve the performance.


We evaluates several CC protocols
, such as, two-phase locking(2PL) \cite{DBLP:journals/cacm/EswarranGLT76} (includes No\_wait and Wait\_die), Timestamp Ordering (TO) \cite{DBLP:journals/csur/BernsteinG81}, MVCC(Mutil-Version Concurrency Control) with TO (MVTO) \cite{DBLP:journals/tods/BernsteinG83}, Optimistic Concurrency Control (OCC) \cite{DBLP:conf/vldb/KungR79}, MaaT(Multi-access as a Transaction) \cite{DBLP:journals/pvldb/MahmoudANAA14}, Serializable Snapshot Isolation(SSI) and Write-snapshot Isolation(WSI) with 3TS.
Table \ref{table:rollbackrate_static} and Table \ref{table:rollbackrate_dynamic} show the evaluation result.
\begin{itemize}
    \item Table \ref{table:rollbackrate_static} is the evaluation result by a static method with multiple set of static histories (Definition \ref{def:hts}). The result shows the rate of false rollback of CC protocols in scheduling with all permutations and combinations. 
    \item Table \ref{table:rollbackrate_dynamic} is the evaluation result by a dynamic method with dynamic concurrent transactions generated by YCSB and TPC-C workloads based on real-life scenarios. Each schedule is dynamically formed during the execution of concurrent transactions. We calculate the rollback rate of each CC algorithm.
\end{itemize}

\textbf{\textit{Static Rollback Rates}}:
For “ALL” column in Table \ref{table:rollbackrate_static}, the number of data anomalies contained in each group of history is determined, so the TRR is also determined. The difference between different algorithms lies in the different FRRs. 

2PL (No$\_$wait) algorithms have the highest FRR, as both read lock and write lock will cause the concurrent write to abort. For all histories with RW, WR, and WW, No$\_$wait will abort them. 

Both MVTO and TO are in the TO category with similar FRR. MVTO is an improvement based on MVCC technology, which improves the concurrency of WW and RW, and especially eliminates the occurrence of WR by reading specific versions based on its start timestamp. SSI by using MVCC also has a relatively low abort rate, as read does not block write and write does not block read.

Both OCC and MaaT belong to OCC algorithms with similar FRR. Both of them have to detect RW and WW, so they will abort when they meet RW or WW conflict. In contrast, MVTO and TO have some cases to wait WW conflict while SSI will abort two consecutive RW instead of one. 


\textbf{\textit{Dynamic Rollback Rates}}:
From the real run of benchmark data, MVTO and TO yield lower abort rates as shown in Table \ref{table:rollbackrate_dynamic}. The reason is that they can allow some WW conflicts to wait instead of the direct abort. 

OCC and MaaT have higher abort rates in High contention (y4) and TPC-C workloads (t1, t2, and t3).
They suffer higher abort rates by more concurrent transactions and higher concurrent threads in contented benchmark workloads. When mixing with Payment and NewOrder workloads (t3), the contention intensifies in reading/updating the warehouse information, leading to higher abort rates for all CC protocols. 
Essentially, these CC protocols eliminate the conflict edges of cycles, i.e., RW, WW, and WR conflicts. For No\_wait, since reads and writes are mutually blocked, it does not allow all conflicts, resulting in the highest abort rates in both syntactic and benchmark workloads. For TO and MVTO, it partially allows WW by queuing some transactions, making it a very low abort rate in TPC-C workloads, where the main aborts are from WW conflicts. 
Interestingly, \cite{DBLP:conf/icde/Durner019} precisely avoids false rollback by means of the cycle detection. It claims to have competitive throughput compared to the-state-of-art algorithms while having lower abort rates. We will discuss more natural ideas on CC algorithms dealing with these conflicts in \S \ref{sec_algorithm}. 

\textbf{Lesson learned}: Different scenarios have different rollback rates. It is necessary to model the rollback rates for different application scenarios. This is our future work.


Next section, We will later design concise and easy-understanding isolation levels based on our quantitative method in \S \ref{sec_definitionIsolationLevels}.






\section{Applications And Evaluation}\label{sec_Applications}
This section shows how to use \textit{Coo} framework (in \S \ref{sec_dataAnomaliesModel}) to, redefine isolation levels (in \S \ref{sec_definitionIsolationLevels}), 
analyze the implementation process of mainstream algorithm (in \S \ref{sec_algorithm}). 

\begin{table*}[ht]
\caption{Performance evaluation result (Transaction per second) of CC algorithms and their performance improvement (in percentage) comparing to READ COMMITTED(RC) and READ UNCOMMITTED(RU) isolation levels. The experiment is run with 64 server thread. Workload is described in Table \ref{table:workload_description}} \label{table:isolation_level_performance}
\small


\begin{tabular}{c|cccc|ccc|ccc}
\toprule \hline
Workload    & SSI   & NRW    & RC & RU  & WAIT\_DIE & RC & RU & NO\_WAIT & RC & RU         \\\hline
y1 & 445552  & 6.16\%   & 6.63\%  & 16.04\%    & 510021        & 2.28\%        & 13.73\%  & 478832       & 10.92\%      & 11.75\%  \\\hline
y2 & 586457  & 1.88\%   & 7.18\%   & 9.04\%  & 464146        & 17.28\%       & 26.44\%  & 485169       & 4.44\%      & 1.93\%  \\\hline
y3 & 553605  & -2.29\%  & 3.10\%   & 10.25\%   & 543264        & -2.53\%       & 15.35\%  & 482065       & 0.59\%       & -1.16\%  \\\hline
y4 & 140474  & 531.19\% & 327.60\% & 706.78\%  & 113114        & 347.01\%      & 436.43\% & 117279       & 297.40\%     & 337.28\% \\\hline
t1 & 374208  & 0.65\%   & 2.98\%  & 51.78\%   & 406490        & 7.36\%        & 28.57\%  & 408835       & 4.20\%       & 28.90\%  \\\hline
t2 & 254915  & 0.22\%   & 1.53\%   & 32.33\%   & 266444        & 4.17\%        & 17.17\%  & 279314       & 4.08\%       & 13.54\% \\\hline
t3 & 341092  & 0.86\%   & -1.47\%   & 37.75\%   & 151961        & 66.38\%        & 212.88\%  & 233255       & 60.04\%       & 173.88\% \\\hline

\bottomrule
\end{tabular}

\end{table*}

\subsection{Isolation levels}\label{sec_iso}
In addition to data anomalies, isolation levels are another important content in transaction processing technology. What is the relationship between them? 
The previous knowledge system can not sufficiently reflect the relationship between them. Limited data anomalies are used to define the isolation level \cite{16, 10.1145/223784.223785, adya2000generalized}, indicating that there is a relationship between them. However, the isolation levels are not discussed based on all data anomalies, making their relationship unclear.
Therefore, this section discusses the relationship between data anomalies and isolation levels.

\subsubsection{Can isolation levels really improve performance?}\label{sec_DeepCongition}
\
\newline
\cite{DBLP:conf/ds/GrayLPT76} discusses the motivation of the original definitions of isolation levels of ANSI/ISO SQL \cite{16}. Early database systems used two-phase locking(2PL) technology to ensure data consistency. The traditional view is that a weak isolation level helps to improve performance. But is it really true?
 Table \ref{table:isolation_level_performance} shows the performance of SSI \cite{DBLP:conf/sigmod/CahillRF08} and 2PL (No$\_$wait and Wait$\_$die) \cite{DBLP:journals/cacm/EswarranGLT76}, and their performance improvement (in percentage) comparing to Read Committed(RC) and Read Uncommitted(RU) isolation levels. We see that the benefit to sacrifice isolation level from serializable into RC (or NRW in our defined level) is not significant (less than 5$\%$ in most cases) except in some cases with extremely high contention between transactions (more than 100$\%$ for all cases). Though the RU level has notable improvement, it does not suit most practical applications.

\subsubsection{The definition of Isolation Levels}\label{sec_definitionIsolationLevels}
\
\newline
There are two existing methods of isolation levels definition. The first is based on limited data anomalies as \cite{16, 10.1145/223784.223785}, however, this method cannot handle when more data anomalies are reported(Table 
\ref{table:anomaly_classification}). For example, \textit{Write-read Skew}, 
which is a new data anomaly reported by this paper, is not intuitively clear which isolation level can avoid it.
The second is based on conflict graph as \cite{adya2000generalized}, however, \cite{adya2000generalized} is not a pure method, \cite{adya2000generalized} uses conflict graph and some specific data anomalies 
to define isolation levels.
We redefine isolation levels based on the research results of this paper. 
We use a pure conflict graph and do not narrow it to some special cases. 
Instead, we specify all possible primitive data anomalies(Table \ref{table:anomaly_classification}). Therefore, we use a unified method to define isolation levels. We think a good definition of isolation levels has two features as follows.

\begin{itemize}
    \item \emph{Maximize Concurrency}. A good definition of isolation levels helps to maximize concurrent access control by using low-cost heuristic rules.
    \item \emph{Simplicity}. A simple definition of isolation levels help to simplify implementation in engineering systems. 
\end{itemize}

We define two isolation levels for all data anomalies in Table \ref{table:isolation_level} as follows.
\begin{itemize}
    \item The first is the \textbf{No Read and Write Data Anomalies(NRW)} level, which forbid all \textit{RAT} and \textit{WAT} data anomalies. RAT type data anomalies can be avoided by \textit{Read committed heuristic rules}. \textit{RAT} type data anomalies can be avoided by applying write locks, but with lower concurrency. It can also be eliminated by detecting the cycle of \textit{WAT}. The \textit{NRW} isolation level is stronger than \textit{Read Uncommitted}, \textit{Read Committed} and \textit{Repeatable Read} isolation level of the ANSI/ISO SQL \cite{16}. 
    \item The second also the strictest level is the \textbf{No Data Anomalies(NA)} level, which allows no data anomaly. In other words, NA provides what is normally considered as serializability. However, different CC algorithms have different implementation enforcement to guarantee serializability. 
    According to the characteristics of \textit{IAT} data anomalies, such data anomalies can only be detected or avoided by eliminating \textit{RW} and/or \textit{WCW} and/or \textit{WCR} edges. 
    \item From the \textbf{NRW} level to \textbf{NA} level, the degree of concurrency decreases, as the constraints increases.
\end{itemize}

\begin{table}[ht]
\caption{Simplified isolation levels in practice and engineering. NP and P stands for Not Possible and Possible, respectively} 
\label{table:isolation_level}
\small
\begin{tabular}{lclcc}
\toprule
\hline
\multicolumn{2}{c}{Types   of Anomalies} & Anomaly name                                  & NRW                                  & NA                                   \\ \hline
                       & SDA                         & Dirty Read                                              & NP                         & NP                         \\ \cline{2-5} 
                       & SDA                         & Non-repeatable Read                                     & NP                         & NP                         \\ \cline{2-5} 
                       & SDA                         & Phantom                                                 & NP                         & NP                         \\ \cline{2-5} 
        
                       & SDA                         & Intermediate Read                                       & NP                         & NP                         \\ \cline{2-5}
                       & DDA                         & Write-read Skew Committed                               & NP                         & NP                         \\ \cline{2-5} 
                       & DDA                         & Double-write Skew 1 Committed                           & NP                         & NP                         \\ \cline{2-5}
                       & DDA                         & Write-read Skew                                         & NP                         & NP                         \\ \cline{2-5}
                       
                       & DDA                         & Read Skew                                               & NP                         & NP                         \\ \cline{2-5} 
                       & DDA                         & Read Skew 2                                             & NP                         & NP                         \\ \cline{2-5}

\multirow{-10}{*}{RAT} & MDA                         & Step RAT                                                & NP                         & NP                         \\ \hline  
                       & SDA & Dirty Write                     & NP & NP \\ \cline{2-5} 
                       & SDA & Full-Write Committed            & NP & NP \\ \cline{2-5}
                        
                       & SDA & Full-Write                      & NP & NP \\ \cline{2-5}
                       & SDA & Lost Update                     & NP & NP \\ \cline{2-5} 
                       & SDA & Lost Self Update                & NP & NP \\ \cline{2-5}
                       & DDA & Double-write Skew 2   Committed & NP & NP \\ \cline{2-5}
                       & DDA & Full-write Skew   Committed     & NP & NP \\ \cline{2-5}
                       & DDA & Full-write Skew                 & NP & NP \\ \cline{2-5} 
                       & DDA & Double-write Skew 1             & NP & NP \\ \cline{2-5} 
                       & DDA & Double-write Skew 2             & NP & NP \\ \cline{2-5}
                       & DDA & Read-write Skew 1               & NP & NP \\ \cline{2-5} 
                       & DDA & Read-write Skew 2               & NP & NP \\ \cline{2-5} 
                        
\multirow{-13}{*}{WAT} & MDA & Step WAT                        & NP & NP \\  \hline
                       & SDA                         & Non-repeatable Read Committed                           & P                      & NP                         \\ \cline{2-5} 
                       & SDA                         & Lost Update Committed                                   & P                             & NP                         \\ \cline{2-5} 
                       & DDA                         & Read Skew Committed                                     & P                             & NP                         \\ \cline{2-5} 
                       & DDA                         & Read-Write Skew 1 Committed                             & P                             & NP                         \\ \cline{2-5} 
                       & DDA                         & Write Skew                                              & P                             & NP                         \\ \cline{2-5} 
                       & DDA                         & Predicate-based Write Skew                              & P                             & NP                         \\ \cline{2-5} 
\multirow{-7}{*}{IAT}  & MDA                         & Step IAT                                                & P                             & NP                         \\  \hline
                       
                       \bottomrule
\end{tabular}
\end{table}

\subsubsection{Isolation Levels and Data Anomalies}\label{sec_IsolationLevelsandDataAnomalies}
\
\newline
We define the isolation levels based on all data anomalies, which helps to fully reflect the relationship between isolation levels and all data anomalies.
According to Table \ref{table:isolation_level}, isolation levels can also be considered as a classification method of data anomalies, but this classification method is from the perspective of the performance of concurrent control.

\textbf{Lesson learned}: The definition of isolation level is not a mystery. According to the classification of data anomalies, it is a best practice to formulate flexible isolation levels to meet the engineering implementation.

\subsection{Concurrent Control Algorithms}\label{sec_algorithm}

We have divided data anomalies into three types of cycles, called RAT, WAT, and IAT. 
Each cycle contains POPs like WR, WW, and RW. There are two ways to eliminate data anomalies. The first is an algorithm based on a cycle detection algorithm, which is an expensive method. The second is a rule-based algorithm, which is a low-cost method. Most CC algorithms strive to eliminate one or some of the data anomalies to avoid cycle detection. 

Weak isolation levels can eliminate some data anomalies. For any isolation levels defined by different methods, the strongest isolation level must eliminate all data anomalies.


Table \ref{table:cc_algorithm_isolation_level} shows the methods or rules used by CC algorithms to avoid different POPs in different isolation levels. In the following, we discuss more detailed rules on how to avoid different data anomalies.


\begin{table*}[ht]
\caption{Analysis of CC algorithms in dealing with POPs to guarantee different isolation levels. RCR, RWA, and 2RWA, stands for Read Committed Rules, RW POP abort, and 2-continuous RW POPs abort, respectively} \label{table:cc_algorithm_isolation_level}
\small
\begin{tabular}{c|c|c|c|c|c|c}
\toprule
\hline
\multicolumn{2}{c|}{Type of cycle}  & RAT        &  WAT           & \multicolumn{3}{c}{IAT}     \\\hline
\multicolumn{2}{c|}{Target POP} & WR        &  WW           & RW    & RW-WCR & RW-WCW     \\\hline
CC protocol &   Levels &         &             &     &      \\\hline
\multirow{4}{*}{2PL}            &   RU & -         &   Lock       &  -     & -   & -    \\\cline{2-7}
             &   RC & RCR       &   Lock       &  -    & -   & -    \\\cline{2-7}
             &   RR & RCR+2PL   &   Lock   &  Lock+2PL & Lock+2PL   & Lock+2PL  \\\cline{2-7}
             &   S  & RCR+2PL   &   Lock   &  Lock+2PL & Lock+2PL   & Lock+2PL    \\\hline
\multirow{4}{*}{2PL+MVCC}     &   RU & -         &   Lock        & -      & -   & -     \\\cline{2-7}
             &   RC & MVCC      &   Lock        &  -     & -   & -     \\\cline{2-7}
             &   RR & MVCC+TO   &   Lock    & Lock+2PL  & MVCC+2PL   & MVCC+2PL \\\cline{2-7}
             &   S &  MVCC+2PL  &   Lock    & Lock+2PL  & MVCC+2PL   & MVCC+2PL    \\\hline
\multirow{4}{*}{TO}          &   RU & -         & TO            & -    & -   & -     \\\cline{2-7}
           &   RC & RCR       & TO            & -    & -   & -     \\\cline{2-7}
           &   RR & RCR+TO    & TO            & TO    & RCR+TO   & RCR+TO     \\\cline{2-7}
           &   S & RCR+TO     & TO            & TO    & RCR+TO   & RCR+TO     \\\hline
\multirow{4}{*}{TO+MVCC}       &   RU & -         & TO            & -    & -   & -    \\\cline{2-7}
      &   RC & MVCC      & TO            &     & -   & -    \\\cline{2-7}
      &   RR & MVCC+TO   & TO            & TO    & MVCC+TO   & MVCC+TO    \\\cline{2-7}
      &   S & MVCC+TO    & TO       & MVCC+TO    & MVCC+TO   & MVCC+TO    \\\hline
\multirow{4}{*}{OCC+ MVCC}     &   RU & -         & Lock          &   -   & -   & -    \\\cline{2-7}
             &   RC & MVCC      & Lock          &  -   & - & -    \\\cline{2-7}
    &   RR & MVCC+TO   & Lock          &   RWA   & RWA   & RWA    \\\cline{2-7}
    &   S &  MVCC+TO & Lock &   RWA  & RWA   & RWA    \\\hline
\multirow{4}{*}{SSI}          &   RU & -          & Lock          &  -    & -   & -    \\\cline{2-7}
         &   RC & SI         & Lock          &  -    & -   & -    \\\cline{2-7}
         &   RR & SI         & SI       &  -    & SI   & SI    \\\cline{2-7}
         &   S  & SI+2RWA     & SI   &  SI+2RWA   & SI+2RWA   & SI+2RWA    \\\hline
\multirow{4}{*}{WSI}          &   RU & -          & Wait          &    -  & -   & -    \\\cline{2-7}
         &   RC & MVCC       & Wait          &    -  & -   & -    \\\cline{2-7}
         &   RR & MVCC+TO    & Wait          &    RWA  & RWA   & RWA    \\\cline{2-7}
         &   S  & MVCC+TO         & Wait          &    RWA  & RWA   & RWA    \\\hline
\bottomrule
\end{tabular}
\end{table*}

\subsubsection{Avoidance of RAT(WR)}
\
\newline
We first discuss the WR POP. Together with WCR, they have 23.18\% in our syntactic histories as shown in Table \ref{table:anomaly_statistics_edge}. And WR exists in RAT and WAT anomalies while WCR exists in IAT anomalies. Based on our statistic in Table \ref{table:anomaly_statistics}, Dirty Read with 16.68\% is the most anomalies that are composited of the WR. 

For ANSI/ISO SQL, Dirty Read often occurs with RU level enabled, as reading uncommitted active writes is allowed. 
While for the RC level, each read operation must read the committed version. This can be enforced by rules that the only committed data can be read, where WR will be eliminated. 
For RR level, this requirement is stronger than RC such that in RR, each read of a transaction should be consistent(the same). The second read can not be affected by the update or insert operation by other transactions. 

Table \ref{table:cc_algorithm_isolation_level} shows the solutions or rules used by CC algorithms to avoid different POPs in different isolation levels. For example, in MVCC \cite{DBLP:journals/tods/BernsteinG83} or MVCC-based protocol like SSI and WSI, the RC level can be obtained by reading the latest committed version via the versions of MVCC. 
RR can be enforced by a transaction-life-time read lock on a data item or MVCC by snapshot technique controlled by timestamps. Instead of reading the latest committed version in RC where updates may affect two different reads, this time, it should always read a fixed version, for example, the latest committed version that is committed before the reading transaction starts.
In purely TO algorithm \cite{DBLP:journals/csur/BernsteinG81}, RR can be achieved such that the read can not be a success if the data modified timestamp is greater than the start timestamp of the read transaction. In our classification, the avoidance of WR indicates RAT-free, as RATs always hold a WR POP in their cycles. 

For this paper, in engineering implementation, RC and SI can be used as rules to eliminate most data anomalies with NRW isolation level. For example, Read Skew (RW-WR) anomaly includes one WR edge, it can be avoided by the RC rule. But for Read Skew Committed (RW-WCR) anomaly, it has one WCR edge, which can not be avoided by RC rule. But SI rule can avoid this anomaly via reading the specific version instead of the new committed version by the technique of TO.

\textbf{Lesson learned}: The data anomalies of RAT can be eliminated by rule-based methods. There is no need to use the cycle detection algorithm.

\subsubsection{Avoidance of WAT(WW)}
\
\newline
We next discuss the WW POP. Together with WCW, they have 23.14\% in our syntactic histories. And WW exists in WAT anomalies while WCW exists in RAT and IAT anomalies.  Based on our statistic in Table \ref{table:anomaly_statistics}, Dirty Write with 36.13\% is the most anomalies that are composited of the WW.  
The WW is the most annoying one, as we show most anomaly cases are composited of WW in Table \ref{table:anomaly_classification}, and in Table \ref{table:anomaly_classification} we show that from the full permutation of syntactic transaction histories, more than 62$\%$ of the anomalies are types of WATs, which are with WW conflict. Formally speaking, an anomaly is a cycle of several POPs. And WWs are the most critical and often unavoidable ones, as these writes from WW can easily form other POPs with other transactions.

For ANSI/ISO SQL, Dirty Writes often is not allowed even with the lowest RU level enabled.
So eliminating WW conflicts is required from the RU to the highest serializable level. So to avoid the WAT anomalies in different isolation levels, it needs to consider avoiding WR or RW in different levels.

We can summarize the WW conflict prevention techniques by two strategies, NO WAIT and WAIT.  We describe these two strategies in the following. 

\noindent \textbf{NO WAIT} \indent In this case, to avoid the WW conflict to form an anomaly cycle, the CC algorithms abort one of the transactions once two transactions are intended to write on the same data item concurrently. For example, 2PL \cite{DBLP:journals/cacm/EswarranGLT76}, which belongs to a rule-based algorithm, implements this technique by acquiring write lock by one transaction and aborting others who are also acquiring write lock on the same record. Interestingly, other CC algorithms may use implicit write locks. In TO algorithm \cite{DBLP:journals/csur/BernsteinG81}, even though no specific write lock is set, the write operation will check the write timestamp instead. The transactions with write operation abort when they try to write on the records that have been modified after these transactions started.

\noindent \textbf{WAIT} \indent In this case, to avoid the WW conflict constructing an anomaly cycle, the CC algorithms arrange transactions with the concurrent writes on the same item to execute in sequence virtually. Here virtually means that a transaction may be arranged earlier than its real execution. 2PL can also implement a WAIT technique, which is to wait instead of aborting when the other transaction holds the write lock. However, it may introduce deadlock in such cases, where two transactions wait for each other by the write locks of different records(We have proved a deadlock is a specific data anomaly \cite{?}). Then either wait die, which often waits in some pre-arranged orders, or deadlock detection, which detects the above cases, is implemented to guarantee proceeding and serializability. 
In MVTO \cite{DBLP:journals/tods/BernsteinG83}, a write may be inserted as an earlier version if there exists no read between this write to its next committed write. Allowing some writes to be committed improves the concurrency as well as the performance.
Likewise in MaaT \cite{DBLP:journals/pvldb/MahmoudANAA14}, each transaction initially maintains the range from zero to infinite, then dynamically adjust this range by other concurrent transaction. If WW conflicts happen in MaaT, then one transaction range's upper bound should be smaller than the lower bound of the range of another transaction. 

For this paper, the avoidance of both the RAT(WR) and WAT(WW) guarantees the NRW isolation level.

\textbf{Lesson learned}: WAT data anomalies can be eliminated by cycle detection algorithm or rule-based method(locking protocol), which can be divided into WAIT DIE and NO WAIT. However, WAIT DIE may cause deadlock, which needs to be eliminated through the deadlock detection method (cycle detection), in this way, it is still the cycle detection algorithm.


\subsubsection{Avoidance of IAT(RW and/or WCW)}
\
\newline
Once the WW and WR are avoided, the percentage of IAT data anomalies is only 3.76\%, as shown in Table \ref{table:anomaly_statistics}. Table \ref{table:anomaly_statistics_edge} also shows that in forming cycles, RW is the least when compared to WW and WR. That is why most efforts of CC algorithms do not focus on removing RW first.  

For ANSI/ISO SQL and the definition of \cite{16}, only the serializable isolation level can eliminate the data anomalies caused by RW, such as Write Skewd(RW-RW) data anomaly.

In SSI, it allows a single RW to appear, while in WSI, the transaction will abort as long as it meets the RW. For the serializable level, OCC and WSI need to abort all RWs while SSI aborts only when two consecutive RWs constructed. 

\begin{lemma}
The avoidance of WR, WW, and RW conflicts guarantees serializability.
\end{lemma}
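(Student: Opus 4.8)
The plan is to reduce the statement to the paper's own characterization of serializability as acyclicity of the conflict graph, and then argue that forbidding the three conflict types removes every possible cycle. First I would invoke Definition~\ref{def:Consistency}, which equates consistency (serializability) with the conflict graph being a directed acyclic graph, equivalently with the absence of data anomalies in the sense of Definition~\ref{DEFanomalies}. Thus it suffices to show that when no WR, WW, or RW conflict is admitted, the conflict graph contains no cycle.

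Next I would appeal to the fact, fixed in Definition~\ref{def:EdgeType} and the symbolization $conf=\{W_iW_j, W_iR_j, R_iW_j\}$, that every edge of the conflict graph is of exactly one of the three types WW, WR, RW; there is no fourth kind of conflict edge. This exhaustiveness is the linchpin of the argument, and I expect it to be the point requiring the most care, since one must confirm that the status-augmented partial-order pairs (WA, RA, WC, WCR, WCW, and so on) introduced in Section~\ref{sec_ConflictRelationsStatus} do not constitute additional independent edge types capable of closing a cycle on their own. They are refinements of the three basic relations, so any cycle still decomposes into WW, WR, and RW edges.

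With exhaustiveness in hand, I would use the RAT/WAT/IAT trichotomy of Definition~\ref{Types}, which partitions every cycle according to the presence of a WR edge (RAT), the presence of a WW edge but no WR edge (WAT), or the absence of both WW and WR edges (IAT). Because an IAT cycle contains neither WW nor WR edges, and edges can only be of the three types, such a cycle must be built solely from RW edges. Hence forbidding WR destroys every RAT cycle, forbidding WW destroys every WAT cycle, and forbidding RW destroys every IAT cycle. Since the three classes exhaust all cycles, simultaneously forbidding WR, WW, and RW leaves no cycle of any type.

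Finally I would conclude that, with no cycle present, the conflict graph is acyclic, so by Definition~\ref{def:Consistency} the schedule is consistent, i.e., serializable. One may even give the shorter argument that forbidding all three conflict types leaves the conflict graph edgeless and therefore trivially acyclic; but routing the proof through the RAT/WAT/IAT classification ties the lemma directly to the framework's anomaly taxonomy and makes explicit which forbidden conflict kills which family of anomalies. The only genuine obstacle is the exhaustiveness step above; everything else follows immediately from the definitions.
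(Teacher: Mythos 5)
Your overall strategy (reduce to acyclicity of the conflict graph via Definition~\ref{def:Consistency}, then kill every cycle type) is sound, but the step you yourself flag as the linchpin---exhaustiveness of the edge types WW, WR, RW---is precisely where the proposal breaks, and it breaks in exactly the way the paper's proof is designed to handle. In the extended conflict graph of Section~\ref{sec_ConflictRelationsStatus} the status-augmented partial order pairs are \emph{not} refinements of the three basic relations: the nine POP categories separate, e.g., $W_iR_j$ (which absorbs $W_iR_jC_j$ and $W_iR_jC_i$) from $W_iC_iR_j$ (WCR), and $W_iW_j$ from $W_iC_iW_j$ (WCW) and $R_iC_iW_j$ (RCW), because in the latter the first transaction commits \emph{before} the second operation. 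The paper states explicitly that ``WW and WCW POPs are different,'' and its comparison table counts nine edge types, not three. Concretely, IAT anomalies such as Lost Update Committed ($R_iW_j[x]$--$W_jC_jW_i[x]$) and Read Skew Committed ($R_iW_j[x]$--$W_jC_jR_i[y]$) in Table~\ref{table:anomaly_classification} contain WCW or WCR edges that survive the prohibition of WR, WW, and RW. So forbidding the three conflicts does not leave the graph edgeless, and your claim that an IAT cycle ``must be built solely from RW edges'' is false as stated.

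The missing idea, which is the actual content of the paper's proof, is a contradiction argument covering the residual edges: assume a cycle containing none of WR, WW, RW; its edges can then only be WCR, WCW, RCW, but each such edge is oriented consistently with the commit order of the earlier transaction (e.g., $W_0C_0W_1$ and $W_1C_1W_0$ cannot coexist), and since commit order totally orders the transactions involved, no cycle can be closed using only these edges. This yields the corrected form of your trichotomy step: every cycle must contain at least one of WR, WW, RW (in particular, every IAT cycle contains at least one RW, though it may also contain WCR/WCW edges), and hence forbidding all three eliminates all cycles and guarantees serializability. Without this commit-order argument your proof does not go through in the paper's framework; with it, your RAT/WAT/IAT bookkeeping becomes a valid, if slightly longer, packaging of the same proof.
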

We can alternatively prove that an anomaly has at least one of WR, WW, or RW POP in its conflict cycle. It can be easily proved by the contradictory. 
An anomaly is a conflict cycle consisting of POPs. Assume the cycle is without WW, WW, and WR, we can prove that POPs with only WCR, WCW, RCW can not form a cycle. As the "C" already determines the commit order between two transactions. So POPs like $W_0C_0W1$ and $W_1C_1W_0$ can not happen at the same time. Therefore, every two transactions of concurrent transactions have a unique order. So it guarantees serializability with the absence of POPs of WR, WW, and RW. 

For this paper, in order to achieve the NA isolation level, we need to eliminate IAT type data anomalies. But for Non-repeatable Read Committed(RW-WCR) and Lost Updata Committed(RW-WCW) data anomalies, we can not use SSI to eliminate them. Non-repeatable Read Committed(RW-WCR) can be eliminated by SI. Lost Updata Committed(RW-WCW) can be eliminated by detecting whether a Write operation and another write operation occur on the same item at the same time, rather than whether there is a WW edge (WW and WCW POPs are different).

\textbf{Lesson learned}: IAT type data anomalies account for a small proportion. Adopting rule-based methods as much as possible will help to improve the performance of CC algorithms. The avoidance of IAT(RW and/or WCW) guarantees the NA isolation level.

\subsubsection{Discussion of SSI and WSI}
\
\newline
Essentially, all algorithms guarantee serializability in a different way by avoiding conflict cycles. In this part, we will discuss how SSI and WSI algorithms avoid these cycles. 
Firstly in SSI \cite{DBLP:conf/sigmod/CahillRF08}, the algorithm is strongly based on the algorithm of Snapshot Isolation (SI) \cite{DBLP:conf/sigmod/BerensonBGMOO95}, which isolation level is stronger than READ COMMITTED.
In SI, the first-updater-win or first-committer-win strategy is applied for the avoidance of WW conflict to guarantee that only one active write can be committed at a time. This corresponds to the above NO WAIT strategy for WW conflicts. And it guarantees every read is to the committed version by MVCC implementation. SI is not serializable and suffers read skew and write skew anomalies \cite{DBLP:conf/sigmod/BerensonBGMOO95}. 
SSI algorithm \cite{DBLP:conf/sigmod/CahillRF08} is serializable with the following steps on top of SI. (i) The conflict types for the SSI cycle can be only RW and WCR, as WR is avoided by reading the committed version while WW is avoided by aborting one of the conflicting transactions. (ii) As SSI proves each cycle has two two-continuous RWs (special case is read skew anomaly with only two RWs), it avoids a cycle by destroying every two-continuous RWs structure. This for sure may abort some non-cycle transactions with two-continuous RWs, but trading off the better performance with less detection cost.

Interestingly, from our IAT anomalies in Table \ref{table:anomaly_classification}, it exists patterns like RW-WCR (e.g., Non-repeatable Read Committed and Read Skew Committed) and RW-WCW (Lost Update Committed and Read-write Skew 1 Committed), which seems that it does not have WW pattern and-two continuous RWs. In fact in SSI, the second R in RW-WCR did not read the new version since it started earlier than the new committed version. Likewise, for RW-WCW, SSI considers a WCW conflict when the last W is executed, as it writes on the version that already has a new version after it started.


An interesting follow-up work, Write Snapshot Isolation (WSI) \cite{DBLP:conf/eurosys/YabandehF12} is proposed to prevent RW conflicts instead of WW conflicts. They claim that unlike SSI blocks between writes, no interruption occurs by writes between concurrent transactions. It is a bit confusing at first impression whether WSI solves WW conflicts or not. In fact, they do solve WW conflicts implicitly by enforcing the second strategy WAIT between concurrent writes. 
The WSI is serializable by the following steps.
(i) The conflict types for the WSI cycle are with WWs and WRs but not RWs as it only detects RW conflicts. 
(ii) WSI enforce all WWs and WRs become WCWs and WCRs by the centralized commit technique \cite{DBLP:conf/dsn/JunqueiraRY11} where writes operations allow to write into the database only when no committed concurrent transactions have modified these writes. The (ii) condition guarantees writing to the database in sequential.
The transaction aborts when occurring RWs. With only WCWs or WCR, there will be no anomalies. The reason is that these WCW and WCR are not possible to construct conflict cycles, as one of the POPs should be WR, WW, or RW.

\subsection{Relation between Data Anomalies, Isolation Levels, and Concurrency Control algorithms}\label{sec_ConcurrencyControlandDataAnomalies}

The purpose of CC algorithms is to ensure data consistency by eliminating all data anomalies. By giving different levels of isolation, CC algorithms aim to eliminate different edges of POP. Therefore by allowing some certain types of edges or cycles, consistency can also be classified according to different isolation levels \cite{Gray1976Granularity}.
We show the statistics in Table \ref{table:anomaly_statistics} that some edges more frequently appear. In these cases, it is required to eliminate by strict rules at first hand. In contrast, some edges are less anomaly-sensitive, which can be dealt with by a post-processing method. 



\section{Related Work}\label{sec_RelatedWork}

To the best of our knowledge, this is the first study about systematically researching and defining data anomalies from the perspective of all anomalies. In our previous work \cite{haixiang2021systematic}, we systematically studied all entity-based data anomalies. This paper not only includes predicate-based data anomalies, but only makes a quantitative study on data anomalies, isolation levels and CC algorithms.


\textbf{Research history of data anomalies.} Well-known data anomalies, including Dirty Write, Dirty Read, Non-repeatable Read, Phantom, Lost update, Read Skew and Write Skew, etc., proposed in \cite{16,10.1145/223784.223785,839388}. These data anomalies were reported in the 1990s and before.
In recent years, there still exist extensive research works that focus on reporting new data anomalies, including Aborted Reads \cite{xie2015high}, Intermediate Read \cite{839388,xie2015high}, Read-only Transaction Anomaly \cite{10.1145/1031570.1031573,Read_Only_Transactions}, Serial-concurrency-phenomenon and Cross-phenomenon \cite{binnig2014distributed}. \cite{cerone2017algebraic} reports Long Fork, Fractured Reads, and Causality Violation. Even, an unnamed data anomaly was reported in \cite{schenkel2000federated}.
For reference, we make a thorough survey on data anomalies and show them in Table~\ref{tab:datano}. This paper redefines known data anomalies and defines the remaining 20+ new data anomalies in all data anomalies formally, and shows them in Table~\ref{table:anomaly_classification}.

\textbf{Predicate-based data anomalies.} We know Phantom is a predicate-based data anomaly reported by \cite{16,10.1145/223784.223785,839388}, Predicate-Based Write Skew is another predicate-based data anomaly reported by \cite{DBLP:journals/tods/FeketeLOOS05}, \cite{839388} discussed some data anomalies and predicate related topics, but did not give how many kinds of predicate-based data anomalies there are, nor did discuss the relationship between entity-based data anomalies and predicate-based data anomalies. We integrate predicates into POPs. 
We unify the relationship between predicates and entity objects before formally defining data anomalies, so as to unify the expression of predicate-based data anomalies and entity-based data anomalies.

\textbf{Elements of data anomaly.} Whether the number of variables and the number of transactions are related to data anomalies has not been discussed. The early reported data anomalies are single-variable or double-variables data anomalies. For example, Dirty Write, Dirty Read, Non-repeatable Read, Phantom, Lost Update, all are single-variable data anomalies. Only Read Skew and Write Skew are double-variables data anomalies. But all of them involve only two concurrent transactions. We found more new data anomalies from the perspective of the number of variables and report them in this paper, for example, Step RAT, Step WAT and Step IAT involve three variables, and we can extend four variables or more variables data anomalies.
 In the conflict graph composed of data anomalies, we not only classify the data anomalies according to the types of edges in the conflict graph, but also classify the data anomalies carefully according to the number of variables in the conflict graph. Table \ref{table:anomaly_statistics} shows that this classification method is meaningful because the proportion of cycles composed of three or more variables is small, for example, the probability of Step RAT, Step WAT, and Step IAT are less than 1$\%$. 



\textbf{Definition of data anomalies.} Although there are so many data anomalies, but it seems that data anomalies are not explicitly defined. There is no universally accepted definition of data anomalies in academia and industry, and the understanding of data anomalies is only in the way of case-by-case.
What are data anomalies? \cite{cerone2018analysing}`s definition is: \textit{This concurrency-control algorithm allows unserializable behaviours, called anomalies}. To our best knowledge, this is the first and only one to explicitly define what is data anomalies. But it is no systematic research on data anomalies, and it defines data anomalies with CC algorithms but each one CC algorithm is only a method and it is not a root reason. Coo framework explicitly defines what data anomalies is (Definition \ref{DEFanomalies}), studies some characteristics of data anomalies and shows them with some lemmas(\S \ref{sec_Applications}), and classifies data anomalies(\S \ref{sec_ConflictGraphs}). 

\textbf{Research on quantification and classification of data anomalies.} The existing research shows that the work of quantifying and classifying data anomalies is carried out within the known and limited range of data anomalies(only including some anomalies in Table \ref{tab:datano}), using serializable and dependency graph \cite{DBLP:books/aw/BernsteinHG87,10.1145/223784.223785,839388,10.1145/1071610.1071615,Elmasri2006Fundamentals,adya1999weak} technology. \cite{jorwekar2007automating} developed a set of tools and methods to automatically detect data anomalies from applications under snapshot isolation technology. \cite{5767927,10.1145/2391229.2391235,zellag2014consistency,10.1145/2213836.2213920}, use a middleware layer that is embedded between the application and the database, according to the serializable theory and dependency graph technology \cite{DBLP:books/aw/BernsteinHG87,10.1145/223784.223785,839388,10.1145/1071610.1071615,Elmasri2006Fundamentals,adya1999weak}, the method quantifies and classifies the data anomalies. However, this method detects data anomalies based on applications such as TPCC \cite{tpcc}, and finds no new data anomalies. \cite{fekete2009quantifying} quantitatively studied the integrity violation rate of data anomalies in different isolation levels under snapshot isolation technology, but did not propose new data anomaly. \cite{10.1145/2213836.2213920} does similar work with us, but it does not make a quantitative comparison of data anomalies for a variety of concurrent algorithms. To the best of our knowledge, this is the first paper to quantify data anomalies based on all data anomalies.
 
\textbf{Data anomalies and Isolation Levels.} \cite{16,10.1145/223784.223785,839388} try to define isolation levels with known and finite data anomalies. Adya \cite{839388} defines isolation levels with conflict graph and three specific data anomalies, but their methods are not unified. 
We extend the definition of the edge of the conflict graph so that the conflict graph can express all data anomalies, so we can completely and flexibly define the isolation level based on all data anomalies.

\textbf{Concurrent Control algorithms.} There are many concurrent control algorithms, include 2PL \cite{DBLP:journals/cacm/EswarranGLT76}, TO \cite{DBLP:journals/csur/BernsteinG81}, OCC \cite{DBLP:conf/vldb/KungR79,10.5555/1286831.1286844,Bayertime1982Dynamic,boksenbaum1984certification,binnig2014distributed}, MVCC \cite{DBLP:journals/tods/BernsteinG83}.
Locking-based approaches make a particular design on the lock types (e.g., read/write/predicate locks) as well as locking mechanism to disallow single-variable data anomalies \cite{Kapali1976The,Gray1976Granularity,1702620,garciamolina1999review,Elmasri2006Fundamentals}.
MVCC-based approaches disallow a set of data anomalies based on snapshot isolation.
To disallow more data anomalies, \cite{DBLP:conf/sigmod/CahillRF08} proposes SSI and \cite{DBLP:conf/eurosys/YabandehF12} proposes WSI algorithm.
Coo framework is potentially used for detecting and reporting all data anomalies and we design the evaluation model of mainstream concurrent control algorithms. 


\section{Conclusion}

We systematically define all data anomalies. Although the experimental data in this paper is based on entity-based data anomalies, it is helpful to quantify data anomalies, rollback rate and isolation levels in concurrent control technology. This work also contributes to the in-depth analysis of various CC algorithms, which brings opportunities to optimize existing CC algorithms and find new CC algorithms. In the future, we will further carry out in-depth quantitative research to better digitally reveal the internal laws of transactions.

\bibliographystyle{ACM-Reference-Format}
\bibliography{reference}


\begin{thebibliography}{50}


\ifx \showCODEN    \undefined \def \showCODEN     #1{\unskip}     \fi
\ifx \showDOI      \undefined \def \showDOI       #1{#1}\fi
\ifx \showISBNx    \undefined \def \showISBNx     #1{\unskip}     \fi
\ifx \showISBNxiii \undefined \def \showISBNxiii  #1{\unskip}     \fi
\ifx \showISSN     \undefined \def \showISSN      #1{\unskip}     \fi
\ifx \showLCCN     \undefined \def \showLCCN      #1{\unskip}     \fi
\ifx \shownote     \undefined \def \shownote      #1{#1}          \fi
\ifx \showarticletitle \undefined \def \showarticletitle #1{#1}   \fi
\ifx \showURL      \undefined \def \showURL       {\relax}        \fi
\providecommand\bibfield[2]{#2}
\providecommand\bibinfo[2]{#2}
\providecommand\natexlab[1]{#1}
\providecommand\showeprint[2][]{arXiv:#2}

\bibitem[\protect\citeauthoryear{??}{16}{1992}]%
        {16}
 \bibinfo{year}{1992}\natexlab{}.
\newblock \bibinfo{booktitle}{\emph{Database Language – SQL}}.
\newblock \bibinfo{publisher}{American National Standard for Information
  Systems}.
\newblock


\bibitem[\protect\citeauthoryear{{Adya}, {Liskov}, and {O'Neil}}{{Adya}
  et~al\mbox{.}}{2000}]%
        {839388}
\bibfield{author}{\bibinfo{person}{A. {Adya}}, \bibinfo{person}{B. {Liskov}},
  {and} \bibinfo{person}{P. {O'Neil}}.} \bibinfo{year}{2000}\natexlab{}.
\newblock \showarticletitle{Generalized isolation level definitions}. In
  \bibinfo{booktitle}{\emph{Proceedings of 16th International Conference on
  Data Engineering (Cat. No.00CB37073)}}. \bibinfo{pages}{67--78}.
\newblock


\bibitem[\protect\citeauthoryear{Adya, Liskov, and O'Neil}{Adya
  et~al\mbox{.}}{2000}]%
        {adya2000generalized}
\bibfield{author}{\bibinfo{person}{Atul Adya}, \bibinfo{person}{Barbara
  Liskov}, {and} \bibinfo{person}{Patrick O'Neil}.}
  \bibinfo{year}{2000}\natexlab{}.
\newblock \showarticletitle{Generalized isolation level definitions}. In
  \bibinfo{booktitle}{\emph{Proceedings of 16th International Conference on
  Data Engineering (Cat. No. 00CB37073)}}. IEEE, \bibinfo{pages}{67--78}.
\newblock


\bibitem[\protect\citeauthoryear{{Adya} and {Liskov}}{{Adya} and
  {Liskov}}{1999}]%
        {adya1999weak}
\bibfield{author}{\bibinfo{person}{Atul {Adya}} {and}
  \bibinfo{person}{Barbara~H. {Liskov}}.} \bibinfo{year}{1999}\natexlab{}.
\newblock \showarticletitle{Weak Consistency: A Generalized Theory and
  Optimistic Implementations for Distributed Transactions}.
\newblock  (\bibinfo{year}{1999}).
\newblock


\bibitem[\protect\citeauthoryear{Bailis, Fekete, Ghodsi, Hellerstein, and
  Stoica}{Bailis et~al\mbox{.}}{2016}]%
        {10.1145/2909870}
\bibfield{author}{\bibinfo{person}{Peter Bailis}, \bibinfo{person}{Alan
  Fekete}, \bibinfo{person}{Ali Ghodsi}, \bibinfo{person}{Joseph~M.
  Hellerstein}, {and} \bibinfo{person}{Ion Stoica}.}
  \bibinfo{year}{2016}\natexlab{}.
\newblock \showarticletitle{Scalable Atomic Visibility with RAMP Transactions}.
\newblock \bibinfo{journal}{\emph{ACM Trans. Database Syst.}}
  \bibinfo{volume}{41}, \bibinfo{number}{3}, Article \bibinfo{articleno}{15}
  (\bibinfo{date}{July} \bibinfo{year}{2016}), \bibinfo{numpages}{45}~pages.
\newblock
\showISSN{0362-5915}
\urldef\tempurl%
\url{https://doi.org/10.1145/2909870}
\showDOI{\tempurl}


\bibitem[\protect\citeauthoryear{Berenson, Bernstein, Gray, Melton, O’Neil,
  and O’Neil}{Berenson et~al\mbox{.}}{1995b}]%
        {10.1145/223784.223785}
\bibfield{author}{\bibinfo{person}{Hal Berenson}, \bibinfo{person}{Phil
  Bernstein}, \bibinfo{person}{Jim Gray}, \bibinfo{person}{Jim Melton},
  \bibinfo{person}{Elizabeth O’Neil}, {and} \bibinfo{person}{Patrick
  O’Neil}.} \bibinfo{year}{1995}\natexlab{b}.
\newblock \showarticletitle{A Critique of ANSI SQL Isolation Levels}. In
  \bibinfo{booktitle}{\emph{Proceedings of the 1995 ACM SIGMOD International
  Conference on Management of Data}} (San Jose, California, USA)
  \emph{(\bibinfo{series}{SIGMOD ’95})}. \bibinfo{publisher}{Association for
  Computing Machinery}, \bibinfo{address}{New York, NY, USA},
  \bibinfo{pages}{1–10}.
\newblock
\showISBNx{0897917316}
\urldef\tempurl%
\url{https://doi.org/10.1145/223784.223785}
\showDOI{\tempurl}


\bibitem[\protect\citeauthoryear{Berenson, Bernstein, Gray, Melton, O'Neil, and
  O'Neil}{Berenson et~al\mbox{.}}{1995a}]%
        {DBLP:conf/sigmod/BerensonBGMOO95}
\bibfield{author}{\bibinfo{person}{Hal Berenson}, \bibinfo{person}{Philip~A.
  Bernstein}, \bibinfo{person}{Jim Gray}, \bibinfo{person}{Jim Melton},
  \bibinfo{person}{Elizabeth~J. O'Neil}, {and} \bibinfo{person}{Patrick~E.
  O'Neil}.} \bibinfo{year}{1995}\natexlab{a}.
\newblock \showarticletitle{A Critique of {ANSI} {SQL} Isolation Levels}. In
  \bibinfo{booktitle}{\emph{{SIGMOD} Conference}}. \bibinfo{publisher}{{ACM}
  Press}, \bibinfo{pages}{1--10}.
\newblock


\bibitem[\protect\citeauthoryear{Bernstein and Goodman}{Bernstein and
  Goodman}{1981}]%
        {DBLP:journals/csur/BernsteinG81}
\bibfield{author}{\bibinfo{person}{Philip~A. Bernstein} {and}
  \bibinfo{person}{Nathan Goodman}.} \bibinfo{year}{1981}\natexlab{}.
\newblock \showarticletitle{Concurrency Control in Distributed Database
  Systems}.
\newblock \bibinfo{journal}{\emph{{ACM} Comput. Surv.}} \bibinfo{volume}{13},
  \bibinfo{number}{2} (\bibinfo{year}{1981}), \bibinfo{pages}{185--221}.
\newblock


\bibitem[\protect\citeauthoryear{Bernstein and Goodman}{Bernstein and
  Goodman}{1983}]%
        {DBLP:journals/tods/BernsteinG83}
\bibfield{author}{\bibinfo{person}{Philip~A. Bernstein} {and}
  \bibinfo{person}{Nathan Goodman}.} \bibinfo{year}{1983}\natexlab{}.
\newblock \showarticletitle{Multiversion Concurrency Control - Theory and
  Algorithms}.
\newblock \bibinfo{journal}{\emph{{ACM} Trans. Database Syst.}}
  \bibinfo{volume}{8}, \bibinfo{number}{4} (\bibinfo{year}{1983}),
  \bibinfo{pages}{465--483}.
\newblock


\bibitem[\protect\citeauthoryear{Bernstein, Hadzilacos, and Goodman}{Bernstein
  et~al\mbox{.}}{1987}]%
        {DBLP:books/aw/BernsteinHG87}
\bibfield{author}{\bibinfo{person}{Philip~A. Bernstein},
  \bibinfo{person}{Vassos Hadzilacos}, {and} \bibinfo{person}{Nathan Goodman}.}
  \bibinfo{year}{1987}\natexlab{}.
\newblock \bibinfo{booktitle}{\emph{Concurrency Control and Recovery in
  Database Systems}}.
\newblock \bibinfo{publisher}{Addison-Wesley}.
\newblock
\showISBNx{0-201-10715-5}
\urldef\tempurl%
\url{http://research.microsoft.com/en-us/people/philbe/ccontrol.aspx}
\showURL{%
\tempurl}


\bibitem[\protect\citeauthoryear{{Bernstein}, {Shipman}, and
  {Wong}}{{Bernstein} et~al\mbox{.}}{1979}]%
        {1702620}
\bibfield{author}{\bibinfo{person}{P.~A. {Bernstein}}, \bibinfo{person}{D.~W.
  {Shipman}}, {and} \bibinfo{person}{W.~S. {Wong}}.}
  \bibinfo{year}{1979}\natexlab{}.
\newblock \showarticletitle{Formal Aspects of Serializability in Database
  Concurrency Control}.
\newblock \bibinfo{journal}{\emph{IEEE Transactions on Software Engineering}}
  \bibinfo{volume}{SE-5}, \bibinfo{number}{3} (\bibinfo{year}{1979}),
  \bibinfo{pages}{203--216}.
\newblock


\bibitem[\protect\citeauthoryear{Binnig, Hildenbrand, Farber, Kossmann, Lee,
  and May}{Binnig et~al\mbox{.}}{2014}]%
        {binnig2014distributed}
\bibfield{author}{\bibinfo{person}{Carsten Binnig}, \bibinfo{person}{Stefan
  Hildenbrand}, \bibinfo{person}{Franz Farber}, \bibinfo{person}{Donald
  Kossmann}, \bibinfo{person}{Juchang Lee}, {and} \bibinfo{person}{Norman
  May}.} \bibinfo{year}{2014}\natexlab{}.
\newblock \showarticletitle{Distributed snapshot isolation: global transactions
  pay globally, local transactions pay locally}.
\newblock  \bibinfo{volume}{23}, \bibinfo{number}{6} (\bibinfo{year}{2014}),
  \bibinfo{pages}{987--1011}.
\newblock


\bibitem[\protect\citeauthoryear{Boksenbaum, Cart, Ferrie, and Pons}{Boksenbaum
  et~al\mbox{.}}{1984}]%
        {boksenbaum1984certification}
\bibfield{author}{\bibinfo{person}{Claude Boksenbaum}, \bibinfo{person}{Michele
  Cart}, \bibinfo{person}{Jean Ferrie}, {and} \bibinfo{person}{Jeanfrancois
  Pons}.} \bibinfo{year}{1984}\natexlab{}.
\newblock \showarticletitle{Certification by Intervals of Timestamps in
  Distributed Database Systems}.
\newblock  (\bibinfo{year}{1984}), \bibinfo{pages}{377--387}.
\newblock


\bibitem[\protect\citeauthoryear{Cahill, R{\"{o}}hm, and Fekete}{Cahill
  et~al\mbox{.}}{2008}]%
        {DBLP:conf/sigmod/CahillRF08}
\bibfield{author}{\bibinfo{person}{Michael~J. Cahill}, \bibinfo{person}{Uwe
  R{\"{o}}hm}, {and} \bibinfo{person}{Alan~D. Fekete}.}
  \bibinfo{year}{2008}\natexlab{}.
\newblock \showarticletitle{Serializable isolation for snapshot databases}. In
  \bibinfo{booktitle}{\emph{{SIGMOD} Conference}}. \bibinfo{publisher}{{ACM}},
  \bibinfo{pages}{729--738}.
\newblock


\bibitem[\protect\citeauthoryear{Cerone and Gotsman}{Cerone and
  Gotsman}{2018}]%
        {cerone2018analysing}
\bibfield{author}{\bibinfo{person}{Andrea Cerone} {and} \bibinfo{person}{Alexey
  Gotsman}.} \bibinfo{year}{2018}\natexlab{}.
\newblock \showarticletitle{Analysing Snapshot Isolation}.
\newblock \bibinfo{journal}{\emph{J. ACM}} \bibinfo{volume}{65},
  \bibinfo{number}{2} (\bibinfo{year}{2018}), \bibinfo{pages}{11:1--11:41}.
\newblock


\bibitem[\protect\citeauthoryear{Cerone, Gotsman, and Yang}{Cerone
  et~al\mbox{.}}{2017}]%
        {cerone2017algebraic}
\bibfield{author}{\bibinfo{person}{Andrea Cerone}, \bibinfo{person}{Alexey
  Gotsman}, {and} \bibinfo{person}{Hongseok Yang}.}
  \bibinfo{year}{2017}\natexlab{}.
\newblock \showarticletitle{Algebraic Laws for Weak Consistency.}
\newblock  (\bibinfo{year}{2017}), \bibinfo{pages}{26:1--26:18}.
\newblock


\bibitem[\protect\citeauthoryear{Cooper, Silberstein, Tam, Ramakrishnan, and
  Sears}{Cooper et~al\mbox{.}}{2010}]%
        {DBLP:conf/cloud/CooperSTRS10}
\bibfield{author}{\bibinfo{person}{Brian~F. Cooper}, \bibinfo{person}{Adam
  Silberstein}, \bibinfo{person}{Erwin Tam}, \bibinfo{person}{Raghu
  Ramakrishnan}, {and} \bibinfo{person}{Russell Sears}.}
  \bibinfo{year}{2010}\natexlab{}.
\newblock \showarticletitle{Benchmarking cloud serving systems with {YCSB}}. In
  \bibinfo{booktitle}{\emph{SoCC}}. \bibinfo{publisher}{{ACM}},
  \bibinfo{pages}{143--154}.
\newblock


\bibitem[\protect\citeauthoryear{Council}{Council}{2010}]%
        {tpcc}
\bibfield{author}{\bibinfo{person}{Transaction Processing~Performance
  Council}.} \bibinfo{year}{2010}\natexlab{}.
\newblock \showarticletitle{TPC Benchmark C (Revision 5.11)}.
  \bibinfo{publisher}{{TPC}}.
\newblock


\bibitem[\protect\citeauthoryear{Durner and Neumann}{Durner and
  Neumann}{2019}]%
        {DBLP:conf/icde/Durner019}
\bibfield{author}{\bibinfo{person}{Dominik Durner} {and}
  \bibinfo{person}{Thomas Neumann}.} \bibinfo{year}{2019}\natexlab{}.
\newblock \showarticletitle{No False Negatives: Accepting All Useful Schedules
  in a Fast Serializable Many-Core System}. In \bibinfo{booktitle}{\emph{35th
  {IEEE} International Conference on Data Engineering, {ICDE} 2019, Macao,
  China, April 8-11, 2019}}. \bibinfo{publisher}{{IEEE}},
  \bibinfo{pages}{734--745}.
\newblock
\urldef\tempurl%
\url{https://doi.org/10.1109/ICDE.2019.00071}
\showDOI{\tempurl}


\bibitem[\protect\citeauthoryear{Elmasri and Navathe}{Elmasri and
  Navathe}{2006}]%
        {Elmasri2006Fundamentals}
\bibfield{author}{\bibinfo{person}{Ramez Elmasri} {and}
  \bibinfo{person}{Shamkant~B. Navathe}.} \bibinfo{year}{2006}\natexlab{}.
\newblock \showarticletitle{Fundamentals of Database Systems, 5/E}.
\newblock  (\bibinfo{year}{2006}).
\newblock


\bibitem[\protect\citeauthoryear{Epp}{Epp}{2010}]%
        {epp2010discrete}
\bibfield{author}{\bibinfo{person}{Susanna~S Epp}.}
  \bibinfo{year}{2010}\natexlab{}.
\newblock \bibinfo{booktitle}{\emph{Discrete mathematics with applications}}.
\newblock \bibinfo{publisher}{Cengage learning}.
\newblock


\bibitem[\protect\citeauthoryear{Eswaran, Gray, Lorie, and Traiger}{Eswaran
  et~al\mbox{.}}{1976a}]%
        {DBLP:journals/cacm/EswarranGLT76}
\bibfield{author}{\bibinfo{person}{Kapali~P. Eswaran}, \bibinfo{person}{Jim
  Gray}, \bibinfo{person}{Raymond~A. Lorie}, {and} \bibinfo{person}{Irving~L.
  Traiger}.} \bibinfo{year}{1976}\natexlab{a}.
\newblock \showarticletitle{The Notions of Consistency and Predicate Locks in a
  Database System}.
\newblock \bibinfo{journal}{\emph{Commun. {ACM}}} \bibinfo{volume}{19},
  \bibinfo{number}{11} (\bibinfo{year}{1976}), \bibinfo{pages}{624--633}.
\newblock


\bibitem[\protect\citeauthoryear{Eswaran, Gray, Lorie, and Traiger}{Eswaran
  et~al\mbox{.}}{1976b}]%
        {Kapali1976The}
\bibfield{author}{\bibinfo{person}{Kapali~P. Eswaran}, \bibinfo{person}{Jim
  Gray}, \bibinfo{person}{Raymond~A. Lorie}, {and} \bibinfo{person}{Irving~L.
  Traiger}.} \bibinfo{year}{1976}\natexlab{b}.
\newblock \showarticletitle{The Notions of Consistency and Predicate Locks in a
  Database System}.
\newblock \bibinfo{journal}{\emph{Communications of the Acm}}
  \bibinfo{volume}{19}, \bibinfo{number}{11} (\bibinfo{year}{1976}),
  \bibinfo{pages}{624--633}.
\newblock


\bibitem[\protect\citeauthoryear{Fekete, Goldrei, and Asenjo}{Fekete
  et~al\mbox{.}}{2009}]%
        {fekete2009quantifying}
\bibfield{author}{\bibinfo{person}{Alan Fekete}, \bibinfo{person}{Shirley~N
  Goldrei}, {and} \bibinfo{person}{Jorge~Perez Asenjo}.}
  \bibinfo{year}{2009}\natexlab{}.
\newblock \showarticletitle{Quantifying isolation anomalies}.
\newblock  \bibinfo{volume}{2}, \bibinfo{number}{1} (\bibinfo{year}{2009}),
  \bibinfo{pages}{467--478}.
\newblock


\bibitem[\protect\citeauthoryear{Fekete, Liarokapis, O’Neil, O’Neil, and
  Shasha}{Fekete et~al\mbox{.}}{2005b}]%
        {10.1145/1071610.1071615}
\bibfield{author}{\bibinfo{person}{Alan Fekete}, \bibinfo{person}{Dimitrios
  Liarokapis}, \bibinfo{person}{Elizabeth O’Neil}, \bibinfo{person}{Patrick
  O’Neil}, {and} \bibinfo{person}{Dennis Shasha}.}
  \bibinfo{year}{2005}\natexlab{b}.
\newblock \showarticletitle{Making Snapshot Isolation Serializable}.
\newblock \bibinfo{journal}{\emph{ACM Trans. Database Syst.}}
  \bibinfo{volume}{30}, \bibinfo{number}{2} (\bibinfo{date}{June}
  \bibinfo{year}{2005}), \bibinfo{pages}{492–528}.
\newblock
\showISSN{0362-5915}
\urldef\tempurl%
\url{https://doi.org/10.1145/1071610.1071615}
\showDOI{\tempurl}


\bibitem[\protect\citeauthoryear{Fekete, O’Neil, and O’Neil}{Fekete
  et~al\mbox{.}}{2004}]%
        {10.1145/1031570.1031573}
\bibfield{author}{\bibinfo{person}{Alan Fekete}, \bibinfo{person}{Elizabeth
  O’Neil}, {and} \bibinfo{person}{Patrick O’Neil}.}
  \bibinfo{year}{2004}\natexlab{}.
\newblock \showarticletitle{A Read-Only Transaction Anomaly under Snapshot
  Isolation}.
\newblock \bibinfo{journal}{\emph{SIGMOD Rec.}} \bibinfo{volume}{33},
  \bibinfo{number}{3} (\bibinfo{date}{Sept.} \bibinfo{year}{2004}),
  \bibinfo{pages}{12–14}.
\newblock
\showISSN{0163-5808}
\urldef\tempurl%
\url{https://doi.org/10.1145/1031570.1031573}
\showDOI{\tempurl}


\bibitem[\protect\citeauthoryear{Fekete, Liarokapis, O'Neil, O'Neil, and
  Shasha}{Fekete et~al\mbox{.}}{2005a}]%
        {DBLP:journals/tods/FeketeLOOS05}
\bibfield{author}{\bibinfo{person}{Alan~D. Fekete}, \bibinfo{person}{Dimitrios
  Liarokapis}, \bibinfo{person}{Elizabeth~J. O'Neil},
  \bibinfo{person}{Patrick~E. O'Neil}, {and} \bibinfo{person}{Dennis~E.
  Shasha}.} \bibinfo{year}{2005}\natexlab{a}.
\newblock \showarticletitle{Making snapshot isolation serializable}.
\newblock \bibinfo{journal}{\emph{{ACM} Trans. Database Syst.}}
  \bibinfo{volume}{30}, \bibinfo{number}{2} (\bibinfo{year}{2005}),
  \bibinfo{pages}{492--528}.
\newblock
\urldef\tempurl%
\url{https://doi.org/10.1145/1071610.1071615}
\showDOI{\tempurl}


\bibitem[\protect\citeauthoryear{Garciamolina}{Garciamolina}{1999}]%
        {garciamolina1999review}
\bibfield{author}{\bibinfo{person}{Hector Garciamolina}.}
  \bibinfo{year}{1999}\natexlab{}.
\newblock \showarticletitle{Review - The Notions of Consistency and Predicate
  Locks in a Database System.}
\newblock \bibinfo{journal}{\emph{ACM Sigmod Digital Review}}
  \bibinfo{volume}{1} (\bibinfo{year}{1999}).
\newblock


\bibitem[\protect\citeauthoryear{Gray, Lorie, Putzolu, and Traiger}{Gray
  et~al\mbox{.}}{1976a}]%
        {DBLP:conf/ds/GrayLPT76}
\bibfield{author}{\bibinfo{person}{Jim Gray}, \bibinfo{person}{Raymond~A.
  Lorie}, \bibinfo{person}{Gianfranco~R. Putzolu}, {and}
  \bibinfo{person}{Irving~L. Traiger}.} \bibinfo{year}{1976}\natexlab{a}.
\newblock \showarticletitle{Granularity of Locks and Degrees of Consistency in
  a Shared Data Base}. In \bibinfo{booktitle}{\emph{{IFIP} Working Conference
  on Modelling in Data Base Management Systems}}.
  \bibinfo{publisher}{North-Holland}, \bibinfo{pages}{365--394}.
\newblock


\bibitem[\protect\citeauthoryear{Gray, Lorie, Putzolu, and Traiger}{Gray
  et~al\mbox{.}}{1976b}]%
        {Gray1976Granularity}
\bibfield{author}{\bibinfo{person}{Jim Gray}, \bibinfo{person}{Raymond~A.
  Lorie}, \bibinfo{person}{Gianfranco~R. Putzolu}, {and}
  \bibinfo{person}{Irving~L. Traiger}.} \bibinfo{year}{1976}\natexlab{b}.
\newblock \showarticletitle{Granularity of Locks and Degrees of Consistency in
  a Shared Data Base}. In \bibinfo{booktitle}{\emph{Readings in database
  systems (3rd ed.)}}. \bibinfo{pages}{365--394}.
\newblock


\bibitem[\protect\citeauthoryear{Hai-Xiang, Xiao-Yan, Chang, Xiao-Yong, Wei,
  and An-Qun}{Hai-Xiang et~al\mbox{.}}{2021}]%
        {haixiang2021systematic}
\bibfield{author}{\bibinfo{person}{Li Hai-Xiang}, \bibinfo{person}{Li
  Xiao-Yan}, \bibinfo{person}{Liu Chang}, \bibinfo{person}{Du Xiao-Yong},
  \bibinfo{person}{Lu Wei}, {and} \bibinfo{person}{Pan An-Qun}.}
  \bibinfo{year}{2021}\natexlab{}.
\newblock \bibinfo{title}{Systematic definition and classification of data
  anomalies in DBMS (English Version)}.
\newblock
\newblock
\showeprint[arxiv]{2110.14230}~[cs.DB]


\bibitem[\protect\citeauthoryear{Harding, Aken, Pavlo, and Stonebraker}{Harding
  et~al\mbox{.}}{2017}]%
        {DBLP:journals/pvldb/HardingAPS17}
\bibfield{author}{\bibinfo{person}{Rachael Harding}, \bibinfo{person}{Dana~Van
  Aken}, \bibinfo{person}{Andrew Pavlo}, {and} \bibinfo{person}{Michael
  Stonebraker}.} \bibinfo{year}{2017}\natexlab{}.
\newblock \showarticletitle{An Evaluation of Distributed Concurrency Control}.
\newblock \bibinfo{journal}{\emph{Proc. {VLDB} Endow.}} \bibinfo{volume}{10},
  \bibinfo{number}{5} (\bibinfo{year}{2017}), \bibinfo{pages}{553--564}.
\newblock


\bibitem[\protect\citeauthoryear{Janin and Walukiewicz}{Janin and
  Walukiewicz}{1996}]%
        {janin1996expressive}
\bibfield{author}{\bibinfo{person}{David Janin} {and} \bibinfo{person}{Igor
  Walukiewicz}.} \bibinfo{year}{1996}\natexlab{}.
\newblock \showarticletitle{On the expressive completeness of the propositional
  mu-calculus with respect to monadic second order logic}. In
  \bibinfo{booktitle}{\emph{International Conference on Concurrency Theory}}.
  Springer, \bibinfo{pages}{263--277}.
\newblock


\bibitem[\protect\citeauthoryear{Jorwekar, Fekete, Ramamritham, and
  Sudarshan}{Jorwekar et~al\mbox{.}}{2007}]%
        {jorwekar2007automating}
\bibfield{author}{\bibinfo{person}{Sudhir Jorwekar}, \bibinfo{person}{Alan
  Fekete}, \bibinfo{person}{Krithi Ramamritham}, {and} \bibinfo{person}{S
  Sudarshan}.} \bibinfo{year}{2007}\natexlab{}.
\newblock \showarticletitle{Automating the detection of snapshot isolation
  anomalies}.
\newblock  (\bibinfo{year}{2007}), \bibinfo{pages}{1263--1274}.
\newblock


\bibitem[\protect\citeauthoryear{Junqueira, Reed, and Yabandeh}{Junqueira
  et~al\mbox{.}}{2011}]%
        {DBLP:conf/dsn/JunqueiraRY11}
\bibfield{author}{\bibinfo{person}{Flavio Junqueira}, \bibinfo{person}{Benjamin
  Reed}, {and} \bibinfo{person}{Maysam Yabandeh}.}
  \bibinfo{year}{2011}\natexlab{}.
\newblock \showarticletitle{Lock-free transactional support for large-scale
  storage systems}. In \bibinfo{booktitle}{\emph{{DSN} Workshops}}.
  \bibinfo{publisher}{{IEEE} Computer Society}, \bibinfo{pages}{176--181}.
\newblock


\bibitem[\protect\citeauthoryear{Kung and Robinson}{Kung and Robinson}{1979}]%
        {DBLP:conf/vldb/KungR79}
\bibfield{author}{\bibinfo{person}{H.~T. Kung} {and} \bibinfo{person}{John~T.
  Robinson}.} \bibinfo{year}{1979}\natexlab{}.
\newblock \showarticletitle{On Optimistic Methods for Concurrency Control}. In
  \bibinfo{booktitle}{\emph{{VLDB}}}. \bibinfo{publisher}{{IEEE} Computer
  Society}, \bibinfo{pages}{351}.
\newblock


\bibitem[\protect\citeauthoryear{Mahmoud, Arora, Nawab, Agrawal, and
  Abbadi}{Mahmoud et~al\mbox{.}}{2014}]%
        {DBLP:journals/pvldb/MahmoudANAA14}
\bibfield{author}{\bibinfo{person}{Hatem~A. Mahmoud}, \bibinfo{person}{Vaibhav
  Arora}, \bibinfo{person}{Faisal Nawab}, \bibinfo{person}{Divyakant Agrawal},
  {and} \bibinfo{person}{Amr~El Abbadi}.} \bibinfo{year}{2014}\natexlab{}.
\newblock \showarticletitle{MaaT: Effective and scalable coordination of
  distributed transactions in the cloud}.
\newblock \bibinfo{journal}{\emph{Proc. {VLDB} Endow.}} \bibinfo{volume}{7},
  \bibinfo{number}{5} (\bibinfo{year}{2014}), \bibinfo{pages}{329--340}.
\newblock


\bibitem[\protect\citeauthoryear{Schenkel, Weikum, Weissenberg, and
  Wu}{Schenkel et~al\mbox{.}}{2000}]%
        {schenkel2000federated}
\bibfield{author}{\bibinfo{person}{Ralf Schenkel}, \bibinfo{person}{Gerhard
  Weikum}, \bibinfo{person}{N Weissenberg}, {and} \bibinfo{person}{Xuequn Wu}.}
  \bibinfo{year}{2000}\natexlab{}.
\newblock \showarticletitle{Federated transaction management with snapshot
  isolation}.
\newblock \bibinfo{journal}{\emph{Lecture Notes in Computer Science}}
  (\bibinfo{year}{2000}), \bibinfo{pages}{1--25}.
\newblock


\bibitem[\protect\citeauthoryear{Schlageter}{Schlageter}{1981}]%
        {10.5555/1286831.1286844}
\bibfield{author}{\bibinfo{person}{Gunter Schlageter}.}
  \bibinfo{year}{1981}\natexlab{}.
\newblock \showarticletitle{Optimistic Methods for Concurrency Control in
  Distributed Database Systems}. In \bibinfo{booktitle}{\emph{Proceedings of
  the Seventh International Conference on Very Large Data Bases - Volume 7}}
  (Cannes, France) \emph{(\bibinfo{series}{VLDB ’81})}.
  \bibinfo{publisher}{VLDB Endowment}, \bibinfo{pages}{125–130}.
\newblock


\bibitem[\protect\citeauthoryear{Smullyan}{Smullyan}{1995}]%
        {smullyan1995first}
\bibfield{author}{\bibinfo{person}{Raymond~M Smullyan}.}
  \bibinfo{year}{1995}\natexlab{}.
\newblock \bibinfo{booktitle}{\emph{First-order logic}}.
\newblock \bibinfo{publisher}{Courier Corporation}.
\newblock


\bibitem[\protect\citeauthoryear{Weikum and Vossen}{Weikum and Vossen}{2001}]%
        {weikum2001transactional}
\bibfield{author}{\bibinfo{person}{Gerhard Weikum} {and}
  \bibinfo{person}{Gottfried Vossen}.} \bibinfo{year}{2001}\natexlab{}.
\newblock \bibinfo{booktitle}{\emph{Transactional information systems: theory,
  algorithms, and the practice of concurrency control and recovery}}.
\newblock \bibinfo{publisher}{Elsevier}.
\newblock


\bibitem[\protect\citeauthoryear{Weikum and Vossen}{Weikum and Vossen}{2002}]%
        {2002Concurrency}
\bibfield{author}{\bibinfo{person}{G. Weikum} {and} \bibinfo{person}{G.
  Vossen}.} \bibinfo{year}{2002}\natexlab{}.
\newblock \showarticletitle{Concurrency Control: Notions of Correctness for the
  Page Model}.
\newblock \bibinfo{journal}{\emph{Transactional Information Systems}}
  (\bibinfo{year}{2002}), \bibinfo{pages}{61--123}.
\newblock


\bibitem[\protect\citeauthoryear{wikipedia}{wikipedia}{[n.d.]}]%
        {Read_Only_Transactions}
\bibfield{author}{\bibinfo{person}{wikipedia}.}
  \bibinfo{year}{[n.d.]}\natexlab{}.
\newblock \bibinfo{title}{Read$\_$Only$\_$Transactions}.
\newblock \bibinfo{howpublished}{Website}.
\newblock
\urldef\tempurl%
\url{https://wiki.postgresql.org/wiki/SSI#Read_Only_Transactions}
\showURL{%
\tempurl}


\bibitem[\protect\citeauthoryear{Xiaoyong}{Xiaoyong}{2017}]%
        {duxiaoyong_read_partial_committed}
\bibfield{author}{\bibinfo{person}{Du~el~at. Xiaoyong}.}
  \bibinfo{year}{2017}\natexlab{}.
\newblock \showarticletitle{Big data management}.
\newblock  (\bibinfo{year}{2017}).
\newblock


\bibitem[\protect\citeauthoryear{{Xie}, {Su}, {Littley}, {Alvisi}, {Kapritsos},
  and {Wang}}{{Xie} et~al\mbox{.}}{2015}]%
        {xie2015high}
\bibfield{author}{\bibinfo{person}{Chao {Xie}}, \bibinfo{person}{Chunzhi {Su}},
  \bibinfo{person}{Cody {Littley}}, \bibinfo{person}{Lorenzo {Alvisi}},
  \bibinfo{person}{Manos {Kapritsos}}, {and} \bibinfo{person}{Yang {Wang}}.}
  \bibinfo{year}{2015}\natexlab{}.
\newblock \showarticletitle{High-performance ACID via modular concurrency
  control}. In \bibinfo{booktitle}{\emph{Proceedings of the 25th Symposium on
  Operating Systems Principles}}. \bibinfo{pages}{279--294}.
\newblock


\bibitem[\protect\citeauthoryear{Yabandeh and Ferro}{Yabandeh and
  Ferro}{2012}]%
        {DBLP:conf/eurosys/YabandehF12}
\bibfield{author}{\bibinfo{person}{Maysam Yabandeh} {and}
  \bibinfo{person}{Daniel~G{\'{o}}mez Ferro}.} \bibinfo{year}{2012}\natexlab{}.
\newblock \showarticletitle{A critique of snapshot isolation}. In
  \bibinfo{booktitle}{\emph{EuroSys}}. \bibinfo{publisher}{{ACM}},
  \bibinfo{pages}{155--168}.
\newblock


\bibitem[\protect\citeauthoryear{{Zellag} and {Kemme}}{{Zellag} and
  {Kemme}}{2011}]%
        {5767927}
\bibfield{author}{\bibinfo{person}{K. {Zellag}} {and} \bibinfo{person}{B.
  {Kemme}}.} \bibinfo{year}{2011}\natexlab{}.
\newblock \showarticletitle{Real-time quantification and classification of
  consistency anomalies in multi-tier architectures}. In
  \bibinfo{booktitle}{\emph{2011 IEEE 27th International Conference on Data
  Engineering}}. \bibinfo{pages}{613--624}.
\newblock


\bibitem[\protect\citeauthoryear{Zellag and Kemme}{Zellag and Kemme}{2012a}]%
        {10.1145/2213836.2213920}
\bibfield{author}{\bibinfo{person}{Kamal Zellag} {and} \bibinfo{person}{Bettina
  Kemme}.} \bibinfo{year}{2012}\natexlab{a}.
\newblock \showarticletitle{ConsAD: A Real-Time Consistency Anomalies
  Detector}. In \bibinfo{booktitle}{\emph{Proceedings of the 2012 ACM SIGMOD
  International Conference on Management of Data}} (Scottsdale, Arizona, USA)
  \emph{(\bibinfo{series}{SIGMOD ’12})}. \bibinfo{publisher}{Association for
  Computing Machinery}, \bibinfo{address}{New York, NY, USA},
  \bibinfo{pages}{641–644}.
\newblock
\showISBNx{9781450312479}
\urldef\tempurl%
\url{https://doi.org/10.1145/2213836.2213920}
\showDOI{\tempurl}


\bibitem[\protect\citeauthoryear{Zellag and Kemme}{Zellag and Kemme}{2012b}]%
        {10.1145/2391229.2391235}
\bibfield{author}{\bibinfo{person}{Kamal Zellag} {and} \bibinfo{person}{Bettina
  Kemme}.} \bibinfo{year}{2012}\natexlab{b}.
\newblock \showarticletitle{How Consistent is Your Cloud Application?}. In
  \bibinfo{booktitle}{\emph{Proceedings of the Third ACM Symposium on Cloud
  Computing}} (San Jose, California) \emph{(\bibinfo{series}{SoCC ’12})}.
  \bibinfo{publisher}{Association for Computing Machinery},
  \bibinfo{address}{New York, NY, USA}, Article \bibinfo{articleno}{6},
  \bibinfo{numpages}{14}~pages.
\newblock
\showISBNx{9781450317610}
\urldef\tempurl%
\url{https://doi.org/10.1145/2391229.2391235}
\showDOI{\tempurl}


\bibitem[\protect\citeauthoryear{Zellag and Kemme}{Zellag and Kemme}{2014}]%
        {zellag2014consistency}
\bibfield{author}{\bibinfo{person}{Kamal Zellag} {and} \bibinfo{person}{Bettina
  Kemme}.} \bibinfo{year}{2014}\natexlab{}.
\newblock \showarticletitle{Consistency anomalies in multi-tier architectures:
  automatic detection and prevention}.
\newblock  \bibinfo{volume}{23}, \bibinfo{number}{1} (\bibinfo{year}{2014}),
  \bibinfo{pages}{147--172}.
\newblock


\end{thebibliography}



\end{document}